\newcommand{\norm}[1]{\left\|#1\right\| }
\newcommand{\betr}[1]{\left|#1\right|}
\newcommand{\epsi}[0]{\varepsilon}
\newcommand{\im}[0]{\mathrm{i}}
\newcommand{\ee}[0]{\mathrm{e}}
\newcommand{\rmm}[1]{{\mathrm{#1}}}
\newcommand{\field}[1]{\mathbb{#1}}
\newcommand{\R}{\field{R}}
\newcommand{\N}{\field{N}}
\newcommand{\id}[0]{\mathbf{1}}
\newcommand{\Or}{{\mathcal{O}}}
 \newcommand{\E}{{\mathrm{e}}}
\newcommand{\I}{\mathrm{i}}
\newcommand{\D}{{\mathrm{d}}}
\newcommand{\fock}{ \mathcal{F} }
\DeclareMathOperator{\ran}{ran}
\theoremstyle{plain}
\newtheorem{thm}{Theorem}
\newtheorem{lem}[thm]{Lemma}
\newtheorem{prop}[thm]{Proposition}
\newtheorem{cor}[thm]{Corollary}
\newtheorem*{thm*}{Theorem}%
\newtheorem*{lem*}{Lemma}
\newtheorem*{prop*}{Proposition}
\newtheorem*{asum}{Assumptions} 
\newtheorem*{cor*}{Corollary}
\theoremstyle{definition}
\theoremstyle{remark}
\title{Non-adiabatic transitions in a massless scalar field}
\author{Johannes von Keler  and Stefan Teufel\\[1mm]
\em \small Mathematisches Institut, Universit\"at T\"ubingen, Germany}
\begin{document}
\maketitle

\begin{abstract}
We consider the dynamics of a massless scalar field with time-dependent sources in the adiabatic limit. 
This is an example of an adiabatic problem without spectral gap.
The main goal of our paper is to illustrate the difference between the error of the adiabatic approximation 
and the concept of non-adiabatic transitions for gapless systems. In our example the non-adiabatic transitions
correspond to emission of free bosons, while the error of the adiabatic approximation is dominated by a velocity-dependent deformation of the ground state of the field. 
In order to capture these concepts precisely, we show how to construct super-adiabatic approximations for a gapless system.
\end{abstract}

\section{Introduction}

The adiabatic theorem of quantum mechanics is usually stated and proved for systems with a finite spectral gap. Mathematically it is known since at least 1998~\cite{Bo,AvEl99} that in a weaker sense the theorem remains valid even for systems without spectral gap as long as the spectral projections of the time-dependent Hamiltonian are sufficiently regular functions of time, see also~\cite{teufel2}. Its validity has been shown even for  resonances~\cite{AF}  and for open systems~\cite{AFGG}. In recent years the problem of adiabaticity for gapless systems attracted  also interest in physics. For example, in~\cite{PoGr} the authors consider adiabatic changes in the coupling of a scalar field. In a finite volume the ground state of the field is separated by a gap from the excited states. However, in the thermodynamic limit the gap closes and the spectrum of the field becomes absolutely continuous. The authors find that in the thermodynamic limit the error in the adiabatic approximation depends on the spatial dimension and other parameters controlling the spectral density of the field. They distinguish three regimes: one where the adiabatic theorem holds with the same error estimate as in the gapped case, one where it holds with a different power law and one where it fails completely. In all cases, however, they identify the error of the adiabatic approximation with the size of the non-adiabatic transitions. One main goal of our paper is to explain why this is, in general, not a valid identification and why  the concept of super-adiabatic approximations is useful also  in the gapless case.

To this end we consider a specific gapless model in the adiabatic limit, namely a massless scalar field with   time dependent sources. The goal is to exhibit a number of subtleties in  adiabatic theory that are known in the gapped case -- but often not expressed sufficiently clearly -- also for the case of gapless systems. In particular we emphasize the difference between the error in the adiabatic approximation and what physically should be considered non-adiabatic transitions. To be more specific, let
us briefly recall the situation in the presence of a spectral gap. Let  $H(t)$ be a time-dependent Hamiltonian and $E(t)$ an eigenvalue with spectral projection $P(t)$. Under appropriate regularity conditions on $H(t)$ and the assumption that $E(t)$ is separated by a gap from the rest of the spectrum of $H(t)$, the adiabatic theorem states that the solution of the Schr\"odinger equation
\[
\I \epsi \tfrac{\D}{\D t} \psi(t)  = H(t) \psi(t) \quad\mbox{ with } \quad \psi(0)\in \ran P(0)
\]
satisfies
\begin{equation}\label{IA1}
\| (1-P(t) )\psi(t) \| = \Or(\epsi)\,,
\end{equation}
i.e.\ that any solution starting in the spectral subspace $ \ran P(0)$ evolves into the subspace $\ran P(t)$ up to an error of order $\epsi$.
This error estimate, which contributes to the error of the adiabatic approximation, is optimal in the sense that generically  the piece $(1-P(t) )\psi(t)$ of the solution in the orthogonal complement of $\ran P(t)$ is
really of order $\epsi$ and not smaller   for any finite time  $t$ where $\frac{\D}{\D t}H(t)\not=0$. However, if $H(t)$ is constant outside of a  bounded interval, say $[0,T]$,
and $N+1$-times continuously differentiable, then clearly $\psi(t)\in  \ran P(t)$ for $t\leq 0$, but also
\begin{equation}\label{IA2}
\| (1-P(t) )\psi(t) \| = \Or(\epsi^N)\quad\mbox{ for }\quad t\geq T\,.
\end{equation}
Thus the non-adiabatic transitions into the orthogonal complement of $\ran P(t)$ after a compactly supported adiabatic change of the Hamiltonian are much smaller than~$\Or(\epsi)$. 
What happens is that during the adiabatic change of the Hamiltonian the solution does not exactly follow the spectral subspace  $\ran P(t)$, which we will call the adiabatic subspace in the following,
but slightly tilted $\epsi$-dependent subspaces ran$P^\epsi(t)$, so-called superadiabatic subspaces. With respect to the superadiabatic subspaces one has 
\begin{equation}\label{IA3}
\| (1-P^\epsi(t) )\psi(t) \| = \Or(\epsi^N)\quad\mbox{ for {\bf all} }  t \in \R\,.
\end{equation}
In addition it holds that 
\[
\| P^\epsi(t) - P(t) \| = \Or(\epsi)
\]
which together with (\ref{IA3}) implies the usual adiabatic estimate (\ref{IA1}). 
Moreover, at times~$t$ when all derivatives of $H(t)$ vanish, the superadiabatic projections agree with the adiabatic projection, $P^\epsi(t) = P(t)$, and thus  (\ref{IA3}) implies also~(\ref{IA2}). In Figure~\ref{transhist} the situation is plotted schematically. 

\begin{figure}[h]\begin{center}\vspace{1cm}
\includegraphics[height=4cm]{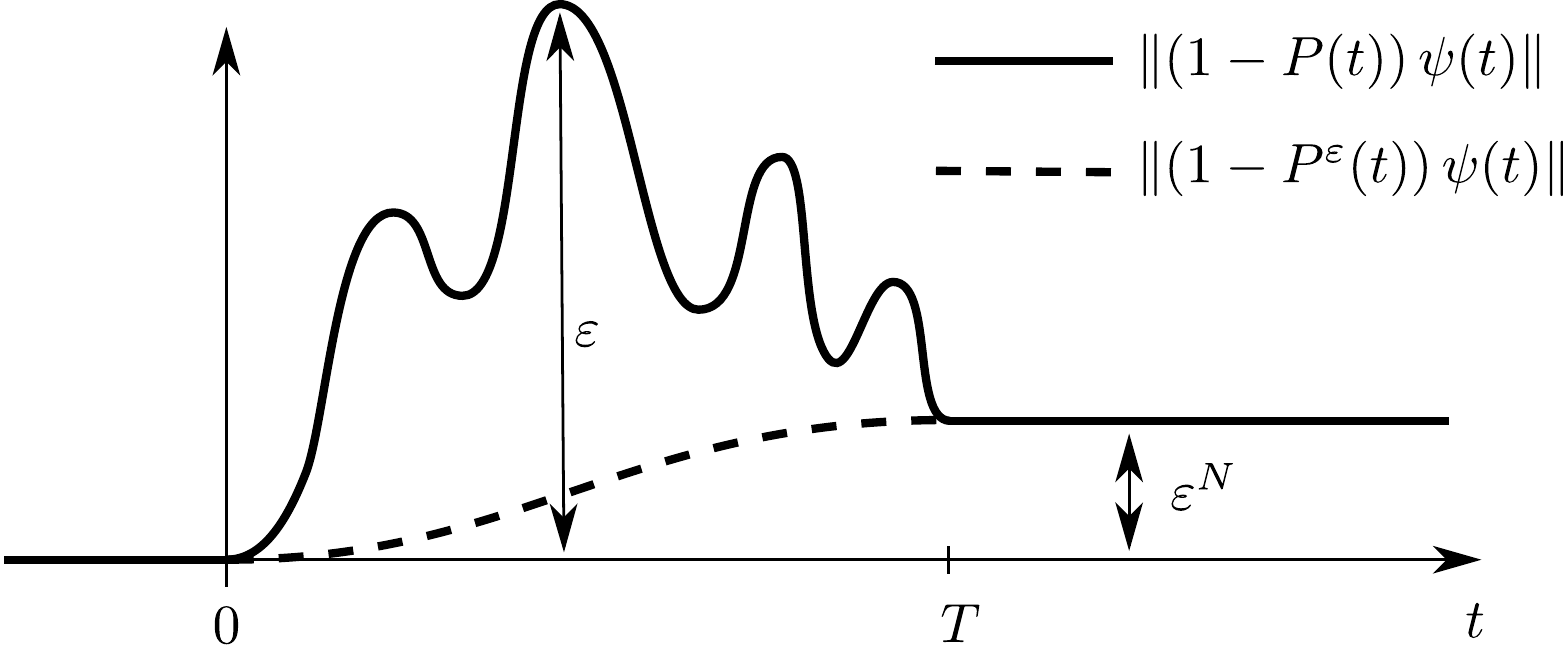}
\caption{\small Schematic plot of the transition histories with respect to the adiabatic (full line) and the superadiabatic (dashed line) subspaces. The Hamiltonian varies only during the time interval $[0,T]$ and outside of this interval the adiabatic subspace $P(t)$, i.e.\ the instantaneous spectral subspace, agrees with the superadiabatic subspace $P^\epsi(t)$. While the non-adiabatic transitions for $t\geq T$ are of order $\epsi^N$, the norm of the piece of the solution that leaks into the orthogonal complement of $P(t)$ for $0<t<T$ is typically much larger, namely of order $\epsi$. The norm of the piece of the solution that leaks into the orthogonal complement of the superadiabatic subspace $P^\epsi(t)$ for $0<t<T$ remains small, i.e.\ of order $\epsi^N$, for all times.\label{transhist}} 
\end{center}
\end{figure}

In this sense we say that the error of the adiabatic approximation is of order $\epsi$, while the non-adiabatic transitions are (at most) of order $\epsi^N$. For a  detailed exposition of adiabatic theory with spectral gap we refer to~\cite{teufel1} and references therein. Here let us mention only that the notion ``superadiabatic'' was termed by M.\ Berry in~\cite{Be}. There he shows how for Hamiltonians depending  analytically on time the transitions between superadiabatic subspaces are even exponentially small in $\epsi$ (as it is known also from the famous Landau-Zener model) and moreover, that the transitions as a function of time follow the universal shape of an error function, i.e.\ an integrated Gaussian. A rigorous account of~\cite{Be} is given in~\cite{BeTe1,BeTe2,BeTe3}. 

In the present paper we show within a physically relevant but relatively simple example that   in the gapless case the situation is to some extend similar. Before presenting our results in detail let us introduce the model and explain some of its important features. 

We consider a scalar massless field in three spatial dimensions. The momentum space for a single boson is $L^2(\R^3_k)$ and the state space of the field is the symmetric Fock space
\[
\fock :=\textstyle  \bigoplus_{n=0}^\infty L^2(\R^3)^{\otimes_s n}\,.
\]
So $\psi\in\fock$ is a sequence $\psi=(\psi_0, \psi_1,\psi_2,\ldots)$ with $\psi_n(k_1,\ldots,k_n)$ a square integrable symmetric function of $n$ variables in $\R^3$. For $\psi,\phi\in\fock$ the inner product is
\[
\langle \psi,\phi\rangle_\fock = \textstyle\sum_{n=0}^\infty \langle \psi_n,\phi_n\rangle_{L^2(\R^{3n})}\,.
\]
The Hamiltonian of the free field is $H_{\rm f} := \D\Gamma(|k|)$ and acts as
\[
(H_{\rm f} \psi)_n(k_1, \ldots, k_n)  =\textstyle \sum_{i=1}^n |k_i| \psi_n(k_1,\ldots, k_n)\,.
\]
As a multiplication operator $H_{\rm f}$ is self-adjoint on 
  its dense maximal  domain~$D(H_{\rm f})$. Moreover, $H_{\rm f}$ has a unique ground state given by the Fock vacuum $\Omega_0 := (1,0,0,\ldots)$. Now we add  moving charges as sources   to 
the field. For notational simplicity we assume that all sources have the same normalized form factor $\varphi:\R^3\to [0,\infty)$ with
\[
\textstyle \int_{\R^3} \D x\, \varphi(x)  =1\,.
\]
Later on we will make additional assumptions on the ``charge distribution''~$\varphi$. The sources are   located at positions $x_j \in\R^3$  with total charge $e_j$ and form factor $\varphi$. The linear coupling to the field is given by the  operator 
\[
H_{\rmm I}(x)  := \textstyle \sum_{j=1}^N H_{{\rmm I},j}(x_j) := \sum_{j=1}^Ne_j \,\Phi\Big( \frac{ \hat \varphi(k)}{\sqrt{|k|} }\,\E^{\I k\cdot x_j }\Big)\,,
\]
 where we abbreviate $x=(x_1,\ldots, x_N)$.
Here $\hat\varphi$ denotes the Fourier transform of $\varphi$ and~$\Phi$ is the field operator
\[
\Phi(f) := \tfrac{1}{\sqrt{2}}\big( a^\dagger(f) + a(  f) \big)\,,
\]
where $a$ and $a^\dagger$ are the standard bosonic annihilation and creation operators on~$\fock$, see Section~\ref{sec:fock} for more details.
Under appropriate conditions on $\varphi$  the total Hamiltonian
$H(x) = H_{\rm f} + H_{\rmm I}(x)$
is self-adjoint on $D(H):= D(H_{\rm f})$ and bounded from below. If the total charge of the sources $e=\sum_{j=1}^N e_j$ is zero, $H(x)$  has a unique ground state $\Omega(x)\not= \Omega_0$. The ground state $\Omega(x)$ contains so-called ``virtual bosons'' that provide a ``dressing'' of the sources. 
In contrast to ``free bosons'', the virtual bosons do not propagate. These heuristic notions
will be made more precise later on. 
If the total charge is different from zero there is still a good notion of approximate ground state, at least for our purposes. 

We will be interested in the dynamics of the scalar field when the sources change their locations with time.
Let $x_j:\R\to \R^3$, $t\mapsto x_j(t)$ be smooth, then 
\[
H(t) := H(x(t)) = H_{\rm f} + \textstyle \sum_{j=1}^N H_{{\rmm I},j}(x_j(t))
\]
describes the evolution of the scalar field in the presence of sources moving along the prescribed trajectories $x_j(t)$. We consider the solutions of the Schr\"odinger equation
\[
\I\epsi\tfrac{\D}{\D t} \psi(t) = H(t)\psi(t)
\]
for the field in the adiabatic limit $\epsi\ll 1$, which corresponds to slowly moving sources. Assume for a moment that the field is restricted to a finite box. Then the Hamiltonian $H(t)$ has purely discrete spectrum and the ground state is separated from the first excited state by a gap. The adiabatic theorem implies that if the field starts in the ground state $\Omega(x(0))$ of $H(0)$, it remains up to a phase $\epsi$-close to the ground state $\Omega(x(t))$ of $H(t)$ at later times. The solution is thus $\epsi$-close to a state where a static cloud of virtual bosons follows the slowly moving sources and no radiation is emitted. The error of the adiabatic approximation is of order $\epsi$.
If we pass to the superadiabatic approximation, we realize that the solution is even $\epsi^N$-close to a state $\Omega^\epsi(t)$ where an $\epsi$-dependent cloud of virtual  bosons  follows the slowly moving sources and no radiation is emitted. The state $\Omega^\epsi(t)$ is still $\epsi$-close to the static ground state $\Omega(x(t))$, however, the dressing by virtual bosons depends now also on the velocities $\dot x_j(t)$ and higher derivatives of $x_j(t)$. If at some time the sources come to rest then from that time on $\Omega^\epsi(t)=\Omega(x(t))$, hence the field is $\epsi^N$-close to the static ground state again. In particular, the probability for emitting a free boson is at most of order~$\epsi^{2N}$. 

\begin{figure}[h]\begin{center}\vspace{1cm}
\includegraphics[height=10cm]{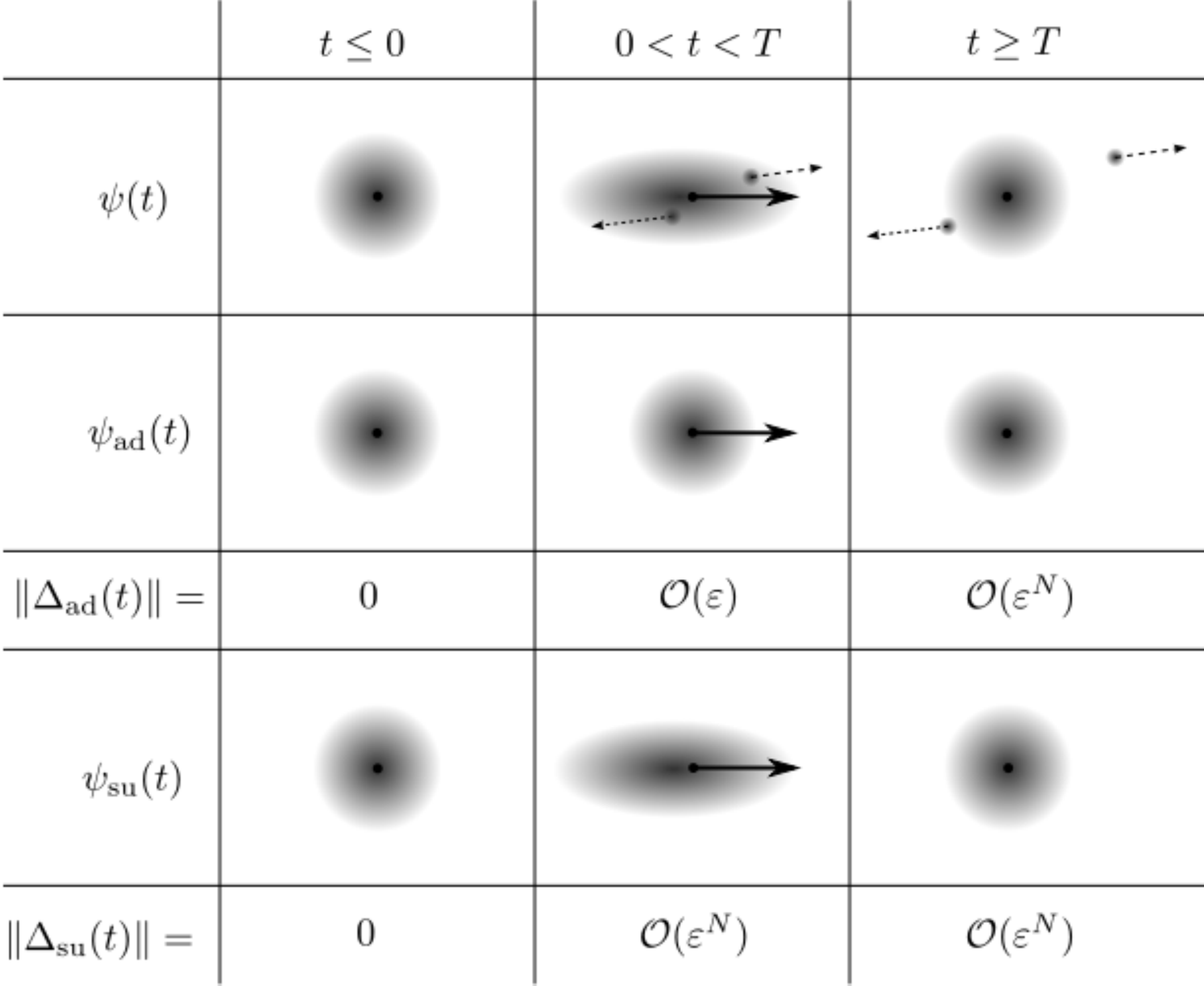}
\caption{\small The situation with spectral gap: For $t\leq 0$ the sources are at rest and the adiabatic approximation agrees with the true solution. While the sources are moving, the dressing is deformed, which leads to an error of order $\epsi$ in the adiabatic approximation. The superadiabatic approximation takes into account the deformation and the error is only of order $\epsi^N$. When the sources are again at rest for $t\geq T$, the dressing is again the static one and the error of the adiabatic and the superadiabatic approximation is  of order $\epsi^N$. This error is due to 
non-adiabatic transitions that correspond  to the emission of free bosons.\label{f2} }
\end{center}
\end{figure}

Summarizing the above, we have that
as long as there is a gap and as long as the prescribed trajectories $x_j(t)$ are $C^{N+1}(\R,\R^3)$, non-adiabatic transitions corresponding to the emission of free bosons are at most  of order~$\epsi^N$. 
Nonetheless, the error of the adiabatic approximation is of order $\epsi$ for all times where $\dot x_j(t)\not=0$ because of the deformation of the cloud of virtual bosons, see also Figure~\ref{f2}. This exemplifies in a simple physical example why it is important to distinguish carefully between the  error of the adiabatic approximation and the non-adiabatic transitions. It also is an example where superadiabatic subspaces have a clear physical meaning as velocity-dependent dressing of the sources. This is of course all well known and 
the content of this paper is to show that this picture survives also in the case without spectral gap, i.e.\ in the thermodynamic limit where the box is replaced by $\R^3$.

In the remainder of the introduction we provide informal statements of our main results, where for simplicity we restrict to the case of a neutral system.
Then there is a unitary transformation $V(x)\in\mathcal{L}(\fock)$ such that $V(x)H(x)V(x)^* = H_{\rm f} + E(x)$ and thus  $H(x)$  has a unique ground state
$\Omega(x)=V^*(x)\Omega_0$ with eigenvalue $E(x)$. The   general and rigorous statements are explained in Section~2. Our first result is an adiabatic theorem without spectral gap, cf.\ Theorem~\ref{thm:adapp}.\\[2mm]
\noindent {\bf Adiabatic Theorem:} {\em 
The solution of 
\begin{equation}\label{schroe}
\I\epsi\tfrac{\D}{\D t} \psi(t) = H( x(t) )\psi(t) 
\end{equation}
with $\psi(0) = \Omega(x(0))$ satisfies for any $t\in\R$ 
\begin{equation}\label{adi0}
\| \psi(t) - \E^{-\frac{\I}{\epsi}\int_0^t \D s\, E(x(s))} \Omega(x(t))\| = \Or \Big( \epsi \sqrt{\ln(1/\epsi)} \Big)\,.
\end{equation}
}

Thus, although the spectrum of $H(x)$ is the whole half line $[E(x),\infty)$ and the eigenvalue $E(x)$ is not separated by a gap from the continuous spectrum, the usual adiabatic approximation for the evolution of eigenstates holds with almost the same error estimate as in the gapped case.  
We have even more: if one adds free bosons then their dynamics is adiabatic too in the following sense. To each configuration $x$ of the sources the annihilation and creation operators of free bosons are $b_x := V(x)^* a V(x)$ and $b^\dagger_x:=V(x)^* a^\dagger V(x)$ and thus
\[
\E^{\frac{\I}{\epsi}H(x)t }\, b^\#_x( f) \,\E^{-\frac{\I}{\epsi}H(x)t } \;=\;
b^\#_x( f(t) )\,,
\]
where $f(k,t) = \E^{-\frac{\I}{\epsi}|k|t}f(k)$ is just the free time evolution of a boson and $b^\#_x$ stands either for $b_x$ or $b_x^\dagger$.
Now we define the adiabatic approximation   as
\[
\psi_{\rm ad}(t) :=  \E^{-\frac{\I}{\epsi}\int_0^s \D s\, E(x(s))} \prod_{l=1}^m b^\dagger_{x(t)}(f_l(t) ) \, \Omega(x(t))\,.
\]
We will show that it
approximates 
the solution of (\ref{schroe}) with initial datum 
\[
\psi(0) = \prod_{l=1}^m b^\dagger_{x(0)}(f_l) \,\Omega(x(0))\,,\quad f_l\in L^2(\R^3)\,,
\]
in the same sense as in (\ref{adi0}), 
 \begin{equation}\label{adiapprox}
\left\| \psi(t) - \psi_{\rm ad}(t) \right\| = \Or \big( \epsi\sqrt{\ln(1/\epsi)}\big)\,. 
\end{equation}

If we denote by $Q_m(x):=V(x)^* \, Q_m\,V(x)$ the projection on the sector of Fock space containing exactly $m\in\N_0$ free bosons, where $Q_m$ is the projection on the $m$-particle sector of Fock space,
the above result implies that for $\psi(0)\in Q_{m}(x(0)) \fock$
\[
\| (1-Q_{m}(x(t)))\, \psi(t) \|^2 = \Or \big( \epsi^2 \ln(1/\epsi)\big)\,,
\]
 i.e.\ that the probability for emitting a free boson is at most of order $\epsi^2  \ln(1/\epsi )$.
Note that this means that not only the spectral subspace $Q_0(x(t))\fock$ is adiabatically invariant, but also the subspaces $Q_m(x(t))\fock$, which are not spectral subspaces of the instantaneous Hamiltonian $H(x(t))$.

In order to understand what part of $(1-Q_{m}(x(t)))\, \psi(t) $ really corresponds to emission of free bosons and what part is merely a velocity-dependent deformation of the dressing, we need to introduce the superadiabatic
picture. Now not only  the dressing of the sources depends  on their velocities, but also the annihilation and creation operators of free bosons and the associated $m$-particle sectors of Fock space. 
In Lemma~\ref{lem:V} we construct a velocity-dependent dressing operator $V^\epsi_\sigma(t)$ such that the corresponding objects $\Omega^\epsi(t) := V^{\epsi*}(t)\Omega_0$, $b^{\epsi \#}_t:=V^{\epsi*} (t) a^\# V^\epsi_\sigma(t) $ and $Q_m^\epsi(t):=V^{\epsi*} (t) Q_m V^\epsi_\sigma(t) $
allow for an improved adiabatic approximation. Let
\[
\psi_{\rm su}(t) := \E^{-\frac{\I}{\epsi}\int_0^s \D s\, E^\epsi(x(s))}  \prod_{l=1}^m b^{\epsi \dagger}_t(f_l(t)  ) \, \Omega^\epsi( t )
\]
be the superadiabatic approximation to the solution $\psi(t)$ of (\ref{schroe}) with initial datum 
\[
\psi(0) = \prod_{l=1}^m b^{\epsi \dagger}_0(f_l) \,\Omega^\epsi (0)\,,\quad f_l\in L^2(\R^3)\,,
\]
then  (\ref{adiapprox}) can be improved in the following sense: the part of the solution $\psi(t)$ that stays in the $m$-particle sector is given by $\psi_{\rm su}(t)$ with a much smaller error,
\[
\| Q_m^\epsi(t) \psi(t) - \psi_{\rm su}(t) \| = \Or \big( \epsi^2 \ln(1/\epsi)\big)\,.
\]
However,   more importantly, the superadiabatic representation can be used to compute the non-adiabatic transitions corresponding to the emission of free bosons by first order perturbation theory. In Theorem~\ref{thm:foae} we show that the ``non-adiabatic''    wave function
\[
\psi_{\rm na}(t) :=   -\I\epsi 
 \int_0^t \D s\, \E^{-\frac{\I}{\epsi}   \int_s^t \D \tilde s\, E^\epsi (x(\tilde s)) }\,   \,\Phi^\epsi_{s} \left( \E^{-\frac{\I}{\epsi}|k|t} g^\epsi(s) \right)   \psi_{\rm su}(s)  
\]
satisfies
\[
\left\| (1-Q_m^\epsi(t)) \,\psi(t) - \psi_{\rm na}(t) \right\| = \Or\left( \epsi^2 \ln(1/\epsi)\right)\,.
\]
Here $\Phi^\epsi_{s} (f) := \frac{1}{\sqrt{2}} \left(b^{ \epsi \dagger}_{s} (f) + b^\epsi_{s}(  f)\right)$ and  $g^\epsi(s) = g^\epsi(  x(s), \ddot x(s))$ is a coupling function depending on the acceleration of the sources.
In summary 
\[
\|   \psi(t) - (\psi_{\rm su}(t)+ \psi_{\rm na}(t)) \| = \Or\big( \epsi^2 \ln(1/\epsi)\big) 
\]
and thus $\psi_{\rm su}(t) + \psi_{\rm na}(t)$ yields a good approximation of the true solution $\psi(t)$ with a clear separation of a ``superadiabatic'' piece $\psi_{\rm su}(t)$ that improves the adiabatic approximation and a ``non-adiabatic'' piece $\psi_{\rm na}(t)$ that contains real non-adiabatic transitions.  Note that a posteriori one can compute the leading order of $\psi_{\rm na}(t)$ replacing all superadiabatic objects by their leading order adiabatic counterparts,
\[
\psi_{\rm na}(t)  =   -\I\epsi 
 \int_0^t \D s\, \E^{-\frac{\I}{\epsi}   \int_s^t \D \tilde s\, E  (x(\tilde s)) }\,   \,\Phi_{ s } \left( \E^{-\frac{\I}{\epsi}|k|t} g^\epsi(s) \right)   \psi_{\rm ad}(s)  \;+\; \Or \big( \epsi^2 \ln(1/\epsi)\big)\,.
\]

Finally one can ask for the probability of emitting a free boson, i.e.\ for computing $\|\psi_{\rm na}(t)\|^2$. However, due to the infrared problem it turns out that this has no nice asymptotics for $\epsi\to 0$ and
we can only show 
$\|\psi_{\rm na}(t)\|^2 =\Or \big( |t|^2 \epsi^2 \ln(1/\epsi) \big)$. But a physically more interesting question is the energy radiated through the emission of free bosons. In Theorem~\ref{thm:rad} we show that when starting in the dressed vacuum, i.e.\  $\psi(0) = \Omega^\epsi(0)$, then
\[
E_{\rm rad}(t) \;:=\;  
\langle \psi_{\rm na}(t), H_{\rm f}(t)  \, \psi_{\rm na}(t) \rangle_\fock \; =\; 
\frac{\epsi^3}{2\cdot 6 \pi}\int_0^t \D s\, | \ddot d (s)|^2 \;+ \;{o}(\epsi^3)\,, 
\]
 where 
 \[
 \ddot d(t) :=\textstyle \sum_{j=1}^N e_j \ddot x_j(t) 
 \]
 is the second derivative of the dipole moment of the sources and $H_{\rm f}(t)$   is the field energy of the free photons. Taking a derivative gives the Larmor formula for the radiated power of slowly moving charges,
 \[
 P_{\rm rad}(t) := \frac{\D}{\D t} E_{\rm rad}(t) =   \frac{\epsi^3}{2\cdot 6 \pi} \, | \ddot d (t)|^2 \;+ \;{o}(\epsi^3)\,.
\]

To come back to our main message once more, note that computing naively the energy in the piece of the solution that constitutes the error of the adiabatic approximation $\tilde \psi (t) := \psi(t) -\psi_{\rm ad}(t)$ would have led to
 \[
 \langle \tilde \psi (t), H_{\rm f} (t) \tilde \psi (t) \rangle_\fock \;=\;\frac{\epsi^2}{4} \sum_{i,j=1}^N \int_{\R^3}\D k\,
 \frac{e_ie_j|\hat\varphi(k)|^2}{|k|^2} \E^{\I k\cdot (x_j(t)-x_i(t))} \kappa \cdot \dot x_j(t) \kappa \cdot \dot x_i(t)
+  \Or(\epsi^3)\,,
 \]
 where $\kappa := k/|k|$. As our analysis showed, this velocity-dependent term is the energy in the deformation of the dressing and being of order $\epsi^2$, it is much bigger than the energy emitted by radiation of free bosons, which is of order $\epsi^3$. In order to obtain the correct picture, the use of superadiabatic approximations seems inevitable. 
 
 Finally let us mention that a closely related problem was considered in~\cite{tenteu}, see also \cite{ten}.
 There semiclassical limit  of   non-relativistic Schr\"odinger particles coupled to a massless scalar field was considered. In a sense the system considered in the present paper can be considered a toy model version of the one in~\cite{tenteu}. However, as a consequence of the simplicity of the model the results obtained here are much stronger and at the same time the proofs are more transparent. For 
 \\
 
 \noindent{\bf Acknowledgements.} We thank Rainer Nagel for pointing out to us reference~\cite{Kato}.

 \section{Main results} 
 
 We always assume the following properties for the parameters of the model:
 
\begin{asum}
  The charge distribution $\varphi \in S(\R^3)$ is a spherically symmetric Schwartz-function  with $\hat\varphi(0)=(2\pi)^{-3/2}$ and 
  $x_j\in C^4(\R,\R^3)$ for $j=1,\ldots, N$.
\end{asum} 
\noindent    
As explained in the introduction, the central object that we construct is the transformation to the superadiabatic representation, a
    unitary dressing transformation 
\[
V^\epsi (t): \fock \to \fock\,.
\] 
In the superadiabatic representation the 
    Schr\"odinger equation  reads
\[
V^\epsi (t)\left(\I \epsi \tfrac{\D}{\D t}-H(x(t))\right)V^{\epsi  } (t)^* \,V^\epsi (t) \, \psi(t)=: \Big(\I\epsi \frac{\D}{\D t}-H^\epsi_{\mathrm{dress}}(t) \Big)\phi(t)\,.
\]
In this representation the Fock vacuum $\Omega_0$ corresponds to the dressed ground state and the $m$-particle sectors of Fock space correspond to states with $m$ free bosons. 
While in the introduction we formulated the statements in the original  representation, it is more convenient to formulate and prove everything in the new representation after performing the dressing
transformation. The translation back to the original representation is straightforward using the definitions of $b^{\epsi \#}_t$ of the introduction. Note, however, that if the total charge is different from zero, the transformation 
$V^\epsi(t)$ has no asymptotic expansion in powers of $\epsi$, not even the limit $\lim_{\epsi\to 0} V^\epsi(t)$ exists.

Let $Q_{\leq M}$ denote the projection on the sectors of Fock space with at most $M$ photons.

\begin{thm}[\bf Adiabatic approximation]\label{thm:adapp}

The dressed Hamiltonian $H_\mathrm{dress}^\epsi$ generates a unitary propagator $U_\mathrm{dress}(t):= U_\mathrm{dress}(t, t_0=0)$. For any $T\in \R$ there are constants $C<\infty$ and $\epsi_0>0$ such that for all   $M\in \N$, $0<\epsi\leq\epsi_0$ and $|t|\leq T$
\[
\left\| \left( U_\mathrm{dress}(t)-  \ee^{-\frac{\im}{\epsi}(H_{\mathrm{f}} t +\int_0^t   E (s) \,\D s)}\right)Q_{\leq M} \right\| \leq    C\,\epsi \, {\ln  (\epsi ^{-1}) \,\sqrt{M+1}} 
\]
   in the norm of $\mathcal{L}(\fock)$ and in the norm of $\mathcal{L}(D(H_\mathrm{f}))$.
Here 
\begin{equation}\label{EDef}
E( t ):=-\textstyle\frac{1}{2}\int_{\R^3} \D k\, \frac{ |v_0 (x(t),k)|^2 }{|k|}
\end{equation}
with
\begin{equation}\label{equ:v(x(t),k)}
v_0(x(t),k):= \sum_{j=1}^N  \frac {e_j \hat \varphi(k)}{|k|^{{1 /2}}}\ee^{ \im k \cdot x_j(t)}\,.
\end{equation}
\end{thm}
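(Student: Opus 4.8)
The plan is to pass to the dressed representation, where the Schr\"odinger operator becomes a small off-diagonal perturbation of $H_{\mathrm f}+E(t)$, and then to run a Duhamel argument that exploits the free-field oscillations $\ee^{\mp\frac{\im}{\epsi}|k|t}$ --- the analogue of the single integration by parts that proves the gapped adiabatic theorem. Concretely, the dressing transformation of Lemma~\ref{lem:V} brings the operator to the form
\[
H^\epsi_{\mathrm{dress}}(t)\;=\;H_{\mathrm f}+E(t)+\epsi\,R^\epsi(t)\,,
\]
where the scalar is exactly $E(t)$ from (\ref{EDef}) --- the a priori $\Or(\epsi)$ correction produced by $\im\epsi(\partial_t V^\epsi)V^{\epsi*}$ vanishes because $\varphi$, and hence $\hat\varphi$, is real and spherically symmetric, so the relevant overlap integrals are real --- and $R^\epsi(t)$ is an operator linear in $a$ and $a^\dagger$ whose coefficient function is built from $\dot x_j(t)$, $\partial_{x_j}(v_0(x(t),k)/|k|)$ and, for momenta below the infrared scale $\sigma$ entering the construction of $V^\epsi_\sigma$, from the uncancelled leading coupling, which is of size $\Or(\sigma)$ in $L^2$. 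Since this coefficient function decays fast in $k$ by the Schwartz property of $\hat\varphi$, behaves like $|k|^{-1/2}$ near $k=0$, and lies with a few $t$-derivatives in $L^2(\R^3)\cap D(H_{\mathrm f}^{1/2})$, the perturbation $\epsi R^\epsi(t)$ is infinitesimally $H_{\mathrm f}$-bounded; hence $H^\epsi_{\mathrm{dress}}(t)$ is self-adjoint on $D(H_{\mathrm f})$ with $t$-independent core, is regular enough in $t$ (the $C^4$-hypothesis on the $x_j$ enters here), and by the standard theory of time-dependent generators (cf.\ \cite{Kato}) generates a unitary propagator $U_{\mathrm{dress}}(t,s)$ leaving $D(H_{\mathrm f})$ invariant with norm bounded uniformly on $[-T,T]^2$.

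Writing $U_0(t,s):=\ee^{-\frac{\im}{\epsi}(H_{\mathrm f}(t-s)+\int_s^t E(r)\,\D r)}$ for the diagonal propagator, so that the claim is $\|(U_{\mathrm{dress}}(t,0)-U_0(t,0))Q_{\leq M}\|\leq C\epsi\ln(\epsi^{-1})\sqrt{M+1}$, Duhamel's formula gives
\[
\big(U_{\mathrm{dress}}(t,0)-U_0(t,0)\big)Q_{\leq M}\;=\;-\im\int_0^t U_0(t,s)\,R^\epsi(s)\,U_{\mathrm{dress}}(s,0)\,Q_{\leq M}\,\D s\,.
\]
A naive bound is $\Or(|t|)$; the improvement comes from solving the commutator equation $[H_{\mathrm f},X_\sigma(s)]=R^\epsi(s)\,\mathbf{1}_{|k|>\sigma}$ --- which on the field-operator level just amounts to dividing the coefficient function of $R^\epsi$ by $|k|$ and cutting off below $\sigma$ --- and inserting $R^\epsi(s)\mathbf{1}_{|k|>\sigma}=[H_{\mathrm f}+E(s),X_\sigma(s)]$ into the integrand. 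Using $\partial_s U_0(t,s)=\frac{\im}{\epsi}U_0(t,s)(H_{\mathrm f}+E(s))$ and $\partial_s U_{\mathrm{dress}}(s,0)=-\frac{\im}{\epsi}H^\epsi_{\mathrm{dress}}(s)U_{\mathrm{dress}}(s,0)$ one rewrites this part of the integral as $\tfrac{\D}{\D s}$ of $-\im\epsi\,U_0(t,s)X_\sigma(s)U_{\mathrm{dress}}(s,0)$ plus lower-order terms; the boundary terms at $s=0,t$ and the $\partial_sX_\sigma$-term then carry an explicit factor $\epsi$, at the price of the factor $1/|k|$ hidden in $X_\sigma$.

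This is exactly where the only genuine difficulty enters. The coefficient function of $R^\epsi(s)$ is merely $\Or(|k|^{-1/2})$ at the origin, so after division by $|k|$ it is $\Or(|k|^{-3/2})$, whose $L^2$-norm diverges logarithmically at $k=0$: the commutator equation cannot be solved without the cutoff $\sigma$, and this infrared singularity of the energy denominators forces the extra logarithm. The terms coming from $X_\sigma$ are bounded by $C\epsi(1+|t|)\,\big\||k|^{-1}\gamma^\epsi(\cdot)\mathbf{1}_{|k|>\sigma}\big\|_{L^2}\sqrt{M+1}\leq C\epsi\sqrt{\ln(1/\sigma)}\sqrt{M+1}$, while the remaining part of the Duhamel integral, the one carrying the uncancelled low-momentum piece $R^\epsi(s)\mathbf{1}_{|k|<\sigma}$, is bounded directly by $C|t|\,\big\|\gamma^\epsi(\cdot)\mathbf{1}_{|k|<\sigma}\big\|_{L^2}\sqrt{M+1}\leq C|t|\,\sigma\,\sqrt{M+1}$, using $\gamma^\epsi\sim|k|^{-1/2}$. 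Choosing $\sigma\sim\epsi$ balances these, and both are $\Or(\epsi\ln(\epsi^{-1})\sqrt{M+1})$, as claimed. Here $\sqrt{M+1}$ is the operator norm of a single creation operator on the $\leq\!M$-photon sector; the lower-order terms produced along the way contain products of a bounded number of creation and annihilation operators against higher powers of $\epsi$, and one checks by the same bounds --- possibly after one further integration by parts in $s$ --- that they stay within $C\epsi\ln(\epsi^{-1})\sqrt{M+1}$.

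Two points complete the argument. First, to control $\|R^\epsi(s)U_{\mathrm{dress}}(s,0)Q_{\leq M}\|$ by $\|R^\epsi(s)(\mathcal N+1)^{-1/2}\|\cdot\|(\mathcal N+1)^{1/2}U_{\mathrm{dress}}(s,0)Q_{\leq M}\|$ one needs to know how the propagator spreads the photon number $\mathcal N$; since $[\mathcal N,H^\epsi_{\mathrm{dress}}(t)]=\epsi[\mathcal N,R^\epsi(t)]$ is again a field operator with $L^2$-coefficient function (the $\epsi$ cancelling the $1/\epsi$ in the Heisenberg equation for $\mathcal N$), a Gronwall estimate yields $\|(\mathcal N+1)^{1/2}U_{\mathrm{dress}}(s,0)Q_{\leq M}\|\leq C_T\sqrt{M+1}$ for $|s|\leq T$. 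Second, the estimate in the norm of $\mathcal L(D(H_{\mathrm f}))$ follows by commuting $H_{\mathrm f}$ into the Duhamel integral: this produces a term with $[H_{\mathrm f},R^\epsi(s)]$, whose coefficient function is $|k|$ times that of $R^\epsi(s)$ and hence has strictly better infrared behaviour, together with a term in which $H_{\mathrm f}$ hits $U_{\mathrm{dress}}(s,0)Q_{\leq M}$ and is absorbed using the invariance of $D(H_{\mathrm f})$ noted above; both are estimated as before. The hard part of the whole argument is the infrared bookkeeping just described: it is the only place where the estimate is worse than in the gapped case, and it is responsible for the logarithmic loss.
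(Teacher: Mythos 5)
Your plan — conjugate to a frame where the generator is diagonal plus an oscillatory off-diagonal piece, then gain a power of $\epsi$ by exploiting the $\ee^{\mp\im|k|t/\epsi}$-oscillations --- is the right idea and close in spirit to the paper. But the paper organizes it differently: the ``integration by parts'' step is absorbed \emph{into} the dressing transformation. That is, $V^\epsi_\sigma=\ee^{\im\epsi\Phi_2}V_\sigma$ already contains the solution of your commutator equation ($\Phi_2$ is precisely your $X_\sigma$), and Lemma~\ref{lem:hdress} shows that the resulting off-diagonal perturbation is $-\epsi^2\dot\Phi_2$, i.e.\ of order $\epsi^2$. So the paper's Duhamel bound (Proposition~\ref{prop13}) is immediate, without any further oscillatory-integral manipulation; the logarithm simply comes from $\|\dot z_2\|_{L^2}\sim\sqrt{\ln\sigma^{-1}}$. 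When you write that ``the dressing transformation of Lemma~\ref{lem:V} brings the operator to the form $H_{\mathrm f}+E(t)+\epsi R^\epsi(t)$'' with $R^\epsi$ linear in $a,a^\dagger$, that is the $V_\sigma$-dressed form, not the $V^\epsi_\sigma$-dressed one; the two differ exactly by your integration-by-parts step. Both roads lead to the same place, but you should be explicit about which transformation you are applying.

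The genuine gap is in the infrared bookkeeping for the charged case. You correctly note that dressing the \emph{full} Hamiltonian $H$ by a $\sigma$-cutoff $V_\sigma$ (resp.\ $V^\epsi_\sigma$) leaves behind the uncancelled coupling $\Phi(f_{\leq\sigma})$ with $\|f_{\leq\sigma}\|_{L^2}=\Or(\sigma)$, and you place this inside ``$\epsi R^\epsi(t)$''. But $\Phi(f_{\leq\sigma})$ carries no factor of $\epsi$: as a contribution to $R^\epsi$ it is of size $\Or(\sigma/\epsi)$, not $\Or(\sigma)$. Consequently the low-momentum Duhamel contribution is $\Or(|t|\,\sigma/\epsi\,\sqrt{M+1})$, not the $\Or(|t|\,\sigma\,\sqrt{M+1})$ you claim, and your choice $\sigma\sim\epsi$ yields a useless $\Or(|t|)$ remainder. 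This is exactly the $\sigma/\epsi$ appearing in the paper's Proposition~\ref{prop:approxs}, and it is why the paper takes $\sigma(\epsi)=\epsi^8$ rather than $\epsi$. Your argument is salvaged by taking $\sigma\lesssim\epsi^2$ (or any smaller power); then both contributions are within $\Or(\epsi\ln(\epsi^{-1})\sqrt{M+1})$. If you are implicitly restricting to the neutral case, where $V$ is defined without a cutoff and $\Phi(f_{\leq\sigma})$ never appears, your estimate is correct --- but the theorem is stated without a neutrality assumption, so this must be addressed.

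Two further remarks. First, you chose the Duhamel form with $U_{\mathrm{dress}}(s,0)$ adjacent to $Q_{\leq M}$, which forces you to control the photon-number spread under the interacting evolution and introduces the Gronwall step for $(\mathcal N+1)^{1/2}$. The paper uses the other Duhamel form, sandwiching the perturbation between $U^{\sigma*}_{\mathrm{dress}}(s)$ and $U_0(s)$; since $U_0$ commutes with $Q_{\leq M}$ this removes the number-spreading issue entirely and the bound reduces to Lemma~\ref{philem}. Your Gronwall sketch is plausible but not carried out, and it is avoidable. Second, you claim the scalar in the dressed generator is ``exactly $E(t)$''; in the paper it is $E^\epsi_\sigma(t)=E(t)+\Or(\sigma)+\Or(\epsi^2)$, which feeds an $\Or(\epsi t)$ dynamical-phase correction into the comparison with $\ee^{-\frac{\im}{\epsi}\int_0^t E(s)\,\D s}$; that correction is within the claimed error, but it should be acknowledged rather than suppressed.
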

Thus at leading order the time evolution is just the free evolution of the bosons and a time-dependent dynamical phase. In particular, the $m$-boson sectors $Q_m\fock$  are adiabatic invariants of the dynamics. An initial state in $Q_m\fock$ of the form
\[
\phi(0) =\textstyle \prod_{l=1}^m a^\dagger (f_l) \,\Omega_0\,,\quad f_l\in L^2(\R^3)\,,
\]
  evolves into
\[
\phi(t) := U_\mathrm{dress}(t) \,\phi(0) =  \ee^{-\frac{\im}{\epsi} \int_0^t \D s\, E(s) }\textstyle \prod_{l=1}^m a^\dagger (f_l(t)) \,\Omega_0\,.
\]
Applying $V^\epsi (t)$ and using 
\[
\| V^\epsi (t) - V(x(t)) \| =\Or(\epsi)
\]
in the case of a neutral system  yields (\ref{adiapprox}).

\begin{thm}[\bf Superadiabatic approximation]\label{thm:foae}

For any $T\in \R$ there are constants $C<\infty$ and $\epsi_0>0$ such that for all   $M\in \N$, $0<\epsi\leq\epsi_0$ and $|t|\leq T$
\begin{align*}\lefteqn{
\left\| \left( U_\mathrm{dress}(t)  - \ee^{-\frac{\im}{\epsi}(H_{\mathrm{f}} t +\int_0^t   E^\epsi(s) \,\D s)}\left( 1+ \im \epsi    
  \int_0^t \D s\,\ee^{\im H_{\mathrm f} \frac{s}{\epsi}}\,h_{\mathrm{rad}}(s)\, \ee^{-\im H_{\mathrm f} \frac{s}{\epsi}} \right)\right)Q_{\leq M} \right\|\leq } \\[2mm] &&&&&&&&&&&&&&&&&&\leq
    C\,\epsi^2\,  \ln  (\epsi ^{-1} )\,\sqrt{M+1}   
  \end{align*}
%
 in $\mathcal{L}(\fock)$ and $\mathcal{L}(D(H_\mathrm{f}))$ with  
\[
E^\epsi(t) = E(t) +  \frac{\epsi^2}{4} \sum_{j,i=1}^N    \int_{\R^3}\D k \,  \frac{e_je_i\betr{\hat {\varphi}(k)}^2}{\betr{k}^2}  \ee^{-\im k\cdot (x_i(t)-x_j(t))}\,(\kappa\cdot \dot{x}_j(t))  \,(\kappa \cdot \dot{x}_i (t))
\]
and
\[
h_\mathrm{rad}(t):=\Phi\Big( \id_{[\epsi^8,\infty)}(|k|) \sum_{j=1}^N   \frac{ e_j \hat \varphi (k) }{|k|^{\frac{3}{2}}}\ee^{-\im k \cdot x_j(t)}  \langle  \kappa,\ddot x_j(t) \rangle \Big)\,.
\]
\end{thm}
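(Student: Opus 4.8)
The input is the following structure of the dressed Hamiltonian, which the construction of $V^\epsi$ in Lemma~\ref{lem:V} provides (and which I sketch because it is the heart of the matter). Realising $V^\epsi(t)$ as an iterated product of Weyl operators chosen to remove the off-diagonal (odd in $a,a^\dagger$) part of the transformed Hamiltonian order by order, one arrives at
\[
H^\epsi_{\mathrm{dress}}(t) \;=\; H_{\mathrm{f}} + E^\epsi(t) \;-\; \epsi^2\,h_{\mathrm{rad}}(t) \;+\; \mathcal R^\epsi(t)\,,
\]
where $E^\epsi(t)$ is the scalar of the theorem (the remaining c-number corrections being $\Or(\epsi^3)$ and absorbed into $\mathcal R^\epsi$) and $\mathcal R^\epsi(t)$ is a remainder (a field operator with $L^2$-coupling plus an $\Or(\epsi^3)$ c-number) with $\|\mathcal R^\epsi(t)\,Q_{\leq M}\| \leq C\,\epsi^3\,\ln(\epsi^{-1})\,\sqrt{M+1}$ uniformly on $|t|\leq T$, both in $\mathcal L(\fock)$ and in the graph norm of $D(H_{\mathrm{f}})$. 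The delicate point is infrared: for a neutral system the off-diagonal couplings produced at orders $\epsi,\epsi^2,\epsi^3$ behave near $k=0$ like $|k|^{-1/2},|k|^{-3/2},|k|^{-3/2}$ (one power of $|k|$ better than the charged case, since $\sum_j e_j=0$ forces $v_0(x,k)=\Or(|k|^{1/2})$), while cancelling a coupling $\sim|k|^{-p}$ needs a dressing generator with kernel $\sim|k|^{-p-1}$, which leaves $L^2(\R^3)$ once $p\geq\tfrac12$. Hence only the modes $|k|\geq\epsi^8$ are dressed: the undressed infrared tails of the order-$\epsi$ and order-$\epsi^2$ couplings have $L^2$-norm $\Or(\epsi^8)$ and are negligible; a third step removes the order-$\epsi^2$ term with the milder coupling; and what is irreducible at order $\epsi^2$ is exactly $-\epsi^2 h_{\mathrm{rad}}(t)$ with $h_{\mathrm{rad}}$ cut off at $\epsi^8$, because its cancelling generator scaled by $\epsi^2$ is not $L^2$ even with the cutoff. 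The surviving coupling of $h_{\mathrm{rad}}$ has $L^2$-norm $\Or(\sqrt{\ln\epsi^{-1}})$, and this single logarithm propagates into every error estimate.

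With this structure the proof is a Duhamel expansion. Let $U_0(t,s):=\ee^{-\frac{\im}{\epsi}\int_s^t E^\epsi(\tau)\,\D\tau}\,\ee^{-\frac{\im}{\epsi}(t-s)H_{\mathrm{f}}}$ and write $A(t)$ for the operator between the norm signs in the theorem. Since $E^\epsi$ is a c-number, $U_0(t,0)\,\ee^{\im H_{\mathrm{f}}s/\epsi} = \ee^{-\frac{\im}{\epsi}\int_0^t E^\epsi}\ee^{-\frac{\im}{\epsi}(t-s)H_{\mathrm{f}}}$, so that $A(t) = U_0(t,0) + \im\epsi\int_0^t U_0(t,s)\,h_{\mathrm{rad}}(s)\,U_0(s,0)\,\D s$. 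Inserting $H^\epsi_{\mathrm{dress}}$ into Duhamel's formula for $U_{\mathrm{dress}}$ (legitimate: $H^\epsi_{\mathrm{dress}}(s)$ is self-adjoint on $D(H_{\mathrm{f}})$ and $U_{\mathrm{dress}}$ preserves $D(H_{\mathrm{f}})$ by Theorem~\ref{thm:adapp}), subtracting $A(t)$, and substituting $U_{\mathrm{dress}}(s)-U_0(s,0) = \big(U_{\mathrm{dress}}(s)-A(s)\big) + \im\epsi\int_0^s U_0(s,r)h_{\mathrm{rad}}(r)U_0(r,0)\,\D r$ yields, for $D(t):=U_{\mathrm{dress}}(t)-A(t)$, the closed equation
\[
D(t) = \im\epsi\int_0^t U_0(t,s)h_{\mathrm{rad}}(s)D(s)\,\D s \;-\; \epsi^2\!\int_0^t\!\!\int_0^s U_0(t,s)h_{\mathrm{rad}}(s)U_0(s,r)h_{\mathrm{rad}}(r)U_0(r,0)\,\D r\,\D s \;-\; \tfrac{\im}{\epsi}\int_0^t U_0(t,s)\mathcal R^\epsi(s)U_{\mathrm{dress}}(s)\,\D s\,.
\]

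The last two integrals are the source; the first is a Volterra term with $\Or(\epsi)$-small kernel, absorbed by a Gronwall/iteration argument using that $h_{\mathrm{rad}}$ raises the photon number by at most one. On $Q_{\leq M}\fock$ one estimates with $\|\Phi(f)(\mathcal N+1)^{-1/2}\|\leq\|f\|_{L^2}$ (so each field operator costs a factor $\sqrt{M+1}$ and raises the photon cutoff by one; here $\mathcal N$ is the number operator), with unitarity of $U_0$ and $U_{\mathrm{dress}}$, with $\|\text{coupling of }h_{\mathrm{rad}}(s)\|_{L^2}^2 \leq C\int_{|k|\geq\epsi^8}|k|^{-3}|\hat\varphi(k)|^2\,\D k \leq C'\ln(\epsi^{-1})$, and with the bound on $\mathcal R^\epsi$ from above; tracking the photon number carefully through the double integral and the $\mathcal R^\epsi$-integral gives a source of size $\Or(\epsi^2\ln(\epsi^{-1})\sqrt{M+1})$ for $|t|\leq T$, and the Gronwall step preserves this order, hence $\|D(t)Q_{\leq M}\|\leq C\epsi^2\ln(\epsi^{-1})\sqrt{M+1}$. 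The $\mathcal L(D(H_{\mathrm{f}}))$-bound follows from the same estimate after commuting $H_{\mathrm{f}}$ through the integrals: each commutator $[H_{\mathrm{f}},\Phi(f)]=\tfrac1{\sqrt2}\big(a^\dagger(|k|f)-a(|k|f)\big)$ only replaces $f$ by $|k|f$, which is one power of $|k|$ less singular at $k=0$ and hence controlled even without the cutoff.

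The Duhamel/Gronwall part is routine; the substance, and what I expect to be the main obstacle, is the infrared bookkeeping of the first paragraph carried out inside Lemma~\ref{lem:V}: one must verify that no order-$\epsi^2$ off-diagonal term with a cutoff-independent $L^2$-coupling survives the dressing (such a term would enter Duhamel at order $\epsi$ and break the theorem), and that the order-$\epsi^3$ remainders and the undressed infrared tails at scale $\epsi^8$ are genuinely of the claimed size — this is precisely what fixes the exponent $8$ (any sufficiently large power works) and the explicit forms of $E^\epsi$ and $h_{\mathrm{rad}}$. Finally, Theorem~\ref{thm:adapp} is the same computation with the $h_{\mathrm{rad}}$-term dropped from $A(t)$, so that the dominant source is $-\epsi^2 h_{\mathrm{rad}}$ entering Duhamel at first order, i.e.\ of size $\Or(\epsi\ln(\epsi^{-1})\sqrt{M+1})$ — one power of $\epsi$ larger, as it should be.
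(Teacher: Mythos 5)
Your proposal reaches the result, but it departs from the paper's route at the central step of identifying what survives in the dressed Hamiltonian. The paper's Lemma~\ref{lem:hdress} computes $H^\sigma_{\rm dress}(t)=H_{\rm f}+E^\epsi_\sigma(t)-\epsi^2\dot\Phi_2(t)$ \emph{exactly} with the two-step dressing $V^\epsi_\sigma=\E^{\I\epsi\Phi_2}V_\sigma$, and $\dot\Phi_2=h^\sigma_{\rm rad}+\Phi(f_\sigma)$ where $\Phi(f_\sigma)$ is the velocity-squared term with coupling $\sim\hat\varphi_\sigma|k|^{-1/2}\langle\kappa,\dot x\rangle^2$; in the paper's dressed Hamiltonian that piece is genuinely of order $\epsi^2$ and is \emph{not} cancelled by the dressing. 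The paper disposes of it only at the level of the Duhamel integral, by one integration by parts exploiting the oscillation $\E^{\I|k|s/\epsi}$ (its coupling divided by $|k|$ is still controllably $L^2$), whereas you propose to cancel it at the Hamiltonian level with a further Weyl factor $\E^{\I\epsi^2\Phi_3}$. That works — the generator needed has coupling $\sim\hat\varphi_\sigma|k|^{-3/2}\langle\kappa,\dot x\rangle^2$, which is $L^2$ up to a $\sqrt{\ln\sigma^{-1}}$, and the new $\I\epsi\partial_t$-term enters at order $\epsi^3\dot\Phi_3$ — but it is a different (and slightly heavier) construction than Lemma~\ref{lem:V}, which you invoke; the redefined $V^\epsi$ is $\Or(\epsi^2)$-close to the paper's, so the bound transfers. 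One small inaccuracy: the cancelling generator for $h_{\rm rad}$ \emph{is} $L^2$ once cut off at $\epsi^8$; the obstruction is that its norm scales like $\epsi^{-8}$, so $\epsi^2$ does not tame it — it is not literally non-$L^2$. Finally, your closure of the Duhamel equation via an iteration/Gronwall step with the $\epsi\,h_{\rm rad}$ Volterra kernel is more machinery than the paper uses: there, a single Duhamel step plus the already-established first-order bound (Proposition~\ref{prop13}) to replace $U^\sigma_{\rm dress}(t)U^{\sigma*}_{\rm dress}(s)$ by the free flow inside the $\epsi$-prefixed integral gives the $\Or(\epsi^2\ln\sigma^{-1})$ directly, with no iteration and no photon-number bookkeeping beyond $Q_{\leq M}\to Q_{\leq M+1}$. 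Your identification of Theorem~\ref{thm:adapp} as the same computation with the $h_{\rm rad}$ term dropped matches the paper.
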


Note the infrared regularization by the characteristic function $ \id_{[\epsi^8,\infty)}(|k|)$  cutting off boson momenta smaller than $\epsi^8$ in this definition, which can be omitted for  neutral systems. 

With the help of this theorem we can define a decomposition of the wave function into a superadiabatic part $\phi_{\mathrm{su}}$ and the non-adiabatic part $\phi_\mathrm{na}$ with different photon number.
For a wavefunction $\phi_0 \in Q_m \fock$ we define
\[
\phi_{\mathrm{su}}(t) := Q_m \phi(t)= \ee^{-\frac{\im}{\epsi}(H_{\mathrm{f}} t +\int_0^t   E^\epsi(s) \,\D s)} \phi_0+\mathcal{O}\big(  \epsi^2  \ln  (\epsi ^{-1} )\big)
\]
and 
\[
\phi_{\mathrm{na}}(t) := Q_m^\perp \phi(t)= \im \epsi  \ee^{-\frac{\im}{\epsi}(H_{\mathrm{f}} t +\int_0^t   E^\epsi(s) \,\D s)}
 \int_0^t \D s\,\ee^{\im H_{\mathrm f} \frac{s}{\epsi}}\,h_{\mathrm{rad}}(s)\,\ee^{-\im H_{\mathrm f} \frac{s}{\epsi}} \phi_0  + \mathcal{O}\big( \epsi^2  \ln  (\epsi ^{-1} )\big)\,.
\]

Our final result concerns the amount of energy lost by the system due to radiation. For the stationary problem
with Hamiltonian $H(x)$ independent of time, the natural definition for the energy carried by free photons in the state $\psi\in D( H_{\rm f})$ is
\[
E_{\rm rad,stat}(\psi)  \;:=\; \langle \psi , H(x) \psi \rangle - E(x)\,,
\]
where $E(x)$ is the ground state energy. For  time-dependent Hamiltonians
the definition of energy is somewhat subtle, more so, because we look at the energy of a subsystem, the free photons.
However, in the superadiabatic 
representation there is again a natural definition for the energy given by the free photons, namely
\begin{align}\label{equ:rad}
E_{\rm rad}(\psi(t))  \;:=\; \langle  V^\epsi (t)\,\psi , H_{\rm f}  \, V^\epsi (t)\,\psi  \rangle  \,.
\end{align}
We will show that at any  time $t$     when $\dot x(t) = \ddot x(t)=0$, it holds that 
\begin{equation}\label{energy}
E_{\rm rad}(\psi(t)) = E_{\rm rad, stat}(\psi(t))  + \Or(\epsi^4)\,.
\end{equation}
Thus in a situation where the sources move only during a finite time interval, before and after the change the definition of energy agrees with the static one. Note that the error of order $\epsi^4$ comes from the infrared regularization in the transformation $V^\epsi(t)$ and is identically zero for neutral systems. The equality (\ref{energy}) is also determining the exponent $8$ in the infrared regularization. A smaller exponent there would lead to a larger error in  (\ref{energy}).

If we  assume that the initial state is the dressed vacuum, then the  energy  emitted as a result  of the motion of the sources is just the energy of the free photons and thus
$
{E}_{\mathrm{rad}}(t) =  \langle \phi_{\rm na}(t ), H_{\rm f} \phi_{\rm na}(t ) \rangle  
$,
which is computed explicitly in the following theorem.

\begin{thm}[\bf Radiation]\label{thm:rad}
Let $\phi(0)=\Omega $, then uniformly on bounded intervals in time it holds that
\begin{align} \label{equ:rad1}
E_{\mathrm{rad}}(t ) &\;=\; \frac{\epsi^2}{2} \sum_{i,j=1}^N {e_j e_i} \int  \D k\,   \frac{|\hat \varphi(k)|^2}{|k|^2}
\int_0^{t } \int_0^{t } \D s \, \D s'\,  \ee^{\im |k|\frac{s-s'}{\epsi}}\ee^{-\im k \cdot
(x_j(s)-x_i(s'))}\\
&\quad \qquad\qquad\times  \kappa\cdot \ddot x_j(s) \,
 \kappa \cdot \ddot x_i(s')   + \mathcal O\left(  \epsi^4  \big(\ln  (\epsi ^{-1} )\big)^2 \right)  \nonumber\\
 &\;=\;\frac{\epsi^3}{2\cdot 6 \pi}\int_0^t \D s\, | \ddot d (s)|^2 \; +\; o(  \epsi^3 )\,,\label{equ:rad2}
\end{align}
  where 
 \[
 \ddot d(t) := \textstyle\sum_{j=1}^N e_j \ddot x_j(t) 
 \]
 is the second derivative of the dipole moment.
\end{thm}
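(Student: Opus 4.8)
The plan is to reduce the statement to a first--order perturbative computation governed by Theorem~\ref{thm:foae} and then to extract the Larmor formula~(\ref{equ:rad2}) from~(\ref{equ:rad1}) by a rescaling of the momentum variable.

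First, since $\phi(0)=\Omega_0$, the superadiabatic component $\phi_{\rm su}(t)=Q_0\phi(t)$ lies in the photon vacuum sector, so $H_{\rm f}\phi_{\rm su}(t)=0$, and because $H_{\rm f}$ preserves photon number $\langle\phi_{\rm su}(t),H_{\rm f}\phi_{\rm na}(t)\rangle=0$; hence $E_{\rm rad}(t)=\langle\phi(t),H_{\rm f}\phi(t)\rangle=\|H_{\rm f}^{1/2}\phi_{\rm na}(t)\|^2$. Into this I insert the first--order expression for $\phi_{\rm na}$ from Theorem~\ref{thm:foae}. Since $a(f)\Omega_0=0$ one has $h_{\rm rad}(s)\Omega_0=\tfrac1{\sqrt2}a^\dagger(g_s)\Omega_0$ with $g_s(k):=\id_{[\epsi^8,\infty)}(|k|)\sum_j\tfrac{e_j\hat\varphi(k)}{|k|^{3/2}}\ee^{-\I k\cdot x_j(s)}\langle\kappa,\ddot x_j(s)\rangle$, and using $\ee^{-\I H_{\rm f}s/\epsi}\Omega_0=\Omega_0$ together with $\ee^{\I H_{\rm f}s/\epsi}a^\dagger(f)\ee^{-\I H_{\rm f}s/\epsi}=a^\dagger(\ee^{\I|k|s/\epsi}f)$ the time integral collapses to one creation operator,
\[
\phi_{\rm na}(t)\;=\;\underbrace{\tfrac{\I\epsi}{\sqrt2}\,\ee^{-\frac{\I}{\epsi}(H_{\rm f}t+\int_0^t E^\epsi)}\,a^\dagger(G_t)\,\Omega_0}_{=:A(t)}\;+\;R(t)\,,\qquad G_t(k):=\int_0^t\ee^{\I|k|s/\epsi}g_s(k)\,\D s\,,
\]
with $\|R(t)\|_{D(H_{\rm f})}=\Or(\epsi^2\ln(\epsi^{-1}))$ from Theorem~\ref{thm:foae}.

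Using $\langle a^\dagger(f)\Omega_0,H_{\rm f}a^\dagger(h)\Omega_0\rangle=\langle f,|k|h\rangle$ and that the dynamical phases are scalar resp.\ commute with $H_{\rm f}$, one gets $\|H_{\rm f}^{1/2}A(t)\|^2=\tfrac{\epsi^2}{2}\int_{\R^3}|k|\,|G_t(k)|^2\,\D k$; writing $|G_t|^2$ as a double $s$--integral, inserting $g_s\overline{g_{s'}}$ explicitly, and dropping the infrared cut--off (which costs only $\tfrac{\epsi^2}{2}\int_{|k|<\epsi^8}|k|^{-2}\,\D k=\Or(\epsi^{10})$) gives exactly the right--hand side of~(\ref{equ:rad1}). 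It then remains to control $E_{\rm rad}(t)-\|H_{\rm f}^{1/2}A(t)\|^2=2\,\Re\langle H_{\rm f}^{1/2}A(t),H_{\rm f}^{1/2}R(t)\rangle+\|H_{\rm f}^{1/2}R(t)\|^2$. The last term is $\le\|R(t)\|\,\|H_{\rm f}R(t)\|=\Or(\epsi^4(\ln\epsi^{-1})^2)$ at once. The cross term is the delicate one: as $A(t)$ is a one--photon state and $H_{\rm f}$ preserves photon number, only $Q_1R(t)$ enters, and one needs that $Q_1U_{\rm dress}(t)\Omega_0$ agrees with $A(t)$ to order $\epsi^3$ up to logarithms, not merely to the order $\epsi^2\ln\epsi^{-1}$ guaranteed by Theorem~\ref{thm:foae}; this holds because any transition from the vacuum to a one--photon state through $h_{\rm rad}$ that is not the direct one must route through the two--photon sector and is hence of third order (the second--order Dyson term underlying Theorem~\ref{thm:foae} lives entirely in the zero-- and two--photon sectors). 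With $\|H_{\rm f}^{1/2}A(t)\|=\Or(\epsi^{3/2})$ this makes the cross term $o(\epsi^4)$, and~(\ref{equ:rad1}) follows. I expect this step---having to look one Dyson order beyond the statement of Theorem~\ref{thm:foae} inside the one--photon sector---to be the main technical obstacle.

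Finally,~(\ref{equ:rad2}) follows from~(\ref{equ:rad1}) by scaling. Substituting $k=\epsi p$ (so $\D k=\epsi^3\,\D p$, $|k|=\epsi|p|$, $\ee^{\I|k|(s-s')/\epsi}=\ee^{\I|p|(s-s')}$, $\ee^{-\I k\cdot(\cdot)}=\ee^{-\I\epsi p\cdot(\cdot)}$, $\kappa=p/|p|$), the right--hand side of~(\ref{equ:rad1}) becomes $\epsi^3$ times
\[
\tfrac12\sum_{i,j=1}^N e_je_i\int_{\R^3}\frac{|\hat\varphi(\epsi p)|^2}{|p|^2}\,\D p\int_0^t\!\!\int_0^t\ee^{\I|p|(s-s')}\ee^{-\I\epsi p\cdot(x_j(s)-x_i(s'))}(\kappa\cdot\ddot x_j(s))(\kappa\cdot\ddot x_i(s'))\,\D s\,\D s'\,.
\]
A single integration by parts in $s$ (using $x_j\in C^4$) bounds the inner double integral by $C|p|^{-2}$ uniformly in $\epsi$; together with the integrable singularity $|p|^{-2}$ at the origin this yields an $\epsi$--independent integrable majorant, so by dominated convergence one may replace $|\hat\varphi(\epsi p)|^2$ by $|\hat\varphi(0)|^2=(2\pi)^{-3}$ and $\ee^{-\I\epsi p\cdot(\cdot)}$ by $1$ (the dipole limit). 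The angular integral $\int_{S^2}\kappa_a\kappa_b\,\D\Omega=\tfrac{4\pi}{3}\delta_{ab}$ then produces $\ddot x_j(s)\cdot\ddot x_i(s')$ and leaves $\tfrac{4\pi}{3}\int_0^\infty\!\D r\int_0^t\!\int_0^t\ee^{\I r(s-s')}\ddot x_j(s)\cdot\ddot x_i(s')\,\D s\,\D s'$; regularising $\int_0^\infty\ee^{\I r\tau}\D r=\pi\delta(\tau)+\I\,\mathrm{p.v.}\,\tfrac1\tau$ via a factor $\ee^{-\lambda r}$, $\lambda\downarrow0$, the principal--value part drops after summing over $i,j$ by the joint antisymmetry under $s\leftrightarrow s'$ and $i\leftrightarrow j$, while the $\delta$--part yields $\pi\int_0^t\ddot x_j(s)\cdot\ddot x_i(s)\,\D s$. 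Collecting the constants, $\tfrac12\cdot(2\pi)^{-3}\cdot\tfrac{4\pi}{3}\cdot\pi=\tfrac1{12\pi}$, and using $\sum_{i,j}e_je_i\,\ddot x_j\cdot\ddot x_i=|\ddot d|^2$ gives~(\ref{equ:rad2}). All estimates depend on $t$ only through suprema of the second and third derivatives of the $x_j$ over the interval, which yields the asserted uniformity.
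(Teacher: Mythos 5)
Your overall strategy matches the paper's: write $E_{\rm rad}(t)=\langle\phi_{\rm na}(t),H_{\rm f}\phi_{\rm na}(t)\rangle$ (the mixed terms with $\phi_{\rm su}$ vanish because $\phi_{\rm su}$ sits in the vacuum sector), insert the first--order Dyson expression from Theorem~\ref{thm:foae}, and collapse the time integral to a single creation operator. The computation of $\|H_{\rm f}^{1/2}A(t)\|^2$ is correct and reproduces the right-hand side of~(\ref{equ:rad1}). But the treatment of the cross term is where the proposal does not close. You want $Q_1R(t)=\Or(\epsi^3)$ and justify it by noting that the second Dyson iterate of $h_{\rm rad}$ lands in the zero-- and two--photon sectors. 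That is true for the \emph{formal} Dyson series of $\dot\Phi_2$, but the remainder in Theorem~\ref{thm:foae} is not only the tail of that series: it also contains the error from the integration by parts on the $\Phi(f_\sigma)$--part of $\dot\Phi_2$ inside Proposition~\ref{lem:letzte}, and that term \emph{does} create a one--photon contribution of size $\Or(\epsi^2\ln(\epsi^{-1}))$. With the honest bound $\|H_{\rm f}^{1/2}Q_1R(t)\|=\Or(\epsi^2\ln(\epsi^{-1}))$ your Cauchy--Schwarz only gives a cross term of order $\epsi^{3/2}\cdot\epsi^2\ln(\epsi^{-1})=\Or(\epsi^{7/2}\ln(\epsi^{-1}))$, which suffices for~(\ref{equ:rad2}) but falls short of the $\Or(\epsi^4(\ln\epsi^{-1})^2)$ asserted in~(\ref{equ:rad1}).

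The idea you are missing is that one should not distribute $H_{\rm f}$ symmetrically. The paper writes $|\langle R_2,H_{\rm f}\phi_{\rm na}(t)\rangle|\le\|R_2\|_\fock\,\|H_{\rm f}\phi_{\rm na}(t)\|_\fock$, puts the full $H_{\rm f}$ on $\phi_{\rm na}$, and shows $\|H_{\rm f}\phi_{\rm na}(t)\|_\fock=\Or(\epsi^2\ln(\epsi^{-1}))$ directly: acting with $H_{\rm f}$ on the Duhamel integral turns $|k|^{-3/2}\hat\varphi$ into $|k|^{-1/2}\hat\varphi$, and one integration by parts against $\ee^{\I|k|s/\epsi}$ (exactly the trick already used in the proof of Proposition~\ref{lem:letzte}) produces the extra factor $\epsi$. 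This gives $\|R_2\|\cdot\|H_{\rm f}\phi_{\rm na}\|=\Or(\epsi^4(\ln\epsi^{-1})^2)$ with no need to peer one Dyson order deeper into the one--photon sector; the ``main technical obstacle'' you anticipated is a detour.

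On~(\ref{equ:rad2}): your scaling $k=\epsi p$ followed by dominated convergence is a genuinely different and arguably more transparent route than the paper's change of variables $\tau'=(s-s')/\epsi$, $\tau=(s+s')/2$ with an explicit Taylor expansion and use of the Schwartz property of $\mathscr{F}|\hat\varphi|^2$. Two small repairs are needed, though: one integration by parts in $s$ gives only $\Or(|p|^{-1})$, so you need one in $s$ and one in $s'$; and each brings down a factor $(1+\epsi|p|)$ from differentiating $\ee^{-\I\epsi p\cdot x_\cdot(\cdot)}$, so the honest bound is $(1+\epsi|p|)^2/|p|^2$ and you must absorb the growth $(1+\epsi|p|)^2$ into the Schwartz decay of $|\hat\varphi(\epsi p)|^2$ before invoking dominated convergence. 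With those corrections the argument and the resulting constant $\tfrac12\cdot(2\pi)^{-3}\cdot\tfrac{4\pi}{3}\cdot\pi=\tfrac1{12\pi}$ are fine.
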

%
%
%
%
%

\section{Mathematical details}\label{sec:preq}

\subsection{Operators on Fock space}
\label{sec:fock}

In this section we recall the   definitions of the operators on Fock space and some of their important properties.
 Proofs of all claims can be found in~\cite{resi2}, Section X.7. 

We call $\mathcal{F}_{\mathrm{fin}} $ the subspace of the Fock space for which $ \psi_m =0$ for all but finitely many~$m$.
The second quantization of a self-adjoint multiplication operator $\omega$ with domain $D(w)\in L^2(\R^3)$ is defined for $ \psi \in \mathcal{F}_{\rmm {fin}} $ with components $ \psi_{m} \in \otimes_{k=1}^m D(w) $   as
\[
 (\D \Gamma(\omega)\psi)_{m}(k_1,\dots,k_m)=\textstyle\sum_{j=1}^m\omega(k_j) \psi_{m}(k_1,\dots,k_m)
\]
and is essentially self-adjoint. In particular, the free field Hamiltonian $H_\mathrm{f}:=\D \Gamma(|k|)$ is self-adjoint on its maximal domain.
The annihilation operator and the creation operator on $ \psi \in \mathcal{F}_{\rmm {fin}} $ are defined by
\begin{align*}
 (a(f)\psi)_{m}(k_1,\dots, k_m)&:=\sqrt{m+1}\textstyle\int_{\mathbb R^3} \D k\, \bar f(k) \psi_{m+1}(k,k_1,\dots,k_m)\,,\\
 (a^\dagger(f)\psi)_m(k_1,\dots,k_m )&:=\frac{1}{\sqrt{m}}\textstyle \sum_{j=1}^m f(k_j) \psi_{m-1}(k_1,\dots,\tilde k_j,\dots ,k_m)\,.
\end{align*}
Here $ \tilde k_j$ means that $k_j $ is omitted. They fulfill the canonical commutation relations
\[
[a(f),a^\dagger(g)]=\langle f,g \rangle_{L^2(\R^3)}\,, \  \
[a(f),a(g)]=0\,, \  \ [a^\dagger(f),a^\dagger(g)]=0\,.
\] 
The creation and annihilation operator define the Segal field operator
\[
\Phi(f):= \tfrac{1}{\sqrt{2}}\big(a(f)+a^\dagger(f)\big)\,.
\] 
It is essentially self-adjoint on  $ \mathcal{F}_{\mathrm{fin}}$. 
The canonical commutation relations imply 
\begin{align}\label{ccr1}
 [\Phi(f),\Phi(g)]&=\im \, \mathrm{Im} \langle f,g \rangle_{L^2(\R^3)} \quad\mbox{and}\quad [\D \Gamma(\omega),\im \Phi(f)]=\Phi(\im \omega f)\,.
\end{align}
 
%
%
\subsection{The Hamiltonian}
As described in the introduction we consider the Hamiltonian
\[
H(t) := H(x(t)) = H_{\rm f} + \Phi( v_0(x(t),k)) =: H_{\rm f} + \Phi_0(t)
\]
with $v_0(x(t),k)$ defined in (\ref{equ:v(x(t),k)}).
As to be stated more precisely in Lemma~\ref{lem:spc} and after, this Hamiltonian has  a ground state only for a neutral system, i.e.\ if $\sum_{j=1}^N e_j = 0$,  otherwise   the bottom of the spectrum is not an eigenvalue.
To deal with this fact we introduce an infrared cutoff $0\leq \sigma\leq 1$ in the interaction and
put $\hat \varphi_\sigma(k) =   \id_{[\sigma,\infty)}(|k|) \hat \varphi (k) $  and
$v_\sigma(x ,k) =   \id_{[\sigma,\infty)}(|k|) v_0(x ,k)$.
The resulting truncated Hamiltonian $H^\sigma(t)= H_{\rm f} + \Phi_\sigma(t) $ is a good
approximation to the original Hamiltonian as will be proven in Proposition~\ref{prop:approxs}. We emphasize that $H^\sigma(t)$ is used only as a tool in the proofs and the final results of Section~2 hold for the  Hamiltonian $H(t)=H^{\sigma =0}(t)$ without infrared cutoff.  

\begin{lem}
\leavevmode
 The operators   $H^{\sigma}(t)$ are self-adjoint on $D(H_{\mathrm f})$ for all $t$ and $\sigma\in[0,1]$.
   The graph norms of   $H^{\sigma}(t)$ are all equivalent  to the one defined by $ H_{\mathrm f}$ uniformly in  $\sigma$.
\end{lem}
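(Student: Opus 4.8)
The plan is to reduce everything to the Kato--Rellich theorem, so that the only real work is an $H_{\rm f}$-relative bound on the field operator $\Phi_\sigma(t)=\Phi(v_\sigma(x(t),k))$ with constants independent of $\sigma$ (and in fact of $t$). First I would recall the standard relative bounds for creation and annihilation operators on Fock space (these are exactly the estimates behind the material quoted from \cite{resi2}, Sec.~X.7): for any $f$ with $f,\,f/\sqrt{|k|}\in L^2(\R^3)$ and $\psi\in D(H_{\rm f}^{1/2})$,
\[
\|a(f)\psi\|\le\Big\|\tfrac{f}{\sqrt{|k|}}\Big\|\,\|H_{\rm f}^{1/2}\psi\|\,,\qquad \|a^\dagger(f)\psi\|\le\Big\|\tfrac{f}{\sqrt{|k|}}\Big\|\,\|H_{\rm f}^{1/2}\psi\|+\|f\|\,\|\psi\|\,,
\]
the second one coming from $\|a^\dagger(f)\psi\|^2=\|a(f)\psi\|^2+\|f\|^2\|\psi\|^2$ via the canonical commutation relations. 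Hence $\|\Phi(f)\psi\|\le\sqrt{2}\,\|f/\sqrt{|k|}\|\,\|H_{\rm f}^{1/2}\psi\|+\tfrac{1}{\sqrt2}\|f\|\,\|\psi\|$, and combining this with the elementary interpolation inequality $\|H_{\rm f}^{1/2}\psi\|\le\delta\|H_{\rm f}\psi\|+C_\delta\|\psi\|$ (valid for every $\delta>0$) shows that $\Phi(f)$ is infinitesimally $H_{\rm f}$-bounded, with the relative bound governed entirely by $\|f/\sqrt{|k|}\|$.

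Next I would check that the two norms entering this estimate are finite for $f=v_\sigma(x(t),k)$ and, crucially, bounded uniformly in $\sigma\in[0,1]$ and $t\in\R$. Since $\id_{[\sigma,\infty)}(|k|)\le 1$ and $|\ee^{\im k\cdot x_j(t)}|=1$,
\[
\Big\|\tfrac{v_\sigma(x(t),\cdot)}{\sqrt{|k|}}\Big\|^2\le\int_{\R^3}\frac{|v_0(x(t),k)|^2}{|k|}\,\D k\le\Big(\textstyle\sum_j|e_j|\Big)^2\int_{\R^3}\frac{|\hat\varphi(k)|^2}{|k|^2}\,\D k\,,
\]
\[
\|v_\sigma(x(t),\cdot)\|^2\le\int_{\R^3}|v_0(x(t),k)|^2\,\D k\le\Big(\textstyle\sum_j|e_j|\Big)^2\int_{\R^3}\frac{|\hat\varphi(k)|^2}{|k|}\,\D k\,,
\]
and both integrals on the right converge because $\hat\varphi$ is a bounded Schwartz function while in three dimensions $\int_{|k|\le1}|k|^{-2}\D k<\infty$; the bounds are manifestly independent of $\sigma$ and $t$. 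Consequently, for every $\delta>0$ there is $b_\delta<\infty$, independent of $\sigma$ and $t$, with $\|\Phi_\sigma(t)\psi\|\le\delta\|H_{\rm f}\psi\|+b_\delta\|\psi\|$ on $D(H_{\rm f})$. Since $\Phi_\sigma(t)$ is symmetric on $D(H_{\rm f})$ and $H_{\rm f}$ is self-adjoint, Kato--Rellich gives self-adjointness of $H^\sigma(t)=H_{\rm f}+\Phi_\sigma(t)$ on $D(H_{\rm f})$ for all $t$ and all $\sigma\in[0,1]$.

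For the graph-norm equivalence I would use the same bound with a fixed choice $\delta=\tfrac12$. On the one hand $\|H^\sigma(t)\psi\|\le\tfrac32\|H_{\rm f}\psi\|+b\|\psi\|$; on the other hand $\|H_{\rm f}\psi\|\le\|H^\sigma(t)\psi\|+\tfrac12\|H_{\rm f}\psi\|+b\|\psi\|$, i.e.\ $\|H_{\rm f}\psi\|\le2\|H^\sigma(t)\psi\|+2b\|\psi\|$. Both chains of inequalities have constants independent of $\sigma$ and $t$, which is precisely the asserted equivalence, uniform in $\sigma$, of the graph norm of $H^\sigma(t)$ with that of $H_{\rm f}$.

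\textbf{Main obstacle.} There is no deep difficulty; the one point that needs care is that the relative bound can be made $<1$ (indeed arbitrarily small) \emph{uniformly in the infrared cutoff} $\sigma$. This is handled by observing that the cutoff $\id_{[\sigma,\infty)}(|k|)$ only decreases the relevant $L^2$-norms, so the $\sigma=0$ values dominate, and that $\|v_0/\sqrt{|k|}\|<\infty$ relies exactly on the infrared integrability of $|k|^{-2}$ in $d=3$ together with the boundedness of $\hat\varphi$. The underlying creation/annihilation operator estimates are standard and may simply be quoted.
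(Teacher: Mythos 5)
Your proof is correct and follows essentially the same route as the paper: establish that $\Phi_\sigma(t)$ is infinitesimally $H_{\rm f}$-bounded with relative bound controlled by $\|f\|_{L^2}$ and $\|f/\sqrt{|k|}\|_{L^2}$, check these norms are finite uniformly in $\sigma\in[0,1]$ (and $t$), then invoke Kato--Rellich; the only cosmetic difference is that you derive the $\Phi(f)$ relative bound from the standard $a,a^\dagger$ estimates plus interpolation, whereas the paper quotes the inequality directly from a reference.
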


\begin{proof}
 There exists a standard estimate for $\Phi$ (cf.\ e.g.~\cite{betz} Proposition 1.3.8). Let $\Psi \in $ $D(H_\rmm{f})$ and $b > 0$, then
\[
\norm{\Phi(f) \psi}^2_{\mathcal{F}} \leq b \norm{H_{\mathrm f} \psi}_{\mathcal{F}}^2+ \left(\tfrac{1}{b}\text{$\bigl\|f/\sqrt{|k|}\bigr\|^4_{L^2(\R^3)}$} +2 \norm{f}^2_{{L^2}(\R^3)} \right)\norm{\psi}_{\mathcal{F}}^2\,.
\]
Thus $\Phi(f)$ is infinitesimally bounded with respect to $ H_{\mathrm f}$
if 
\begin{align} \label{ineq11}
\norm{f}_{{L^2}(\R^3)}+ \| f/\sqrt{|k|}\|_{{L^2}(\R^3)} < \infty\,.
\end{align}
Equation \eqref{ineq11} holds uniformly for $f =  v_\sigma(x(t),k)$ and $\sigma \in [0,1]$.  Hence $H^\sigma(t)$ is self-adjoint on $ D(H_{\mathrm f})$ by the Kato-Rellich Theorem (cf.~\cite{resi2}, Theorem X.12)  for all $\sigma \in [0,1]$ and the graph norms are all equivalent.
\end{proof}

\begin{prop} \label{prop:U}
\leavevmode Let $t \mapsto x(t)\in C^n(\R,\R^{3N}) $, $n\in\N$, and equip $D(H_\rmm{f})$ with the graph norm.
Let $\sigma\in[0,1]$, then
\begin{enumerate}
 \item        $ H^\sigma \in C^n_b(\R,\mathcal{L}(D(H_\rmm{f}),\mathcal{F}))$
\item  $H^\sigma$   generates a  strongly continuous  unitary   evolution family denoted by   $U_{H^{\sigma}}(t,t_0)$. Moreover,
  $U_{H^{\sigma}}(t,t_0)$ is a   bounded operator  on  $D(H_{\rmm{f}})$ with 
\begin{equation}\label{UDNorm}
  \sup_{s\in[t_0,t]} \norm{U_{H^{\sigma}}(s,t_0)}_{\mathcal{L}(D(H_{\rm f}))} <\infty
\end{equation}
for any $t\in\R$ and uniformly in $\sigma$.
\end{enumerate}
\end{prop}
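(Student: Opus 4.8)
The plan is to reduce both statements to the standard theory of linear evolution equations with a time-independent domain (see~\cite{Kato}), so that the only genuine work is verifying the hypotheses of that theory and keeping track of the fact that every constant can be chosen uniformly in the infrared cutoff $\sigma\in[0,1]$, the case $\sigma=0$ included.

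For part~(1) I would write $H^\sigma(t)=H_{\rm f}+\Phi_\sigma(t)$ with $\Phi_\sigma(t):=\Phi(v_\sigma(x(t),\cdot))$. The summand $H_{\rm f}$ is $t$-independent and, by the very definition of the graph norm, a bounded operator from $D(H_{\rm f})$ to $\mathcal F$, hence trivially $C^\infty$ into $\mathcal L(D(H_{\rm f}),\mathcal F)$. For the interaction part I would differentiate $v_\sigma(x(t),k)=\id_{[\sigma,\infty)}(|k|)\sum_{j=1}^N e_j|k|^{-1/2}\hat\varphi(k)\,\ee^{\im k\cdot x_j(t)}$ under the integral sign: each $\partial_t^\ell$ produces a factor that is a polynomial in $k$ with coefficients built from $\dot x_j(t),\dots,x_j^{(\ell)}(t)$, times $\ee^{\im k\cdot x_j(t)}$, so that $|\partial_t^\ell v_\sigma(x(t),k)|\le C_\ell(t)\,|\hat\varphi(k)|\,(1+|k|)^\ell/|k|^{1/2}$ with $C_\ell$ continuous and bounded on compact time intervals (this is all the later statements ever need; if one insists on global boundedness one has to add that the trajectories have bounded derivatives). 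Since $\hat\varphi\in\mathcal S(\R^3)$, this bound --- and also the bound divided by one further factor $\sqrt{|k|}$ --- lies in $L^2(\R^3)$, uniformly in $\sigma$; the only delicate spot is $k=0$, where $\hat\varphi(0)\ne0$ makes the integrand behave like $|k|^{-2}$, which is integrable in three dimensions, and cutting off only decreases these $L^2$-norms. Hence every $\partial_t^\ell v_\sigma(x(t),\cdot)$ satisfies condition~\eqref{ineq11} uniformly in $\sigma$, and the $\Phi$-estimate recalled in the proof of the preceding Lemma shows that $t\mapsto\Phi(\partial_t^\ell v_\sigma(x(t),\cdot))$ is continuous and locally bounded into $\mathcal L(D(H_{\rm f}),\mathcal F)$, uniformly in $\sigma$. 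Finally, using linearity of $\Phi$ together with dominated convergence to pass from pointwise difference quotients of $v_\sigma(x(\cdot),k)$ (dominated by the $\ell=1$ bound via the mean value theorem, and similarly at higher order) to convergence in $\mathcal L(D(H_{\rm f}),\mathcal F)$, one gets that $H^\sigma$ is $n$ times continuously differentiable into $\mathcal L(D(H_{\rm f}),\mathcal F)$ with all derivatives bounded on compact time intervals, uniformly in $\sigma$.

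For part~(2), the Lemma proved just above supplies the structural input: $\{H^\sigma(t)\}$ is a family of self-adjoint operators on the fixed domain $D(H_{\rm f})$, with graph norms uniformly (in $t$ and $\sigma$) equivalent to $\|\cdot\|_{D(H_{\rm f})}$. Choosing $b=\tfrac12$ in the $\Phi$-estimate also gives a uniform lower bound $H^\sigma(t)\ge\tfrac12 H_{\rm f}-c\ge-c$, with $c$ independent of $\sigma\in[0,1]$ and bounded on compact time sets. Together with part~(1), the shifted operators $H^\sigma(t)+c+1$ thus form a family of uniformly semibounded self-adjoint operators with $t$-independent domain, depending $C^1$ (in fact $C^n$) on $t$ in $\mathcal L(D(H_{\rm f}),\mathcal F)$; I would then invoke the cited abstract theorem to obtain the strongly continuous unitary propagator $U_{H^\sigma}(t,t_0)$ that leaves $D(H_{\rm f})$ invariant, solves $\im\epsi\,\partial_t U_{H^\sigma}(t,t_0)\psi=H^\sigma(t)U_{H^\sigma}(t,t_0)\psi$ on $D(H_{\rm f})$, and is bounded on $D(H_{\rm f})$ locally uniformly in time. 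Since every quantity entering that construction --- the semibound $c$, the graph-norm-equivalence constants, and the $C^1$-norm of $t\mapsto H^\sigma(t)$ --- has been arranged to be independent of $\sigma$, the bound~\eqref{UDNorm} comes out uniform in $\sigma$. As a more hands-on alternative for~\eqref{UDNorm} I would run the energy estimate: for $\psi_0\in D(H_{\rm f})$ and $\psi(t)=U_{H^\sigma}(t,t_0)\psi_0$,
\[
\tfrac{\D}{\D t}\norm{H^\sigma(t)\psi(t)}^2 \;=\; 2\,\Re\big\langle H^\sigma(t)\psi(t),\,\big(\partial_t H^\sigma(t)\big)\psi(t)\big\rangle\,,
\]
because the contribution of $\partial_t\psi(t)=-\tfrac{\im}{\epsi}H^\sigma(t)\psi(t)$ is purely imaginary and cancels; as $\partial_t H^\sigma(t)=\Phi(\partial_t v_\sigma(x(t),\cdot))$ is $H_{\rm f}$-bounded uniformly in $\sigma$ by part~(1), the right-hand side is $\le C(\norm{H^\sigma(t)\psi(t)}^2+\norm{\psi_0}^2)$ with $C$ uniform in $\sigma$ and locally bounded in $t$, and Grönwall together with one more use of the uniform graph-norm equivalence closes the estimate.

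The main obstacle is not conceptual but one of bookkeeping: securing uniformity all the way down to $\sigma=0$, that is, making sure the infrared cutoff is nowhere secretly used to control $v_\sigma$ near the origin (the three-dimensional volume element already absorbs the $|k|^{-1}$ singularity of $v_0(x(t),\cdot)/\sqrt{|k|}$), so that all constants produced both by the Kato-type construction and by the energy estimate stay bounded for $\sigma\in[0,1]$. A minor technical point I would still have to dispatch is the rigorous justification of the differentiation in the energy estimate --- that $\psi(t)$ remains in $D\big(H^\sigma(t)^{3/2}\big)$ when $\psi_0$ is taken from a suitable core of $D(H_{\rm f})$ --- which is handled in the usual way by a regularization/density argument, or simply by noting that with $n\ge2$ the abstract theorem already makes $t\mapsto U_{H^\sigma}(t,t_0)\psi_0$ smooth enough on such a core.
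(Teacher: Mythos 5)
Your proof is correct and follows the same route as the paper's: part (1) reduces to verifying the infinitesimal-boundedness criterion~\eqref{ineq11} for the $t$-derivatives of $v_\sigma$ uniformly in $\sigma$, and part (2) then invokes the standard Kato-type evolution theory (the paper cites~\cite{resi2}, Theorem~X.70, for existence of the propagator and~\cite{Kato}, Theorem~5.1(e), for the $\mathcal{L}(D(H_{\rm f}))$-bound, which is exactly the abstract input you anticipate). Your alternative Gr\"onwall energy estimate is a reasonable self-contained substitute for the~\cite{Kato} citation, modulo the density/regularization step you already flag.
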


\begin{proof}
According to \eqref{ineq11} we need to show that 
 \[
 \norm {\tfrac{\D^n}{\D t^n} v_0(x(t),k)} + \norm {\tfrac{\D^n}{\D t^n} v_0(x(t),k)/\sqrt{|k|}} < \infty \,,
 \] 
 which obviously holds under our assumptions.
That $H$ and $H^\sigma$ generate unitary evolution families follows from the first statement and  
 the general result about contraction semigroups (e.g.~\cite{resi2}, Theorem X.70).
 The norm bound  $  U_{H^{\sigma}}(t,t_0)$  in $\mathcal{L}(D(H_{\rm f}))$ is less known and was proved in Theorem~5.1.(e) of~\cite{Kato}.
\end{proof}

\begin{lem}\label{lem:spc}
Let $K =H_{\mathrm f}+\Phi(z( k))$ with  $z(k)\in L^2(\R^3)$ such that
\[
 \alpha := -\tfrac{1}{2}\textstyle\int_{\R^3} \D k\, \frac{\betr{z( k)}^2}{|k|} < \infty  \,.
\]
Then the spectrum of $K $ is given by $[\alpha ,\infty)$ and the infimum of the spectrum  $\alpha $  is an eigenvalue if and only if
\[
\textstyle \int_{\R^3} \D k\, \frac{\betr{z( z)}^2}{|k|^2} < \infty \,. 
\]
In this case  the unitary operator 
$ V  = \ee^{-\im \Phi(\im z( k)/|k| )}$ is well-defined   and 
$ K =V^* H_{\mathrm f}V +\alpha$.
\end{lem}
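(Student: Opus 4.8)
The plan is to analyze the operator $K = H_{\mathrm f} + \Phi(z(k))$ by completing the square, which is the standard Bogoliubov-type argument for a linearly coupled field. First I would observe that formally one wants to shift the field by the classical solution $-z(k)/|k|$; this is implemented by the Weyl operator $W = \ee^{-\im\Phi(\im z(k)/|k|)}$ provided $z/|k|\in L^2(\R^3)$, i.e.\ provided $\int |z(k)|^2/|k|^2\,\D k<\infty$. Using the commutation relations (\ref{ccr1}), namely $[\D\Gamma(\omega),\im\Phi(f)]=\Phi(\im\omega f)$ with $\omega(k)=|k|$ and $f=\im z(k)/|k|$, one computes $W^* H_{\mathrm f} W = H_{\mathrm f} + \Phi(\im|k|\cdot \im z(k)/|k|) + \tfrac12[\ldots]$; the single commutator produces $-\Phi(z(k))$ with the wrong sign, so in fact one checks that $W H_{\mathrm f} W^* = H_{\mathrm f} - \Phi(z(k)) + \text{const}$, and the constant is the double commutator term $-\tfrac12\langle z/|k|, z\rangle = \alpha$. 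Hence $V^* H_{\mathrm f} V + \alpha = H_{\mathrm f}+\Phi(z(k)) = K$ with $V=W$; this simultaneously shows $K$ is unitarily equivalent to $H_{\mathrm f}+\alpha$, so $\sigma(K)=\sigma(H_{\mathrm f})+\alpha=[\alpha,\infty)$ and $\alpha$ is an eigenvalue (with eigenvector $V^*\Omega_0$) whenever the infrared condition $\int|z(k)|^2/|k|^2\,\D k<\infty$ holds.

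For the case where the infrared condition \emph{fails}, i.e.\ $\int |z(k)|^2/|k|^2\,\D k=\infty$, I would still want to prove $\sigma(K)=[\alpha,\infty)$ and that $\alpha$ is \emph{not} an eigenvalue. The natural route is an infrared-cutoff approximation: for $\kappa>0$ set $z_\kappa(k)=\id_{[\kappa,\infty)}(|k|)z(k)$, so $K_\kappa = H_{\mathrm f}+\Phi(z_\kappa)$ falls under the solvable case above with $\alpha_\kappa = -\tfrac12\int_{|k|\ge\kappa}|z(k)|^2/|k|\,\D k\to\alpha$ as $\kappa\to0$. Since $\Phi(z)-\Phi(z_\kappa)=\Phi(\id_{[0,\kappa)}z)$ is $H_{\mathrm f}$-bounded with relative bound and additive constant both tending to $0$ as $\kappa\to0$ (by the standard estimate in the preceding lemma, using $\|\id_{[0,\kappa)}z\|_{L^2}\to0$ and $\|\id_{[0,\kappa)}z/\sqrt{|k|}\|_{L^2}\to0$), the operators $K_\kappa\to K$ in norm-resolvent sense, hence $\sigma(K_\kappa)\to\sigma(K)$ in the appropriate sense, giving $\sigma(K)=[\alpha,\infty)$. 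That $\alpha$ is not an eigenvalue when the integral diverges is the delicate part; the cleanest argument is to suppose $K\psi=\alpha\psi$, apply the would-be dressing formally, or more robustly use a pull-through / commutator formula for $a(k)\psi$ in terms of $\psi$ — one gets $a(k)\psi = -\tfrac{z(k)}{|k|}\psi + (\text{bounded in }k)$, forcing $\|z/|k|\|_{L^2}<\infty$, a contradiction; alternatively one shows the ground states $\psi_\kappa=V_\kappa^*\Omega_0$ of $K_\kappa$ converge weakly to $0$ because $\|a(f)\psi_\kappa\| = |\langle f, z_\kappa/|k|\rangle|\,\|\psi_\kappa\|$ together with $\langle\psi_\kappa,\psi_{\kappa'}\rangle = \exp(-\tfrac14\|z_\kappa/|k| - z_{\kappa'}/|k|\|^2 + \I(\ldots))\to0$.

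I expect the main obstacle to be precisely the \emph{non-}existence of the eigenvalue in the infrared-singular regime: the algebraic identity $K = V^*H_{\mathrm f}V+\alpha$ and the spectral computation $\sigma(K)=[\alpha,\infty)$ are routine Weyl-calculus manipulations once one is careful that $W$ is a well-defined unitary (which needs exactly $z/|k|\in L^2$), but establishing rigorously that no eigenvector exists when $z/|k|\notin L^2$ requires a genuine argument — either a pull-through formula controlling $a(k)\psi$ pointwise in $k$, or the weak-convergence-to-zero argument for the approximating ground states above, or an explicit lower bound showing $\inf\mathrm{spec}$ is not attained. Since the lemma as stated only asserts the equivalence "(eigenvalue) $\iff$ (integral finite)" and then only uses the forward direction (constructing $V$) in the sequel, I would prioritize the clean completing-the-square computation and the norm-resolvent approximation for the spectrum, and handle the non-eigenvalue direction via the $a(k)\psi$ pull-through identity, which also cleanly exhibits why neutrality ($\sum e_j=0$, making $z(k)\sim|k|^{-1/2}\cdot O(|k|)$ near $k=0$) is exactly what restores the $L^2$ condition.
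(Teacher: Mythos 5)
The paper does not actually prove this lemma: its ``proof'' is a one-line citation to Propositions 3.10 and 3.13 of Derezi\'nski's paper on Van Hove Hamiltonians~\cite{dere}. Your proposal, by contrast, is a genuine proof and it follows what is essentially the standard treatment of the (exactly solvable) Van Hove model, which is presumably also what Derezi\'nski does. So in spirit the two are aligned; the difference is that you supply the argument while the paper delegates it.

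A few remarks on the substance. The completing-the-square computation is correct in its conclusion, namely $V^{*}H_{\mathrm f}V+\alpha=K$, but the intermediate lines flip signs: with $V=W=\ee^{-\im\Phi(\im z/|k|)}$ one has $W^{*}H_{\mathrm f}W=H_{\mathrm f}+\Phi(z)-\alpha$ (single commutator $+\Phi(z)$, double commutator $-\alpha$), so there is no ``wrong sign'' to repair, and correspondingly $WH_{\mathrm f}W^{*}=H_{\mathrm f}-\Phi(z)-\alpha$, not $+\alpha$ as you wrote; worth redoing once cleanly. The norm-resolvent argument for $\sigma(K)=[\alpha,\infty)$ in the infrared-singular case is sound, and you correctly identify the non-eigenvalue direction as the only genuinely delicate point. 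Of your two suggested routes there, the pull-through one can be closed as follows: from $K\psi=\alpha\psi$ and $K\ge\alpha$ one derives $a(k)\psi=-\tfrac{z(k)}{\sqrt{2}\,|k|}\,\psi$ for a.e.\ $k$, which forces $\psi_{n}(k_{1},\dots,k_{n})=\tfrac{1}{\sqrt{n!}}\prod_{i}g(k_{i})\,\psi_{0}$ with $g=-z/(\sqrt2\,|k|)$, hence $\|\psi\|^{2}=|\psi_{0}|^{2}\,\ee^{\|g\|_{L^{2}}^{2}}$; if $\|g\|_{L^{2}}=\infty$ this is incompatible with $\psi\in\fock$, $\psi\neq0$. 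The weak-convergence argument via the coherent-state overlap $\langle\psi_{\kappa},\psi_{\kappa'}\rangle$ also works but needs an extra step to turn ``approximate ground states go weakly to zero'' into ``no ground state exists'', since $\alpha$ is not an isolated eigenvalue and norm-resolvent convergence alone does not give convergence of eigenvectors.
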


\begin{proof}
The first part is Proposition 3.10 of~\cite{dere} and the second part is Proposition 3.13 of the same paper.
\end{proof}

\begin{cor} The spectrum of $H(t)$ is
 $\sigma (H(t))=\sigma_{\mathrm{ac}}(H(t))=[E(t),\infty)$ with $E(t)$ as in (\ref{EDef}).
\end{cor}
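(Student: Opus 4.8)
The corollary is a direct application of Lemma~\ref{lem:spc} to $K=H(t)=H_{\mathrm f}+\Phi(v_0(x(t),\cdot))$, i.e.\ with coupling function $z(k)=v_0(x(t),k)$ as in (\ref{equ:v(x(t),k)}). The entire content is the verification of the two hypotheses of that lemma, which reduces to two infrared/ultraviolet integrability estimates.

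First I would check that $z(\cdot)=v_0(x(t),\cdot)\in L^2(\R^3)$. From (\ref{equ:v(x(t),k)}) and $|\E^{\im k\cdot x_j(t)}|=1$ one has the pointwise bound $|v_0(x(t),k)|\leq\big(\sum_{j=1}^N|e_j|\big)\,|\hat\varphi(k)|/|k|^{1/2}$, so $|z(k)|^2\leq C\,|\hat\varphi(k)|^2/|k|$. Since $\varphi\in S(\R^3)$, $\hat\varphi$ is bounded and rapidly decaying, and $|k|^{-1}$ is locally integrable in three dimensions (the volume element $|k|^2\,\D|k|$ absorbs the singularity); hence $\int_{\R^3}|z|^2\,\D k<\infty$. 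Second, and this is the only place where $d=3$ is genuinely used, I would check that $\alpha:=-\tfrac12\int_{\R^3}|z(k)|^2/|k|\,\D k<\infty$. The same bound gives $|z(k)|^2/|k|\leq C\,|\hat\varphi(k)|^2/|k|^2$, and $\int_{|k|\leq1}|k|^{-2}\,\D k=4\pi\int_0^1\D r<\infty$, while the Schwartz decay of $\hat\varphi$ handles $|k|\geq1$. Comparing with (\ref{EDef}) then shows $\alpha=E(t)$.

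With the hypotheses in place, Lemma~\ref{lem:spc} immediately yields $\sigma(H(t))=[E(t),\infty)$. For absolute continuity I would again invoke the same lemma together with Proposition~3.10 of \cite{dere} on which it is based: the spectrum on $(E(t),\infty)$ is purely absolutely continuous, and since $\sigma_{\mathrm{ac}}$ is closed and contains $(E(t),\infty)$ this gives $\sigma_{\mathrm{ac}}(H(t))=[E(t),\infty)$. In the non-neutral case one can add, using the eigenvalue criterion of Lemma~\ref{lem:spc}, that $E(t)$ is not an eigenvalue: there $v_0(x(t),k)\to\hat\varphi(0)\sum_j e_j/|k|^{1/2}\neq0$ as $k\to0$, so $|z(k)|^2/|k|^2\sim|k|^{-3}$ near the origin and $\int|z|^2/|k|^2\,\D k$ diverges.

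I expect no serious obstacle here: the argument is a citation of Lemma~\ref{lem:spc} preceded by a bookkeeping check of infrared exponents. The one point requiring (minimal) care is that the $|k|^{-1}$ inside $v_0$ combined with the $|k|^{-1}$ in the definition of $\alpha$ produces a $|k|^{-2}$ singularity at the origin, which is exactly borderline integrable in three spatial dimensions — so it is precisely $d=3$ that makes the statement go through. Everything else, including the identification $\alpha=E(t)$, is routine.
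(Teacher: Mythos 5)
Your proposal is correct and matches the paper's (implicit) argument: the corollary is stated without proof because it is the specialization of Lemma~\ref{lem:spc} to $z(\cdot)=v_0(x(t),\cdot)$, together with the elementary infrared estimates you supply, which indeed hinge on $d=3$. One small point worth flagging: the claim $\sigma_{\mathrm{ac}}(H(t))=[E(t),\infty)$ is not literally contained in the statement of Lemma~\ref{lem:spc}, so you are right that one has to reach back to Proposition~3.10 of \cite{dere} (on which that lemma rests) for the absolutely-continuous character of the spectrum in the non-neutral case; in the neutral case it also follows directly from the unitary equivalence $H(t)=V^*H_{\mathrm f}V+E(t)$ and the known spectral decomposition of $H_{\mathrm f}$.
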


\begin{cor}\label{cor:spc}
 For $\sigma>0$, the infrared regularized   Hamiltonian  $H^{\sigma}(t)$ can be written as
\[
 H^{\sigma}(t)= V^*_\sigma(t)H_{\mathrm f} V_\sigma(t)+E_\sigma(t)\,,
\]
where 
\[
 V_\sigma(t):=\ee^{-\im \Phi(\im v_\sigma(x(t),k)/|k| )}\,.
\]
Its only eigenvalue with eigenvector $\Omega_\sigma(t):= V_\sigma^*(t) \Omega_0   $ is
\begin{equation}\label{Esigma}
E_\sigma(t) = E(t)+\mathcal{O}(\sigma)\,.
\end{equation}
\end{cor}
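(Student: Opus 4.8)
The statement is essentially an application of Lemma~\ref{lem:spc} to the operator $K = H^\sigma(t)$, with $z(k) = v_\sigma(x(t),k)$, together with a perturbative estimate showing that the ground-state energy $E_\sigma(t)$ differs from $E(t)$ by a term of order $\sigma$. So the plan splits into two parts: first invoke the lemma to get the structural identity $H^\sigma(t) = V_\sigma^*(t) H_{\mathrm f} V_\sigma(t) + E_\sigma(t)$ together with the fact that $E_\sigma(t)$ is the unique eigenvalue with eigenvector $\Omega_\sigma(t) = V_\sigma^*(t)\Omega_0$; second, estimate $E_\sigma(t) - E(t)$.

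For the first part, I would check the two hypotheses of Lemma~\ref{lem:spc} for $z = v_\sigma(x(t),k)$. Square-integrability of $v_\sigma(x(t),\cdot)$ holds since it is $\id_{[\sigma,\infty)}(|k|)$ times $\hat\varphi(k)/|k|^{1/2}$ summed with bounded phases, and $\hat\varphi \in S(\R^3)$ so $\hat\varphi(k)/|k|^{1/2} \in L^2$ already for $\sigma = 0$. The quantity $\alpha$ in the lemma is, by definition, exactly $E_\sigma(t) := -\tfrac12 \int \D k\, |v_\sigma(x(t),k)|^2/|k|$, which is finite. The second hypothesis, $\int \D k\, |v_\sigma(x(t),k)|^2/|k|^2 < \infty$, is where the infrared cutoff $\sigma > 0$ is used: near $k = 0$ the integrand behaves like $|\hat\varphi(0)|^2/|k|^3$ (after the phases cancel in the diagonal terms — note that for a charged system the off-diagonal cross terms do \emph{not} cancel at $k=0$), which is not integrable in three dimensions, but the cutoff $\id_{[\sigma,\infty)}(|k|)$ removes the singular region, so the integral is finite (of order $\sigma^{-1}$). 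Hence Lemma~\ref{lem:spc} applies verbatim and yields all the structural claims, with the unitary $V_\sigma(t) = \ee^{-\im\Phi(\im v_\sigma(x(t),k)/|k|)}$ exactly as stated.

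For the second part, the estimate \eqref{Esigma}, I would simply write
\[
E_\sigma(t) - E(t) = -\tfrac12 \int_{\R^3} \D k\, \frac{|v_\sigma(x(t),k)|^2 - |v_0(x(t),k)|^2}{|k|} = \tfrac12 \int_{|k|<\sigma} \D k\, \frac{|v_0(x(t),k)|^2}{|k|}\,,
\]
and bound this: on $|k| < \sigma$ one has $|v_0(x(t),k)|^2 \leq C/|k|$ with $C$ uniform in $t$ on bounded time intervals (since $|\hat\varphi(k)| \leq |\hat\varphi(0)| + O(|k|)$ and the phases are bounded), so the integrand is $O(|k|^{-2})$ and the integral over the ball of radius $\sigma$ is $O(\sigma)$. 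This gives \eqref{Esigma}.

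I do not expect any real obstacle here — the corollary is a bookkeeping consequence of Lemma~\ref{lem:spc}. The only point requiring a little care is the verification that the eigenvalue condition $\int |v_\sigma|^2/|k|^2 < \infty$ genuinely fails at $\sigma = 0$ for charged systems but holds for every $\sigma > 0$; this is precisely the reason the infrared cutoff was introduced, and it is worth noting explicitly that the uniformity of the $O(\sigma)$ bound in $t$ over compact time intervals follows from the $C^0$ (indeed $C^4$) dependence of $x(t)$ and the Schwartz property of $\varphi$.
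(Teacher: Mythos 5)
Your proof is correct and follows the same route as the paper: the corollary is a direct application of Lemma~\ref{lem:spc} to $K = H^\sigma(t)$ with $z = v_\sigma(x(t),\cdot)$, combined with the estimate $E_\sigma(t) - E(t) = \tfrac12\int_{|k|<\sigma}\betr{v_0(x(t),k)}^2/|k|\,\D k = \Or(\sigma)$ obtained by bounding the integrand by $C/|k|^2$ uniformly on compact time intervals. One cosmetic slip: for a charged system $\int\betr{v_\sigma}^2/|k|^2\,\D k$ is $\Or(\ln\sigma^{-1})$ rather than $\Or(\sigma^{-1})$, but since only finiteness is needed to apply the lemma this does not affect the argument.
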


\begin{proof}
 The last statement is an immediate consequence of Lemma~\ref{lem:spc} and the fact that
\[
 \int_{\R^3}  \D k\,   \frac{ \betr{v_\sigma(x(t),k)}^2}{|k|} =  \int_{\R^3}  \D k\,   \frac{\betr{v_0(x(t),k)}^2}{|k|}+\mathcal{O}(\sigma) \,.   \qedhere
\]
\end{proof}

\subsection{Superadiabatic perturbation and the dressing operator}
In this section we introduce the perturbed projections and the transformation which ``diagonalizes`` the infrared regularized Hamiltonian up to $\mathcal{O}(\epsi^2)$.
As explained in the introduction the idea is to modify the adiabatic projections
$P_0^m(t) := V_\sigma^*(t) Q_m V_\sigma(t)$ on the $m$-free-photons subspaces in such a way, that the modified superadiabatic projections 
$P_1^{m,\epsi}(t)$ are higher order adiabatic invariants, i.e.\ that 
\[
 \left[\im \epsi \tfrac{\D }{\D t}-H(t),P_1^{m,\epsi}(t) \right]=\mathcal{O}(\epsi^{2}) 
\]%
holds. This construction is by now standard, see~\cite{Nenciu}, and yields
\[
  P_1^{m,\epsi}(t) = V_\sigma^*(t) Q_m V_\sigma(t)+\im \epsi V_\sigma^*(t)[Q_m, \Phi_2(t)]  V_\sigma(t) + \Or(\epsi^2)\,,
\]
where we introduce  
  the shorthands 
\begin{align*}
\Phi_1(t) & := \tfrac{\D}{\D t} \Phi\Big(\frac{\im v_\sigma(x(t),k)}{|k|}\Big) = \Phi \Big(- |k| \sum_{j=1}^N  \frac {e_j \hat \varphi_\sigma(k)}{|k|^{{3 /2}}}\;\ee^{ \im k \cdot x_j(t)} \langle \kappa,\dot x_j(t)\rangle\Big)=:\Phi(z_1(t))  \\
\Phi_2(t)&:= -\im [\D \Gamma(|k|^{-1}), \Phi_1] = \Phi\Big(\im  \sum_{j=1}^N  \frac {e_j \hat \varphi_\sigma(k)}{|k|^{{3 /2}}}\;\ee^{ \im k \cdot x_j(t)} \langle \kappa,\dot x_j(t)\rangle\Big)=:\Phi(z_2(t))\,.
\end{align*}
Here $\kappa$ is the unit vector in the direction of $k$. The canonical commutation relations link $\Phi_1$ and $\Phi_2$ in the following way
\[
 \im [\D \Gamma(|k|),\Phi_2]=\Phi_1\,.
\]
Now the corresponding  dressing operator $ V^\epsi_\sigma(t)$ that maps $ P_1^{m,\epsi}(t)$ to $Q_m$ then needs to have the expansion
\[
 V^\epsi_\sigma(t) = (\id+ \im \epsi \Phi_2(t) + \Or(\epsi^2)) V_\sigma(t) \quad \mathrm{and} \quad V^\epsi_\sigma(t)^* = V^*_\sigma(t)(\id- \im \epsi  \Phi_2(t)+ \Or(\epsi^2))\,. 
\]
This suggests to define 
\[
V^\epsi_\sigma(t) := \E^{\I\epsi\Phi_2(t)} V_\sigma(t)\,.
\]

\begin{lem}\label{lem:V}
For $\sigma\in(0,1]$  and $\epsi\in[0,1]$ the operator 
 $V^\epsi_\sigma(t)$ is unitary, belongs to $\mathcal{L}(D(H_\rmm{f}))$ and satisfies
 \[
 \norm{V^\epsi_\sigma(t)}_{\mathcal{L}(D(H_\rmm{f}))} +  \norm{V^\epsi_\sigma(t)^*}_{\mathcal{L}(D(H_\rmm{f}))}\leq C <\infty
\]
uniformly in  $\sigma$ and $\epsi$.\\[2mm]
The map $\R\to \mathcal{L}(D(H_{\rm f}),\mathcal{F})$,  $t\mapsto V^\epsi_\sigma(t)$ 
is differentiable and
\begin{equation}\label{abl}
\tfrac{\D}{\D t} V^\epsi_\sigma  \;=\; \left(-\I\Phi_1 +\I\epsi \dot \Phi_2 + \epsi[\Phi_2,\Phi_1]-\tfrac{\epsi^2}{2} [\Phi_2,\dot\Phi_2] \right)\, V^\epsi_\sigma \,.
\end{equation}
\end{lem}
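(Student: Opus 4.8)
The plan is to establish the three assertions separately. For unitarity, note that $V^\epsi_\sigma(t) = \E^{\I\epsi\Phi_2(t)} V_\sigma(t)$ is a product of two unitaries: $V_\sigma(t)$ is unitary by Corollary~\ref{cor:spc} (it is the Weyl-type operator $\E^{-\I\Phi(\I v_\sigma/|k|)}$), and $\E^{\I\epsi\Phi_2(t)}$ is unitary because $\Phi_2(t) = \Phi(z_2(t))$ is essentially self-adjoint on $\mathcal{F}_{\rm fin}$, which holds since $z_2(t)\in L^2(\R^3)$ — here one uses that $\sigma>0$ and $\hat\varphi\in S(\R^3)$ guarantee $z_2(t) = \I\sum_j e_j\hat\varphi_\sigma(k)|k|^{-3/2}\E^{\I k\cdot x_j(t)}\langle\kappa,\dot x_j(t)\rangle$ is square-integrable (the $|k|^{-3/2}$ singularity is cut off at $|k|=\sigma$).

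For the mapping property into $\mathcal{L}(D(H_{\rm f}))$ with the stated uniform bound, I would use the standard commutator technique: for a Weyl operator $\E^{\I\Phi(f)}$ one has, at least formally, $\E^{-\I\Phi(f)} H_{\rm f} \E^{\I\Phi(f)} = H_{\rm f} + \Phi(\I|k|f) + \tfrac12\|\sqrt{|k|}f\|^2$, using the second commutation relation in~(\ref{ccr1}), $[\D\Gamma(|k|),\I\Phi(f)] = \Phi(\I|k|f)$. Applying this to $f = \epsi z_2(t)$ and to $f = \I v_\sigma(x(t),k)/|k|$ and combining, one finds that $V^\epsi_\sigma(t)^* H_{\rm f} V^\epsi_\sigma(t) - H_{\rm f}$ is a field operator plus constant whose coefficient functions have $L^2$-norm and $\|\cdot/\sqrt{|k|}\|$-norm bounded uniformly in $\sigma\in(0,1]$ and $\epsi\in[0,1]$: the key point is that $|k|z_2(t)$ and $|k|\cdot(v_\sigma/|k|) = v_\sigma$ both satisfy~(\ref{ineq11}) uniformly in $\sigma$ because the surviving powers of $|k|$ are non-negative (no infrared singularity remains after multiplication by $|k|$). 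By the $\Phi$-bound quoted in the proof of the first Lemma, this field operator is infinitesimally $H_{\rm f}$-bounded uniformly, so the graph norm of $H_{\rm f}$ is comparable to that of $V^\epsi_\sigma(t)^*H_{\rm f}V^\epsi_\sigma(t)$, which gives the claimed norm bound on $V^\epsi_\sigma(t)$ in $\mathcal{L}(D(H_{\rm f}))$; the bound on the adjoint is identical by symmetry. To make this rigorous rather than formal, one works on the core $\mathcal{F}_{\rm fin}$ (or $D(H_{\rm f})\cap\mathcal{F}_{\rm fin}$) and uses that $\E^{\I\epsi\Phi_2(t)}$ and $V_\sigma(t)$ preserve a suitable core, then extends by density.

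For the differentiability and the formula~(\ref{abl}), I would differentiate the product $\E^{\I\epsi\Phi_2(t)}V_\sigma(t)$ by the Leibniz rule as a map into $\mathcal{L}(D(H_{\rm f}),\mathcal{F})$. The derivative of $V_\sigma(t) = \E^{-\I\Phi(\I v_\sigma/|k|)}$ is $-\I\Phi_1(t)V_\sigma(t)$ by definition of $\Phi_1$ (this is the $\tfrac{\D}{\D t}$ of the exponent; since $\Phi(\I v_\sigma/|k|)$ commutes with its own time-derivative only up to the scalar in~(\ref{ccr1}), one should note that commutator is a $c$-number and hence the naive Duhamel formula for the derivative of $\E^{A(t)}$ holds with no ordering correction on the operator part — or, more carefully, one keeps track that $[\Phi(\I v_\sigma/|k|),\Phi_1]\in\I\R$ and this only shifts an overall phase that one absorbs or that cancels; actually the clean route is to observe $\tfrac{\D}{\D t}\E^{-\I\Phi(g(t))} = \big(-\I\Phi(\dot g(t)) + \tfrac{\I}{2}\Imm\langle g,\dot g\rangle\big)\E^{-\I\Phi(g)}$ and absorb the scalar). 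The derivative of $\E^{\I\epsi\Phi_2(t)}$ is, by the same Duhamel argument, $\big(\I\epsi\dot\Phi_2(t) + \text{correction}\big)\E^{\I\epsi\Phi_2(t)}$, where the correction comes from $[\Phi_2,\dot\Phi_2]$ not vanishing; expanding $\int_0^1 \E^{\I s\epsi\Phi_2}(\I\epsi\dot\Phi_2)\E^{-\I s\epsi\Phi_2}\,\D s\cdot\E^{\I\epsi\Phi_2}$ using $[\Phi_2,\dot\Phi_2]\in\I\R$ (a $c$-number, by~(\ref{ccr1})) gives $\I\epsi\dot\Phi_2 - \tfrac{\epsi^2}{2}[\Phi_2,\dot\Phi_2]$ exactly. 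Finally one must move $\E^{\I\epsi\Phi_2(t)}$ past the $-\I\Phi_1(t)$ coming from $\dot V_\sigma$: conjugation gives $\E^{\I\epsi\Phi_2}(-\I\Phi_1)\E^{-\I\epsi\Phi_2} = -\I\Phi_1 - \I\epsi[\I\Phi_2,\Phi_1] + \cdots$, and since $[\Phi_2,\Phi_1]\in\I\R$ is again a $c$-number the series terminates at first order, yielding exactly the term $\epsi[\Phi_2,\Phi_1]$ in~(\ref{abl}). Collecting the four contributions gives~(\ref{abl}).

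The main obstacle I anticipate is not any single algebraic step — all the relevant commutators $[\Phi_2,\Phi_1]$, $[\Phi_2,\dot\Phi_2]$, and $[\Phi(\I v_\sigma/|k|),\Phi_1]$ are $c$-numbers by the first relation in~(\ref{ccr1}), so every exponential series truncates and there is no genuine operator-ordering difficulty — but rather the functional-analytic bookkeeping: justifying that the Duhamel/Leibniz manipulations and the conjugation identities, which are transparent on $\mathcal{F}_{\rm fin}$, extend to honest identities of bounded operators $D(H_{\rm f})\to\mathcal{F}$ with the asserted uniformity in $\sigma\in(0,1]$ and $\epsi\in[0,1]$. The crucial analytic input making this uniform control possible is that multiplying any of the coefficient functions $v_\sigma/|k|$, $z_2(t)$ by a factor $|k|$ (as happens in every commutator with $H_{\rm f}=\D\Gamma(|k|)$) removes exactly the infrared-dangerous power, leaving functions whose $L^2$ and $\|\cdot/\sqrt{|k|}\|_{L^2}$ norms are bounded uniformly in $\sigma$ by the Schwartz property of $\hat\varphi$ and the $C^4$ assumption on the $x_j$; this is precisely why the statement requires $\sigma>0$ but the bounds are $\sigma$-independent.
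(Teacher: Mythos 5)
Your plan follows essentially the same route as the paper's own proof: unitarity as a product of two Weyl-type unitaries, the $\mathcal{L}(D(H_{\rm f}))$ bound via conjugating $H_{\rm f}$ (equivalently, the paper computes $[H_{\rm f},V_\sigma]$ and $[H_{\rm f},\E^{\I\epsi\Phi_2}]$) and observing that the resulting field operators have $\sigma$-independent infinitesimal $H_{\rm f}$-bounds because the infrared powers of $|k|$ cancel, and the derivative formula~(\ref{abl}) by Leibniz plus truncating Baker--Campbell--Hausdorff/Duhamel expansions using that $[\Phi_2,\Phi_1]$ and $[\Phi_2,\dot\Phi_2]$ are $c$-numbers. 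There is, however, one point you leave genuinely unresolved: when differentiating $V_\sigma=\E^{-\I\Phi(g)}$ with $g=\I v_\sigma/|k|$, you allow for a scalar term $\propto{\rm Im}\langle g,\dot g\rangle$ and say one should ``absorb the scalar or note that it cancels.'' Since~(\ref{abl}) is asserted as an \emph{exact} operator identity with no scalar correction, you cannot merely absorb a nonzero scalar into a phase here; you must show it is zero. The paper does this explicitly: it observes that $\Phi(\I v_\sigma/|k|)$ and $\Phi_1=\dot\Phi$ in fact commute, because the spherical symmetry of $\hat\varphi$ forces $\langle v_\sigma/|k|,\,v_\sigma\,\kappa\cdot\dot x_j\rangle=0$ (the imaginary part of $\langle v_\sigma/|k|,\dot v_\sigma/|k|\rangle$ is an integral of a function odd under $k\mapsto -k$). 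Adding this one verification would close the gap; the remainder of your argument, including the sign bookkeeping in $\E^{\I\epsi\Phi_2}\Phi_1\E^{-\I\epsi\Phi_2}=\Phi_1+\I\epsi[\Phi_2,\Phi_1]$ and the Duhamel computation giving $\I\epsi\dot\Phi_2-\tfrac{\epsi^2}{2}[\Phi_2,\dot\Phi_2]$, is correct and matches the paper.
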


\begin{proof}
With 
\[
[ H_{\rm f}, V_\sigma(t)]\;=\;V_\sigma(t) \left( V_\sigma^*(t) H_{\rm f} V_\sigma(t) - H_{\rm f}\right) \;=\; V_\sigma(t) (\Phi_0(t)-E_\sigma(t))
\]
and 
\begin{eqnarray*}
[ H_{\rm f}, \E^{\I\epsi\Phi_2}]&=&\E^{\I\epsi\Phi_2} \left( \E^{-\I\epsi\Phi_2} H_{\rm f} \E^{\I\epsi\Phi_2}- H_{\rm f}\right) \\&=&  \E^{\I\epsi\Phi_2} \Big( \sum_{j=0}^\infty \epsi^j\,\frac{[-\I\Phi_2, [-\I\Phi_2 , [-\I\Phi_2,\cdots  H_{\rm f}]]]]}{j!}-H_{\rm f} \Big)  \\
&= & \E^{\I\epsi\Phi_2}\left( \epsi \Phi_1 - \I \tfrac{\epsi^2}{2} {[\Phi_2, \Phi_1]} \right)
\end{eqnarray*}
the first statement follows, since
  $\Phi_0$, $\Phi_1$ and $[\Phi_2, \Phi_1]= \I\,{\rm Im}\langle z_2 ,z_1 \rangle$ are bounded independently of $\sigma$ as operators from $D(H_{\rm f})$ to $\mathcal{F}$. 

Under our hypotheses on $\hat\varphi$  the operators
 $ \Phi(\im v_\sigma(x(t),k)/|k|)$ and $\dot \Phi = \Phi_1$ commute as a consequence  of (\ref{ccr1}) 
  and  $ \langle v_\sigma(x(t),k) / |k|, v_\sigma(x(t),k)   \kappa\cdot \dot x_j(t) \rangle =0$.
Hence 
\[
\tfrac{\D}{\D t} V_\sigma(t) \;= \; \tfrac{\D}{\D t} \E^{-\I\Phi(\im v_\sigma(x(t),k)/|k|)}\;=\; -\I\dot \Phi(t) V_\sigma(t) = -\I \Phi_1(t) V_\sigma(t)\,.
\]
For the other exponential we find
\begin{eqnarray*}
\tfrac{\D}{\D t}  \E^{\I\epsi\Phi_2} &=& \left[ \partial_t , \E^{\I\epsi\Phi_2} \right] \;=\;
\left( \partial_t - \E^{\I\epsi\Phi_2}\partial_t\E^{-\I\epsi\Phi_2}\right) \E^{\I\epsi\Phi_2}
\;=\; \left( \I\epsi\dot\Phi_2 -\tfrac{\epsi^2}{2} [\Phi_2,\dot\Phi_2] \right)\E^{\I\epsi\Phi_2}\,.
 \end{eqnarray*}
Finally
\[
\E^{\I\epsi\Phi_2} \Phi_1\E^{-\I\epsi\Phi_2} \;=\;\Phi_1 +\I\epsi [\Phi_2,\Phi_1]
\]
implies (\ref{abl}).
\end{proof}

\subsection{The dressed Hamiltonian}\label{sec:dressn}
With the help of the dressing transformation $V^\epsi_\sigma(t)$, we   define the dressed Hamiltonians  
\[
\begin{array}{rcl}
 \im \epsi \partial_t -H_{\mathrm{dress}}(t)&:=&V^\epsi_\sigma(t) \big(\im \epsi \partial_t -H (t) \big)V^\epsi_\sigma(t)^*\,,\\[3mm]
 \im \epsi \partial_t -H^\sigma_{\mathrm{dress}}(t)&:=&V^\epsi_\sigma( t) \big(\im \epsi \partial_t -H^{\sigma}(t) \big)V^\epsi_\sigma(t)^*\,.
 \end{array}
\]
As a consequence of Lemma~\ref{lem:V} the dressed Hamiltonians are self-adjoint on $D(H_\rmm{f})$ since $V^\epsi_\sigma(t)$ is a bijection on $D(H_\rmm{f})$. 
The corresponding evolution families are related through
\[
U_{ {\rm dress}}(t) = V^\epsi_\sigma(t) \,U_{H}(t) V^\epsi_\sigma(0)^*\qquad \mbox{and}\qquad U^\sigma_{ {\rm dress}}(t) = V^\epsi_\sigma(t) \,U_{H^\sigma}(t)V^\epsi_\sigma(0)^*\,.
\]

\begin{lem}\label{lem:hdress}
 The  dressed Hamiltonian has the form
\[
H^\sigma_{\mathrm{dress}}(t)\;=\;
H_{\rm f} + E^\epsi_\sigma(t) - \epsi^2 \dot\Phi_2(t)
\]
with
\[
E^\epsi_\sigma(t) :=  E_\sigma(t) -\tfrac{ \epsi^2}{2} {\rm Im}\langle z_2 (t),z_1(t)\rangle
 +\tfrac{ \epsi^3}{2} {\rm Im}\langle z_2(t),\dot z_2(t)\rangle\,.  
\]
\end{lem}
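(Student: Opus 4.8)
The plan is to compute $H^\sigma_{\mathrm{dress}}(t) = V^\epsi_\sigma(t) H^\sigma(t) V^\epsi_\sigma(t)^* - \I\epsi\, (\tfrac{\D}{\D t} V^\epsi_\sigma(t))\, V^\epsi_\sigma(t)^*$ by treating the two terms separately and combining them. The conjugation term is the easy one: using Corollary~\ref{cor:spc} we have $H^\sigma(t) = V_\sigma^*(t) H_{\rm f} V_\sigma(t) + E_\sigma(t)$, so
\[
V^\epsi_\sigma(t) H^\sigma(t) V^\epsi_\sigma(t)^* = \E^{\I\epsi\Phi_2(t)}\, H_{\rm f}\, \E^{-\I\epsi\Phi_2(t)} + E_\sigma(t)\,,
\]
and the Baker--Campbell--Hausdorff expansion already carried out in the proof of Lemma~\ref{lem:V} (applied with the opposite sign in the exponent) gives
\[
\E^{\I\epsi\Phi_2} H_{\rm f}\, \E^{-\I\epsi\Phi_2} = H_{\rm f} - \epsi\Phi_1 - \tfrac{\I\epsi^2}{2}[\Phi_2,\Phi_1]\,,
\]
the series truncating because $[\Phi_2,\Phi_1] = \I\,\Imm\langle z_2,z_1\rangle$ is a scalar, using $\I[\D\Gamma(|k|),\Phi_2] = \Phi_1$ from the text. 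So the conjugation contributes $H_{\rm f} + E_\sigma(t) - \epsi\Phi_1(t) + \tfrac{\epsi^2}{2}\Imm\langle z_2(t),z_1(t)\rangle$.

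Next I would handle the generator correction $-\I\epsi\,(\tfrac{\D}{\D t}V^\epsi_\sigma)\,V^{\epsi *}_\sigma$. Here I use formula~(\ref{abl}) from Lemma~\ref{lem:V} directly: multiplying on the right by $V^{\epsi *}_\sigma$ gives
\[
(\tfrac{\D}{\D t}V^\epsi_\sigma)\,V^{\epsi *}_\sigma = -\I\Phi_1 + \I\epsi\dot\Phi_2 + \epsi[\Phi_2,\Phi_1] - \tfrac{\epsi^2}{2}[\Phi_2,\dot\Phi_2]\,,
\]
so $-\I\epsi$ times this is $-\epsi\Phi_1 + \epsi^2\dot\Phi_2 - \I\epsi^2[\Phi_2,\Phi_1] + \tfrac{\I\epsi^3}{2}[\Phi_2,\dot\Phi_2]$. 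Now recall that $\I\epsi\partial_t - H^\sigma_{\mathrm{dress}} = V^\epsi_\sigma(\I\epsi\partial_t - H^\sigma)V^{\epsi*}_\sigma$, i.e. $H^\sigma_{\mathrm{dress}} = V^\epsi_\sigma H^\sigma V^{\epsi*}_\sigma - \I\epsi(\tfrac{\D}{\D t}V^\epsi_\sigma)V^{\epsi*}_\sigma$ — so I must \emph{subtract} the above. Adding the two pieces, the $-\epsi\Phi_1$ from the conjugation cancels against the $+\epsi\Phi_1$ coming from $-(-\epsi\Phi_1)$, and collecting the rest: the $H_{\rm f}$ survives, the $\dot\Phi_2$ terms combine to $-\epsi^2\dot\Phi_2$, and using $[\Phi_2,\Phi_1] = \I\Imm\langle z_2,z_1\rangle$ and $[\Phi_2,\dot\Phi_2] = \I\Imm\langle z_2,\dot z_2\rangle$ the scalar terms assemble into $E_\sigma(t) + \tfrac{\epsi^2}{2}\Imm\langle z_2,z_1\rangle + \epsi^2\Imm\langle z_2,z_1\rangle \cdot(\text{sign bookkeeping}) + \tfrac{\epsi^3}{2}\Imm\langle z_2,\dot z_2\rangle$, which I would tidy to match $E^\epsi_\sigma(t)$ as defined.

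The main obstacle is purely the sign and coefficient bookkeeping: there are two places where $\Phi_1$ enters (from conjugating $H_{\rm f}$ and from $\tfrac{\D}{\D t}V_\sigma = -\I\Phi_1 V_\sigma$), two places where $[\Phi_2,\Phi_1]$ enters, and one must be careful that $\Phi_2$ commutes with $V_\sigma$-generator $\Phi(\im v_\sigma/|k|)$ — which is exactly the commutation fact $\langle v_\sigma/|k|, v_\sigma\,\kappa\cdot\dot x_j\rangle = 0$ proved inside Lemma~\ref{lem:V}, so the BCH expansion of $\E^{\I\epsi\Phi_2}V_\sigma\partial_t(\E^{\I\epsi\Phi_2}V_\sigma)^*$ really does split cleanly. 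Once the cancellation of the $\Phi_1$ terms is verified, everything remaining is either $H_{\rm f}$, the operator $-\epsi^2\dot\Phi_2$, or a $c$-number, and grouping the $c$-numbers reproduces the stated $E^\epsi_\sigma(t)$; I would double-check the $\epsi^2$ coefficient of $\Imm\langle z_2,z_1\rangle$ since it receives contributions of relative weight $\tfrac12$ and $1$ from the two sources with a relative sign, netting the stated $-\tfrac{\epsi^2}{2}$.
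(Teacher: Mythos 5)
Your approach is exactly the paper's: compute $V^\epsi_\sigma H^\sigma V^{\epsi*}_\sigma$ by BCH (noting $V_\sigma H^\sigma V_\sigma^* = H_{\rm f}+E_\sigma$ from Corollary~\ref{cor:spc}, and that the series truncates since $[\Phi_2,\Phi_1]$ is a scalar), and handle the generator correction via formula~\eqref{abl} from Lemma~\ref{lem:V}. One bookkeeping slip to fix: the correct identity is
\[
H^\sigma_{\mathrm{dress}} \;=\; V^\epsi_\sigma H^\sigma V^{\epsi*}_\sigma \;+\; \I\epsi\,\bigl(\tfrac{\D}{\D t}V^\epsi_\sigma\bigr)V^{\epsi*}_\sigma\,,
\]
with a \emph{plus} sign (since $V^\epsi_\sigma\I\epsi\partial_t V^{\epsi*}_\sigma = \I\epsi\partial_t - \I\epsi\dot V^\epsi_\sigma V^{\epsi*}_\sigma$ sits on the \emph{left} of the defining equation, so moving it to the right flips the sign). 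You wrote the formula with a minus and then said to "subtract the above," and these two errors cancel: your subsequent cancellation of the $\Phi_1$ terms, the sign of $-\epsi^2\dot\Phi_2$, and the netting of $\tfrac{\epsi^2}{2}\Imm\langle z_2,z_1\rangle - \epsi^2\Imm\langle z_2,z_1\rangle = -\tfrac{\epsi^2}{2}\Imm\langle z_2,z_1\rangle$ are all consistent with the correct formula. So the substance is right; just state the intermediate identity with the correct sign so the argument doesn't rely on two compensating slips.
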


\begin{proof}
 For $H^{\sigma }(t)$ we find
 \begin{eqnarray*}
V^\epsi_\sigma \,H^{\sigma }\,V^{\epsi*}_\sigma &=&  \E^{\I\epsi\Phi_2 } V_\sigma 
  H^{\sigma }  V^*_\sigma  \E^{-\I\epsi\Phi_2 }  
  \;\;=\;\;\E^{\I\epsi\Phi_2  } (H_{\rm f} +E_\sigma) \E^{-\I\epsi\Phi_2 } \\
  & =&  \sum_{j=0}^\infty \epsi^j\,\frac{[\I\Phi_2, [\I\Phi_2 , [\I\Phi_2,\cdots  H_{\rm f}]]]]}{j!}+ E_\sigma   \\
 &=& H_{\rm f} + E_\sigma - \epsi \Phi_1 - \I \tfrac{\epsi^2}{2} {[\Phi_2, \Phi_1]} \;\;=\;\;
   H_{\rm f} + E_\sigma - \epsi \Phi_1 + \tfrac{\epsi^2}{2} {\rm Im} \langle z_2,z_1\rangle\,,
 \end{eqnarray*}
and with (\ref{abl})
\[
V^\epsi_\sigma \,\I\epsi\partial_t\,V^{\epsi*}_\sigma \;=\; \I\epsi\partial_t  - \I\epsi (\tfrac{\D}{\D t} V^\epsi) V^{\epsi*}\;=\; \I\epsi\partial_t -\epsi \Phi_1 + \epsi^2 \dot \Phi_2 - \I\epsi^2 [\Phi_2,\Phi_1]+\tfrac{\I \epsi^3}{2} [\Phi_2,\dot\Phi_2]\,.
\]
Hence 
\[
H_{\rm dress}^\sigma(t) \;\;=\;\; H_{\rm f} + E_\sigma(t) -{\epsi^2}\dot \Phi_2(t) - \tfrac{\epsi^2}{2} {\rm Im}\langle z_2(t),z_1(t)\rangle+ \tfrac{\epsi^3}{2} {\rm Im}\langle z_2(t),\dot z_2(t)\rangle \,.\qedhere
\]
\end{proof}

\begin{prop}\label{prop:approxs} For any $\sigma,\epsi\in(0,1]$ and $M\in\N$ we have
\[
 \norm{\Big(U_\mathrm{dress}(t)
-U^{\sigma }_\mathrm{dress}(t)\Big) Q_{\leq M} } = \Or\left(\sqrt{M+1}\,\frac{\sigma}{\epsi}\right)
 \]
uniformly on bounded intervals in time  in $\mathcal{L}(\fock)$ and $\mathcal{L}(D(H_\mathrm{f}))$.
\end{prop}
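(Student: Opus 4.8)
The plan is to compare the two dressed propagators by a single Duhamel step carried out in the \emph{dressed} representation. Since $U_{\mathrm{dress}}$ and $U^\sigma_{\mathrm{dress}}$ are constructed with one and the same dressing operator $V^\epsi_\sigma$, the $\I\epsi\partial_t$-contributions to $H_{\mathrm{dress}}$ and $H^\sigma_{\mathrm{dress}}$ cancel in the difference and one has
\[
U_{\mathrm{dress}}(t)-U^\sigma_{\mathrm{dress}}(t)\;=\;-\tfrac{\I}{\epsi}\int_0^t U_{\mathrm{dress}}(t,s)\,\big(H_{\mathrm{dress}}(s)-H^\sigma_{\mathrm{dress}}(s)\big)\,U^\sigma_{\mathrm{dress}}(s)\,\D s\,,
\]
first on $D(H_{\mathrm f})$ and then, by density, on all of $\fock$. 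The first key point is that the difference of the dressed Hamiltonians equals $\Phi(w_\sigma(s))$ \emph{exactly}, where $w_\sigma(s,k):=\id_{[0,\sigma)}(|k|)\,v_0(x(s),k)$ is the low-momentum tail of the coupling: indeed $H_{\mathrm{dress}}(s)-H^\sigma_{\mathrm{dress}}(s)=V^\epsi_\sigma(s)\big(H(s)-H^\sigma(s)\big)V^\epsi_\sigma(s)^*=V^\epsi_\sigma(s)\Phi(w_\sigma(s))V^\epsi_\sigma(s)^*$, and since $V^\epsi_\sigma=\E^{\I\epsi\Phi_2}V_\sigma$ is a product of Weyl operators, conjugating $\Phi(w_\sigma(s))$ by it only adds a c-number built from the inner products $\langle z_2(s),w_\sigma(s)\rangle$ and $\langle v_\sigma(x(s),k)/|k|,\,w_\sigma(s)\rangle$, both of which vanish because $w_\sigma(s)$ lives in $\{|k|<\sigma\}$ while $v_\sigma(x(s))$ and $z_2(s)$ (which carries the factor $\hat\varphi_\sigma$) live in $\{|k|\geq\sigma\}$.

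The second ingredient is the number-operator estimate $\norm{\Phi(f)\psi}\leq\sqrt2\,\norm{f}_{L^2}\norm{(N+1)^{1/2}\psi}$ with $N=\D\Gamma(1)$ the photon number operator, together with $\norm{w_\sigma(s)}_{L^2}=\Or(\sigma)$ uniformly on bounded time intervals (since $\hat\varphi(0)\neq0$ forces $|v_0(x(s),k)|\lesssim|k|^{-1/2}$ near $k=0$, so $\int_{|k|<\sigma}|v_0|^2\,\D k\lesssim\sigma^2$; a neutral system even gives $\Or(\sigma^2)$). It then remains to propagate $(N+1)^{1/2}$ along $U^\sigma_{\mathrm{dress}}$. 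By Lemma~\ref{lem:hdress}, $H^\sigma_{\mathrm{dress}}(t)=H_{\mathrm f}+E^\epsi_\sigma(t)-\epsi^2\dot\Phi_2(t)$, so $[H^\sigma_{\mathrm{dress}}(t),N]=-\epsi^2[\Phi(\dot z_2(t)),N]=\Or(\epsi^2)$ as a map from $D((N+1)^{1/2})$ to $\fock$ (the form factor $\dot z_2(t)$ has $L^2$-norm bounded on bounded time intervals, at worst of order $\sqrt{\ln(1/\sigma)}$, which is harmless in all applications, where $\sigma$ is a fixed power of $\epsi$). Differentiating $s\mapsto\big\langle U^\sigma_{\mathrm{dress}}(s)Q_{\leq M}\chi,\,(N+1)\,U^\sigma_{\mathrm{dress}}(s)Q_{\leq M}\chi\big\rangle$ and invoking Gronwall's lemma (first with $N$ replaced by the bounded operator $N(1+\delta N)^{-1}$ and then letting $\delta\to0$) gives $\norm{(N+1)^{1/2}U^\sigma_{\mathrm{dress}}(s)Q_{\leq M}\chi}\leq C\sqrt{M+1}\,\norm{\chi}$ for $|s|\leq T$ and $\epsi$ small. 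Inserting this and $\norm{U_{\mathrm{dress}}(t,s)}_{\mathcal{L}(\fock)}=1$ into the Duhamel identity yields
\[
\norm{\big(U_{\mathrm{dress}}(t)-U^\sigma_{\mathrm{dress}}(t)\big)Q_{\leq M}}_{\mathcal{L}(\fock)}\;\leq\;\tfrac{\sqrt2\,C}{\epsi}\int_0^{|t|}\norm{w_\sigma(s)}_{L^2}\sqrt{M+1}\,\D s\;=\;\Or\Big(\sqrt{M+1}\,\tfrac{\sigma}{\epsi}\Big)
\]
uniformly for $|t|\leq T$.

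For the bound in $\mathcal{L}(D(H_{\mathrm f}))$ one runs the same Duhamel estimate and in addition commutes $H_{\mathrm f}$ through, using that $U_{\mathrm{dress}}(t,s)$ is bounded on $D(H_{\mathrm f})$ uniformly in $\sigma,\epsi$ (Proposition~\ref{prop:U} with Lemma~\ref{lem:V}), that $[H_{\mathrm f},\Phi(w_\sigma(s))]=-\I\Phi(\I|k|w_\sigma(s))$ with $\norm{|k|w_\sigma(s)}_{L^2}\leq\sigma\norm{w_\sigma(s)}_{L^2}=\Or(\sigma^2)$, and that the commutators $[H^\sigma_{\mathrm{dress}}(t),N]$ and $[H^\sigma_{\mathrm{dress}}(t),H_{\mathrm f}]=-\epsi^2[\Phi(\dot z_2(t)),H_{\mathrm f}]$ are both $\Or(\epsi^2)$ in the appropriate sense (the form factors $\dot z_2(t)$, $|k|\dot z_2(t)$, $|k|^2\dot z_2(t)$ occurring all have $L^2$-norm bounded on bounded time intervals, again up to logarithms in $\sigma$). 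A joint Gronwall estimate for $(N+1)^{1/2}$ and $(N+1)^{1/2}H_{\mathrm f}$ along $U^\sigma_{\mathrm{dress}}$ then delivers $\norm{\big(U_{\mathrm{dress}}(t)-U^\sigma_{\mathrm{dress}}(t)\big)Q_{\leq M}}_{\mathcal{L}(D(H_{\mathrm f}))}=\Or(\sqrt{M+1}\,\sigma/\epsi)$ as well. I expect this last step to be the main obstacle: carrying out the joint number-operator/$H_{\mathrm f}$ propagation with constants uniform in $\sigma$ (or at worst logarithmically divergent, which is immaterial once $\sigma$ is a power of $\epsi$) requires careful bookkeeping of all the commutator form factors and of the regularization, whereas the rest — the one-step Duhamel, the cancellation of the Weyl c-numbers by support disjointness, and the number-operator bound — is soft.
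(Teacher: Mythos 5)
Your decomposition --- Duhamel directly in the dressed picture, together with the observation that $H_{\mathrm{dress}}(s)-H^\sigma_{\mathrm{dress}}(s)=\Phi(w_\sigma(s))$ \emph{exactly} because the Weyl c-numbers vanish by support disjointness --- is correct and a little slicker than the paper's, which factors out the dressing and applies Duhamel to $U_H-U_{H^\sigma}$. The genuine gap is the Gronwall step for $(N+1)^{1/2}$ along $U^\sigma_{\mathrm{dress}}$, which you yourself flag: with $[H^\sigma_{\mathrm{dress}},N]=-\epsi^2[\Phi(\dot z_2),N]$ and $\norm{\dot z_2}_{L^2}=\Or\big(\sqrt{\ln(1/\sigma)}\big)$, Gronwall gives a factor $\E^{C\epsi\sqrt{\ln(1/\sigma)}\,T}$ in the number-operator bound, which diverges as $\sigma\to 0$ at fixed $\epsi$. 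Your conclusion $\Or(\sqrt{M+1}\,\sigma/\epsi)$ therefore holds only when $\epsi\sqrt{\ln(1/\sigma)}$ stays bounded, not uniformly for $\sigma,\epsi\in(0,1]$ as the proposition asserts. That this restriction is harmless when $\sigma$ is later set to a power of $\epsi$ does not rescue the statement as written.

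The paper closes this by pushing the support-disjointness one step further, and your proof can do the same. Since $H^\sigma$ does not couple at all to modes with $|k|<\sigma$, the low-momentum field operator evolves \emph{freely} under the regularized propagator, $U_{H^\sigma}^*(t)\,\Phi\big(\E^{\I|k|t/\epsi}f_{\leq\sigma}(s)\big)\,U_{H^\sigma}(t)=\Phi(f_{\leq\sigma}(s))$, and it also commutes with $V^\epsi_\sigma(0)^*$. One may therefore move the field operator through the unitaries so that it acts directly on $Q_{\leq M}$, and Lemma~\ref{philem} yields the bound $\sqrt{M+1}\,\norm{f_{\leq\sigma}}_{L^2}$ with no Gronwall and no $\sigma$-dependence in the constant. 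In your dressed picture the identical observation is available: $\dot z_2$ is supported in $\{|k|\geq\sigma\}$, so $\Phi(w_\sigma(s))$ commutes with $\dot\Phi_2(t)$, hence $[\Phi(w_\sigma(s)),H^\sigma_{\mathrm{dress}}(t)]=[\Phi(w_\sigma(s)),H_{\rm f}]$ and $\Phi(w_\sigma(s))$ evolves freely under $U^\sigma_{\mathrm{dress}}$ as well. Replacing your Gronwall estimate by this commutation makes the proof complete; the $\mathcal{L}(D(H_{\rm f}))$ statement then follows by the same commutation together with $\norm{|k|f_{\leq\sigma}}_{L^2}\leq\sigma\norm{f_{\leq\sigma}}_{L^2}$ and Lemma~\ref{philem}, rather than the joint Gronwall for $N$ and $H_{\rm f}$ that you sketch.
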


 \begin{proof} By definition we have 
\[
 \Big(U_\mathrm{dress}(t)
-U^{\sigma }_\mathrm{dress}(t)\Big) Q_{\leq M}     = V^\epsi_\sigma(t) \,(U_{H}(t) -U_{H^\sigma}(t) )\,V^\epsi_\sigma(0)^*\,Q_{\leq M} \,.
\]
Let 
\[
H(s)-H^{\sigma}(s) =  \Phi \big(\id_{(0,\sigma)}(|k|)\,  v_0(x(s),k)\big) =: \Phi(f_{\leq\sigma}(s))\,.
\]
Then bosons with momenta smaller than $\sigma$ evolve freely under the regularized evolution, i.e.\ 
\[
U_{H^\sigma}^*(t)\,\Phi(\E^{ \I |k| \frac{t}{\epsi} }f_{\leq\sigma}(s))\,U_{H^\sigma}(t) = \Phi(f_{\leq\sigma}(s))
\]
holds for all $t\in\R$. To see this note that the equality holds  for $t=0$ and the derivative of the left hand side vanishes since 
\[
[ H^\sigma , \Phi(\E^{ \I |k| \frac{t}{\epsi}} f_{\leq\sigma}(s)) ] = [ H_{\rm f} , \Phi(\E^{ \I |k| \frac{t}{\epsi} }f_{\leq\sigma}(s)) ] = -\I \epsi \tfrac{\D}{\D t} \Phi(\E^{ \I |k| \frac{t}{\epsi}} f_{\leq\sigma}(s))\,.
\]
Moreover, $\Phi(f_{\leq\sigma}(s))$ commutes with $V^\epsi_\sigma(0)^*$. 
 By the Duhamel formula we thus find  
 \begin{eqnarray*} \lefteqn{\hspace{-20mm}
 \norm{(U_{H}(t )-U_{H^{\sigma}}(t ))\,V^\epsi_\sigma(0)^*\,Q_{\leq M} } 
 \leq  \frac{1}{\epsi}\int_{ 0}^t \D s\, \norm{(H(s)-H^{\sigma}(s))U_{H^{\sigma}}(s ) V^\epsi_\sigma(0)^*Q_{\leq M} }} \\
 &= &\frac{1}{\epsi} \int_{ 0}^t \D s\,\norm{\Phi(f_{\leq\sigma}(s))  U_{H^{\sigma}}(s ) V^\epsi_\sigma(0)^*Q_{\leq M} }  \\
  &= &\frac{1}{\epsi} \int_{ 0}^t \D s\,\norm{ U_{H^{\sigma}}(s ) \Phi(-\E^{ \I |k| \frac{s}{\epsi}} f_{\leq\sigma}(s))  V^\epsi_\sigma(0)^*Q_{\leq M} } \\
  &= &\frac{1}{\epsi} \int_{ 0}^t \D s\,\norm{  \Phi(-\E^{ \I |k| \frac{s}{\epsi}} f_{\leq\sigma}(s))  Q_{\leq M} }\,.
 \end{eqnarray*}
The following lemma, which will be applied several times in the following, together with the fact that
\[
\|f_{\leq\sigma}(s)\|_{L^2} + \||k| f_{\leq\sigma}(s)\|_{L^2}   = \Or(\sigma) 
\]
uniformly for $s\in[0,t]$
allows us to conclude.
\end{proof}

\begin{lem}\label{philem}
There is a constant $C<\infty$  such that for any  $f\in L^2(\R^3)$ and $M\in \N$ 
\[
\| \Phi(f) \,Q_{\leq M} \|_{\mathcal{L}(\fock)} \leq 2^\frac{1}{2} \sqrt{M+1} \norm{f }_{L^2(\R^3) }
 \]
 and
 \[
  \| \Phi(f) \,Q_{\leq M} \|_{\mathcal{L}(D(H_{\mathrm f}))}   \leq C \sqrt{M+1} \left(\norm{f }_{L^2(\R^3) } +\norm{|k|f }_{L^2(\R^3) }    \right)
\]

\end{lem}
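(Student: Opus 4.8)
The plan is to estimate the annihilation and creation parts of $\Phi(f)=\tfrac{1}{\sqrt2}\big(a(f)+a^\dagger(f)\big)$ sector by sector and to exploit that the $n$-particle subspaces are mutually orthogonal. On the $n$-particle sector one has the elementary bounds $\|a(f)\psi_n\|_\fock\le\sqrt{n}\,\|f\|_{L^2}\|\psi_n\|_\fock$ and $\|a^\dagger(f)\psi_n\|_\fock\le\sqrt{n+1}\,\|f\|_{L^2}\|\psi_n\|_\fock$, which follow directly from the definitions in Section~\ref{sec:fock} together with Cauchy--Schwarz. Since $a(f)$ lowers and $a^\dagger(f)$ raises the particle number by one, for $\psi\in\fock$ the vectors $a(f)\psi_1,\dots,a(f)\psi_M$ lie in pairwise orthogonal sectors, and likewise $a^\dagger(f)\psi_0,\dots,a^\dagger(f)\psi_M$; hence $\|a(f)Q_{\le M}\psi\|_\fock^2=\sum_{n=1}^M\|a(f)\psi_n\|^2\le M\|f\|_{L^2}^2\|\psi\|^2$ and $\|a^\dagger(f)Q_{\le M}\psi\|_\fock^2\le(M+1)\|f\|_{L^2}^2\|\psi\|^2$. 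The triangle inequality and $\sqrt{M}+\sqrt{M+1}\le 2\sqrt{M+1}$ then give the first inequality with the stated constant $2^{1/2}$.

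For the bound in $\mathcal{L}(D(H_\mathrm f))$ I would equip $D(H_\mathrm f)$ with the graph norm $\|\psi\|+\|H_\mathrm f\psi\|$. The summand $\|\Phi(f)Q_{\le M}\psi\|$ is already controlled by the first part, so it remains to estimate $\|H_\mathrm f\Phi(f)Q_{\le M}\psi\|$. Using $[\D\Gamma(|k|),a^\dagger(g)]=a^\dagger(|k|g)$ and $[\D\Gamma(|k|),a(g)]=-a(|k|g)$ (equivalently the second identity in~(\ref{ccr1})) one has, on $\mathcal{F}_\mathrm{fin}$,
\[
H_\mathrm f\,\Phi(f)\;=\;\Phi(f)\,H_\mathrm f+\tfrac{1}{\sqrt2}\big(a^\dagger(|k|f)-a(|k|f)\big)\,.
\]
Since $H_\mathrm f$ commutes with $Q_{\le M}$, applying this identity to $Q_{\le M}\psi$ and invoking the sector estimates of the first part (once with form factor $f$ acting on $H_\mathrm f\psi$, once with form factor $|k|f$ acting on $\psi$ — the operator $a^\dagger(|k|f)-a(|k|f)$ obeys the same bounds on $Q_{\le M}\fock$ as $\sqrt2\,\Phi(|k|f)$) yields $\|H_\mathrm f\Phi(f)Q_{\le M}\psi\|\le 2^{1/2}\sqrt{M+1}\big(\|f\|_{L^2}\|H_\mathrm f\psi\|+\||k|f\|_{L^2}\|\psi\|\big)$. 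Adding the bound from the first part gives the second inequality with $C=2^{1/2}$; any equivalent choice of graph norm only changes $C$.

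There is no genuine obstacle here; the lemma is a standard Fock-space estimate. The only points requiring a little care are: establishing the operator identity and the norm bounds first on the dense domain $\mathcal{F}_\mathrm{fin}\cap D(H_\mathrm f)$, where all manipulations are purely algebraic, and then extending them to $Q_{\le M}\fock$ (resp.\ $Q_{\le M}D(H_\mathrm f)$) by density and closedness of $\Phi(f)$ and $H_\mathrm f$; observing that if $\||k|f\|_{L^2}=\infty$ the second bound is vacuous, so one may assume $f,|k|f\in L^2$, which is precisely the condition under which $\Phi(f)$ maps $Q_{\le M}D(H_\mathrm f)$ into $D(H_\mathrm f)$; and keeping accurate track of the particle-number sectors, since it is the orthogonality of these sectors that produces the factor $\sqrt{M+1}$ rather than one growing linearly in $M$.
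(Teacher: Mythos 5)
Your proof is correct and follows essentially the same route as the paper: the paper simply cites Reed--Simon Theorem~X.41 for the first inequality, where you rederive it by the standard sector-by-sector bounds and orthogonality, and for the second inequality both you and the paper use the identical commutation identity (the paper writes it as $H_{\rm f}\Phi(f)Q_{\leq M}=\Phi(f)Q_{\leq M}H_{\rm f}-\I\Phi(\I|k|f)Q_{\leq M}$, which after unwinding the antilinearity of $a$ is exactly your $\tfrac{1}{\sqrt2}\big(a^\dagger(|k|f)-a(|k|f)\big)$ term) together with the first estimate.
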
 

\begin{proof}
 According to~\cite{resi2}, Theorem X.41, we have for $f\in L^2(\R^3)$ that
\[
\norm{\Phi(f )  Q_{\leq M} \psi}_{\mathscr{\fock}} \leq 2^\frac{1}{2} \sqrt{M+1} \norm{f }_{L^2(\R^3) } \norm {\psi}_{\mathscr{\fock}} \,.
\] 
The second claim follows from the observation that
\[
H_{\rm f} \Phi(f )  Q_{\leq M}  =   \Phi(f )  Q_{\leq M} H_{\rm f} \;-\; \I \Phi(\I |k| f) Q_{\leq M}
\]
together with the first estimate.
\end{proof}

 \subsection{Effective dynamics}
  
In this section we  first show that the statements of Theorems~\ref{thm:adapp} and~\ref{thm:foae} hold for 
the infrared regularized evolution $U^{ \sigma}_\mathrm{dress}(t)$ with an error depending only logarithmically on~$\sigma$.
Then we use Proposition~\ref{prop:approxs}  and an appropriate choice for $\sigma=\sigma(\epsi)$ to show   the statements also for the full evolution $U_\mathrm{dress}(t)$. Clearly the statement of Theorem~\ref{thm:adapp} is a consequence of Theorem~\ref{thm:foae}, but since the proof necessarily  proceeds in the same two steps, we separated the statements. 
 
 \begin{lem} For any $0<\sigma\leq\frac{1}{3}$ and $M\in\N$  it holds that
  $  \dot \Phi_2 Q_{\leq M} \in \mathcal{L}(D(H_\mathrm{f}))\cap \mathcal{L}(\fock) $
  with 
  \[
   \|\dot \Phi_2 Q_{\leq M}\| = \Or(\sqrt{M+1}\ln(\sigma^{-1}))
  \] 
 uniformly on bounded time intervals in $\mathcal{L}(\fock)$ and $\mathcal{L}(D(H_\mathrm{f}))$.
 \end{lem}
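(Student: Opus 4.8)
The plan is to write $\dot\Phi_2(t)$ as a single smeared field operator $\Phi(\dot z_2(t))$ and then invoke the two norm bounds of Lemma~\ref{philem}. Since $f\mapsto\Phi(f)$ is real-linear and $t\mapsto z_2(t)$ is continuously differentiable as a map into $L^2(\R^3)$ (here $x_j\in C^4$ is more than enough), one has $\dot\Phi_2(t)Q_{\leq M}=\Phi(\dot z_2(t))Q_{\leq M}$ as an identity of bounded operators, obtained by passing to the limit in the difference quotient with the help of the first bound in Lemma~\ref{philem}. So the whole statement reduces to estimating $\norm{\dot z_2(t)}_{L^2}$ and $\norm{|k|\,\dot z_2(t)}_{L^2}$ uniformly for $t$ in a bounded interval.

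Next I would differentiate the explicit expression
\[
z_2(t)=\im\sum_{j=1}^N\frac{e_j\hat\varphi_\sigma(k)}{|k|^{3/2}}\,\E^{\im k\cdot x_j(t)}\,\langle\kappa,\dot x_j(t)\rangle
\]
in $t$. The derivative either hits the phase, producing the extra factor $\im\langle k,\dot x_j(t)\rangle=\im|k|\langle\kappa,\dot x_j(t)\rangle$ and hence a term with the improved weight $\hat\varphi_\sigma(k)/|k|^{1/2}$, or it hits the factor $\langle\kappa,\dot x_j(t)\rangle$, producing $\langle\kappa,\ddot x_j(t)\rangle$ and a term that still carries the weight $\hat\varphi_\sigma(k)/|k|^{3/2}$. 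For the first type of term the relevant integrals $\int_{\R^3}|\hat\varphi(k)|^2/|k|\,\D k$ and $\int_{\R^3}|k|\,|\hat\varphi(k)|^2\,\D k$ are finite and independent of $\sigma$ because $\hat\varphi$ is Schwartz, so it contributes $\Or(1)$. For the second type, multiplication by $|k|$ again yields a finite, $\sigma$-independent integral, but without that extra power one is left with $\int_{|k|\geq\sigma}|\hat\varphi(k)|^2/|k|^3\,\D k$; using $\hat\varphi(0)=(2\pi)^{-3/2}$ and Schwartz decay to split off $\{|k|>1\}$, the infrared part is bounded by $C\int_\sigma^1 r^{-1}\,\D r=C\ln(\sigma^{-1})$. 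Thus $\norm{\dot z_2(t)}_{L^2}^2=\Or(\ln(\sigma^{-1}))$ and $\norm{|k|\dot z_2(t)}_{L^2}=\Or(1)$, uniformly on bounded time intervals.

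Finally I would assemble the bound via Lemma~\ref{philem}: $\norm{\dot\Phi_2(t)Q_{\leq M}}_{\mathcal{L}(\fock)}=\Or\!\big(\sqrt{M+1}\,\sqrt{\ln(\sigma^{-1})}\big)$ and $\norm{\dot\Phi_2(t)Q_{\leq M}}_{\mathcal{L}(D(H_{\rm f}))}=\Or(\sqrt{M+1})$, and then use that $\sigma\leq\tfrac13$ gives $\ln(\sigma^{-1})\geq\ln 3>1$, so $\sqrt{\ln(\sigma^{-1})}\leq\ln(\sigma^{-1})$ and both are $\Or\!\big(\sqrt{M+1}\,\ln(\sigma^{-1})\big)$; the hypothesis $\sigma\leq\tfrac13$ is there precisely to license this last step. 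I do not expect any real obstacle here: the one thing to get right is the bookkeeping of powers of $|k|$ when differentiating, namely that differentiating the slowly varying velocity factor $\langle\kappa,\dot x_j(t)\rangle$ does not improve the infrared weight, which is why the best one can say is a logarithm rather than an $\Or(1)$ bound in the $\fock$-norm.
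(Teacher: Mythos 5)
Your approach is the same as the paper's: write $\dot\Phi_2=\Phi(\dot z_2)$, bound $\norm{\dot z_2}_{L^2}$ and $\norm{|k|\dot z_2}_{L^2}$, and feed these into Lemma~\ref{philem}. The paper's proof consists only of the display (\ref{v2dot}) for $\norm{\dot z_2}^2_{L^2}$ and the phrase ``the claims follow from Lemma~\ref{philem}''; you have correctly reconstructed and spelled out all the steps that are left implicit, including the decomposition of $\dot z_2$ into the two terms (derivative hitting the phase vs.\ hitting $\langle\kappa,\dot x_j\rangle$), the observation that only the second term has the infrared-singular weight $|k|^{-3/2}$ giving $\int_\sigma^1 r^{-1}\,\D r=\ln(\sigma^{-1})$, and the role of $\sigma\leq\tfrac13$ in converting $\sqrt{\ln(\sigma^{-1})}$ to the stated $\ln(\sigma^{-1})$ (a point the paper leaves to the reader). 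One small slip: Lemma~\ref{philem}'s $\mathcal{L}(D(H_{\rm f}))$ bound is $C\sqrt{M+1}\bigl(\norm{f}_{L^2}+\norm{|k|f}_{L^2}\bigr)$, so your intermediate claim $\norm{\dot\Phi_2 Q_{\leq M}}_{\mathcal{L}(D(H_{\rm f}))}=\Or(\sqrt{M+1})$ should read $\Or\bigl(\sqrt{M+1}\,\sqrt{\ln(\sigma^{-1})}\bigr)$, since the $\norm{\dot z_2}_{L^2}$ contribution is also present in that estimate; this does not change the final conclusion because both are $\Or\bigl(\sqrt{M+1}\,\ln(\sigma^{-1})\bigr)$ under the hypothesis $\sigma\leq\tfrac13$.
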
 
  \begin{proof}
 Since 
\begin{equation}\label{v2dot}
\| \dot z_2 \|^2_{L^2(\R^3)} = \int_{|k|>\sigma} \D k\,\Big|  \sum_{j=1}^N  \frac{ e_j \hat \varphi (k) }{|k|^{\frac{1}{2}}} \ee^{-\im k \cdot x_j}  \left( \tfrac{ 1 }{|k|}\langle \kappa, \ddot x_j \rangle+ \langle  \kappa,\dot x_j \rangle^2 \right) \Big|^2 = \Or(\ln(\sigma^{-1}))\,,
\end{equation}
 the  claims follow from Lemma~\ref{philem}.
  \end{proof}
  
\begin{prop}
\label{prop13}
For $\sigma,\epsi\in(0,\frac13]$ and $M\in\N$  we have  that 
\[
\norm{\Big(U^{ \sigma}_\mathrm{dress}(t)-\ee^{-\frac{\im}{\epsi}( H_{\mathrm f} t +\int_0^t E_\sigma^\epsi(s)\,\D s)}\Big)Q_{\leq M}}=\mathcal{O}\Big(\epsi\,\sqrt{M+1}\,  {\ln(\sigma^{-1})}\Big)\,
\]
uniformly on bounded time intervals in $\mathcal{L}(\fock)$ and $\mathcal{L}(D(H_\mathrm{f}))$.

\end{prop}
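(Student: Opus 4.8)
The plan is to work entirely with the dressed Hamiltonian $H^\sigma_{\mathrm{dress}}(t) = H_{\mathrm f} + E^\epsi_\sigma(t) - \epsi^2\dot\Phi_2(t)$ provided by Lemma~\ref{lem:hdress}, and to compare the propagator $U^\sigma_{\mathrm{dress}}(t)$ generated by it with the reference evolution
\[
W(t) := \ee^{-\frac{\im}{\epsi}( H_{\mathrm f} t + \int_0^t E^\epsi_\sigma(s)\,\D s)}\,,
\]
which is exactly the propagator generated by $H_{\mathrm f} + E^\epsi_\sigma(t)$ (the $E^\epsi_\sigma$ term being a scalar, the two factors commute and the ordered exponential is just this product). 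The difference between the two generators is the single term $-\epsi^2\dot\Phi_2(t)$, so this is a straightforward Duhamel argument with an explicitly small perturbation.

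First I would write, by the Duhamel formula applied to the pair of propagators $U^\sigma_{\mathrm{dress}}$ and $W$,
\[
\Big(U^\sigma_{\mathrm{dress}}(t) - W(t)\Big)Q_{\leq M} = -\frac{\im}{\epsi}\int_0^t \D s\; U^\sigma_{\mathrm{dress}}(t,s)\,\big(\!-\epsi^2\dot\Phi_2(s)\big)\,W(s)\,Q_{\leq M}\,,
\]
so that, taking norms and using that $U^\sigma_{\mathrm{dress}}(t,s)$ is unitary on $\fock$ (resp.\ uniformly bounded on $D(H_{\mathrm f})$ by Lemma~\ref{lem:V} together with Proposition~\ref{prop:U}), the bound reduces to estimating $\epsi \int_0^t \D s\,\|\dot\Phi_2(s)\,W(s)\,Q_{\leq M}\|$. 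Since $W(s)$ commutes with $H_{\mathrm f}$ and preserves each photon-number sector, $W(s)Q_{\leq M} = Q_{\leq M}W(s)$, and $W(s)$ is unitary, so it may be discarded up to the harmless cost of inserting $Q_{\leq M}$ on the right; one is left with $\epsi\int_0^t \D s\,\|\dot\Phi_2(s)\,Q_{\leq M}\|$ (in the $D(H_{\mathrm f})$ norm one must also commute $H_{\mathrm f}$ through $W(s)$, which again it commutes with, so the same reduction works). The preceding lemma gives $\|\dot\Phi_2(s)Q_{\leq M}\| = \Or(\sqrt{M+1}\,\ln(\sigma^{-1}))$ uniformly for $s$ in a bounded interval, in both the $\mathcal L(\fock)$ and the $\mathcal L(D(H_{\mathrm f}))$ norms. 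Integrating over $s\in[0,t]$ with $|t|\le T$ yields the claimed bound $\Or(\epsi\,\sqrt{M+1}\,\ln(\sigma^{-1}))$.

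The one point requiring a little care — and the main technical obstacle — is that $\dot\Phi_2(s)$ is an unbounded operator, so the manipulations above must be justified on a core and the Duhamel integral interpreted appropriately; concretely, one should read the identity first on vectors in $Q_{\leq M}\mathcal F_{\mathrm{fin}}\cap D(H_{\mathrm f})$, where Lemma~\ref{philem} guarantees $\dot\Phi_2(s)Q_{\leq M}$ maps boundedly into $\fock$ and into $D(H_{\mathrm f})$, and where $t\mapsto U^\sigma_{\mathrm{dress}}(t,s)$ and $t\mapsto W(t)$ are strongly differentiable with the stated generators; the integrand $s\mapsto U^\sigma_{\mathrm{dress}}(t,s)\dot\Phi_2(s)W(s)Q_{\leq M}$ is then strongly continuous and the fundamental theorem of calculus applies. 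Once the bound is established on this dense set it extends to all of $Q_{\leq M}\fock$ (and to $Q_{\leq M}D(H_{\mathrm f})$) by density. I would also note for later use that $E^\epsi_\sigma(t)$ here is exactly $E_\sigma(t) - \tfrac{\epsi^2}{2}\Imm\langle z_2(t),z_1(t)\rangle + \tfrac{\epsi^3}{2}\Imm\langle z_2(t),\dot z_2(t)\rangle$ from Lemma~\ref{lem:hdress}, whose leading part $E_\sigma(t) = E(t)+\Or(\sigma)$ by Corollary~\ref{cor:spc} and whose $\Or(\epsi^2)$ correction matches the $E^\epsi$ appearing in Theorem~\ref{thm:foae} up to an $\Or(\epsi^3)$ scalar that contributes only $\Or(\epsi^2)$ to the phase over a bounded time interval — this is what makes this proposition the $\sigma$-regularized precursor of Theorem~\ref{thm:adapp}.
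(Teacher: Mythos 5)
Your proof takes essentially the same route as the paper's: a Duhamel comparison between $U^\sigma_{\mathrm{dress}}$ and the reference evolution $U_0(t)=\ee^{-\frac{\im}{\epsi}(H_{\mathrm f}t+\int_0^t E^\epsi_\sigma\,\D s)}$, using that their generators differ only by $-\epsi^2\dot\Phi_2$, that $U_0$ commutes with $Q_{\leq M}$, that $U^\sigma_{\mathrm{dress}}$ is uniformly bounded on $D(H_{\mathrm f})$ via (\ref{UDNorm}) and Lemma~\ref{lem:V}, and that the preceding lemma bounds $\|\dot\Phi_2\,Q_{\leq M}\|$ by $\Or(\sqrt{M+1}\,\ln(\sigma^{-1}))$. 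The only difference is that you are more explicit about the density/core argument needed to justify the Duhamel identity with the unbounded $\dot\Phi_2$, which the paper leaves implicit.
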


\begin{proof}
Note that (\ref{UDNorm}) together with Lemma~\ref{lem:V}
implies $U^{ \sigma}_\mathrm{dress}( s)\in \mathcal{L}(D(H_\mathrm{f}))$ uniformly for $s\in[0,t]$.
With the shorthand $U_0(t):=\ee^{-\frac{\im}{\epsi}( H_{\mathrm f} t +\int_0^t  E_\sigma^\epsi(s)\,\D s)}$ and the previous lemma, we thus find
\begin{eqnarray*}
U^{ \sigma}_\mathrm{dress}(t)Q_{\leq M}&=& U_0(t)Q_{\leq M} +\frac{\im}{\epsi} U^{ \sigma}_\mathrm{dress}(t)\int_0^t \D s\, {U^{ \sigma}_\mathrm{dress}}^*(s)\,\epsi^2\, \dot \Phi_2(t) \,U_0(t)\,Q_{\leq M}\\
&=&U_0(t)Q_{\leq M} +\im\epsi\, U^{ \sigma}_\mathrm{dress}(t)\int_0^t \D s\, {U^{ \sigma}_\mathrm{dress}}^*(s) \,\dot \Phi_2(t) \, Q_{\leq M} \,U_0(t)\\
&=&U_0(t)Q_{\leq M}+\mathcal{O}\Big(\epsi  \,{\ln(\sigma^{-1})}\Big) \,.   \qedhere
\end{eqnarray*}
\end{proof}
For the next proposition  we abbreviate
\[
 U^\sigma_{\rmm{eff}}(t):=\ee^{-\frac{\im}{\epsi}( H_{\mathrm f} t +\int_0^t E_\sigma^\epsi(s)\,\D s)} \left(1+ \im \epsi 
\int_0^t \D s\, \ee^{\im H_{\mathrm f} \frac{s}{\epsi}} h^\sigma_\mathrm{rad}(s) \ee^{-\im H_{\mathrm f} \frac{s}{\epsi}}\right)
\]
with
\[
h^\sigma_\mathrm{rad}(t):=\Phi\Big( \sum_{j=1}^N   \frac{ e_j \hat \varphi_\sigma(k) }{|k|^{\frac{3}{2}}}\ee^{-\im k \cdot x_j(t)}  \langle  \kappa, \ddot x_j(t)\rangle \Big)\,.
\]
\begin{prop}\label{lem:letzte}
For   $ \sigma,\epsi\in(0,\frac13]$ and $M\in\N$ we have  that 
\begin{align}\label{UU2}
\norm{\Big(U^{ \sigma}_\mathrm{dress}(t)
-U^\sigma_{\rmm{eff}}(t)\Big) Q_{\leq M} } 
= \;\mathcal{O}\Big( \epsi^2 \,\sqrt{M+1}\, \ln \big(\sigma^{-1}\big)\Big)\,
\end{align}
uniformly on bounded time intervals  in $\mathcal{L}(\fock)$ and $\mathcal{L}(D(H_\mathrm{f}))$.
\end{prop}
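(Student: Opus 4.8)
The plan is to run a second-order Duhamel expansion of $U^\sigma_{\mathrm{dress}}$ about the free, dephased evolution, pushing the argument from the proof of Proposition~\ref{prop13} one order further.

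First I would invoke Lemma~\ref{lem:hdress}, which gives $H^\sigma_{\mathrm{dress}}(t)=H_{\mathrm f}+E^\epsi_\sigma(t)-\epsi^2\dot\Phi_2(t)$. Its ``unperturbed'' part $H_{\mathrm f}+E^\epsi_\sigma(t)$ generates exactly $U_0(t):=\ee^{-\frac{\im}{\epsi}(H_{\mathrm f}t+\int_0^t E^\epsi_\sigma(s)\,\D s)}$, which commutes with $H_{\mathrm f}$ and with every $Q_{\leq M}$, and $U^\sigma_{\mathrm{eff}}(t)$ is precisely the first-order Duhamel approximant obtained by retaining only the ``radiation'' part of the perturbation $-\epsi^2\dot\Phi_2$. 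Treating $-\epsi^2\dot\Phi_2$ as the perturbation and applying the Duhamel formula, exactly as in the proof of Proposition~\ref{prop13},
\[
U^\sigma_{\mathrm{dress}}(t)Q_{\leq M}=U_0(t)Q_{\leq M}+\im\epsi\int_0^t U^\sigma_{\mathrm{dress}}(t,s)\,\dot\Phi_2(s)\,Q_{\leq M}\,U_0(s)\,\D s ,
\]
the point being, as there, that every $\dot\Phi_2$ sits next to a photon-number cutoff so that the $\sqrt{M+1}$ bound of Lemma~\ref{philem} applies. Since $\dot\Phi_2(s)Q_{\leq M}U_0(s)$ takes values in the sectors with at most $M+1$ photons, I would then replace $U^\sigma_{\mathrm{dress}}(t,s)$ by $U_0(t,s)$ inside the integral, controlling the error by Proposition~\ref{prop13} applied to the dressed propagator on $Q_{\leq M+1}$. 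This yields
\[
U^\sigma_{\mathrm{dress}}(t)Q_{\leq M}=U_0(t)Q_{\leq M}+\im\epsi\int_0^t U_0(t,s)\,\dot\Phi_2(s)\,Q_{\leq M}\,U_0(s)\,\D s+R_2(t),
\]
with $\|R_2(t)Q_{\leq M}\|=\Or(\epsi^2\sqrt{M+1}\,\ln(\sigma^{-1}))$ in $\mathcal L(\fock)$ and in $\mathcal L(D(H_{\mathrm f}))$, by Proposition~\ref{prop13} together with Lemma~\ref{philem}.

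It then remains to identify the explicit first-order term with $U^\sigma_{\mathrm{eff}}$. Writing $U_0(t,s)=U_0(t)\,\ee^{\im H_{\mathrm f}s/\epsi}$ up to a scalar factor, the first-order term becomes $\im\epsi\,U_0(t)\int_0^t\ee^{\im H_{\mathrm f}s/\epsi}\dot\Phi_2(s)\,\ee^{-\im H_{\mathrm f}s/\epsi}\,\D s\;Q_{\leq M}$. With $\dot\Phi_2(s)=\Phi(\dot z_2(s))$ I would split $\dot z_2$ into its acceleration part (the term proportional to $\ddot x_j$), which reproduces the $h^\sigma_{\mathrm{rad}}$-term of $U^\sigma_{\mathrm{eff}}$, and its velocity part, which has the form $|k|\,\zeta(s,k)$ with $\zeta(s)\in L^2(\R^3)$ and $\|\zeta(s)\|_{L^2}=\Or(\sqrt{\ln(\sigma^{-1})})$; by the commutation relation~(\ref{ccr1}) the latter equals, in the interaction picture, a commutator of $H_{\mathrm f}$ with $\Phi(\zeta(s))$. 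Since $|k|\,\ee^{\im|k|s/\epsi}=-\im\epsi\,\partial_s\ee^{\im|k|s/\epsi}$, one integration by parts in $s$ turns the velocity contribution into a boundary term plus an integral involving $\dot\zeta$, each gaining one power of $\epsi$; with $\|\zeta(s)\|_{L^2},\|\dot\zeta(s)\|_{L^2}=\Or(\sqrt{\ln(\sigma^{-1})})$ and Lemma~\ref{philem} this is $\Or(\epsi^2\sqrt{M+1}\,\ln(\sigma^{-1}))$ on $Q_{\leq M}$. Collecting everything and using that $U_0$ is an isometry on $\fock$ and on $D(H_{\mathrm f})$ gives~(\ref{UU2}).

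The main obstacle is the photon-number bookkeeping in the second-order remainder $R_2$. One cannot expand the Duhamel series ``for free'': each factor $\dot\Phi_2$ raises the photon number, so it has to be applied to a state already carrying a photon-number cutoff for the $\sqrt{M+1}$-bound of Lemma~\ref{philem} to be usable, which is exactly why the expansion must be carried out stepwise, with Proposition~\ref{prop13} invoked on each sector to control the intermediate dressed propagator. The second difficulty is infrared: the regularity of $\dot z_2$ degrades under division by $|k|$ — the norms $\|\dot z_2/|k|\|_{L^2}$ and $\|\dot z_2/\sqrt{|k|}\|_{L^2}$ blow up as $\sigma\to 0$ — so a second integration by parts in time is not available for the genuinely quadratic contributions; one simply retains the factor $\Or(\sqrt{\ln(\sigma^{-1})})$ from $\|\dot z_2\|_{L^2}$ in~(\ref{v2dot}) at each of the two fields, which is the origin of the single logarithm in the final estimate, and one checks separately that the accompanying c-number contribution is also of size $\Or(\epsi^2\ln(\sigma^{-1}))$ and hence harmless.
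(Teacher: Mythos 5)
Your proposal is correct and follows essentially the same route as the paper: Duhamel expansion of $U^\sigma_{\mathrm{dress}}$ about $U_0$ with $-\epsi^2\dot\Phi_2$ as the perturbation, replacing $U^\sigma_{\mathrm{dress}}$ by $U_0$ inside the first-order term via Proposition~\ref{prop13}, and then splitting $\dot\Phi_2$ into the $h^\sigma_{\mathrm{rad}}$ piece and the velocity-squared piece $\Phi(f_\sigma)$, discharging the latter by a single integration by parts in $s$ since $|k|\,\ee^{-\im|k|s/\epsi}$ is a total derivative. The only superfluous remark is the closing one about a ``c-number contribution'' — the scalar $E^\epsi_\sigma$ is already absorbed into the phase of $U_0$, so no separate check is required there.
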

\begin{proof}
Using again the Duhamel argument together with Proposition~\ref{prop13} we find
\begin{align*}
U^{ \sigma}_\mathrm{dress}(t)Q_{\leq M}&= U_0(t)Q_{\leq M} +\frac{\im}{\epsi} U^{ \sigma}_\mathrm{dress}(t)\int_0^t \D s\, {U^{ \sigma}_\mathrm{dress}}^*(s)\,\epsi^2 \,\dot \Phi_2 (t)\,U_0(t)\,Q_{\leq M}\\
&=U_0(t)Q_{\leq M} +\im\epsi\, U^{ \sigma}_\mathrm{dress}(t)\int_0^t \D s\, {U^{ \sigma}_\mathrm{dress}}^*(s)\,  Q_{\leq M+1} \,\dot \Phi_2(t) \, Q_{\leq M}\, U_0(t)\\
&=U_0(t)Q_{\leq M} +\im \epsi\, U_0(t) \int_0^t \D s\, U_0^*(s)\, \dot \Phi_2 (t) \,Q_{\leq M} \,U_0(t)+\mathcal{O}\Big(\epsi^2 \ln(\sigma^{-1})\Big)
\end{align*}
in the norm of ${\mathcal{L}(D(H_\mathrm{f}))}$. 
Inspecting (\ref{v2dot}) shows that
\[
\dot \Phi_2 \;=\;  h^\sigma_\rmm{rad}+
\Phi\Big( \sum_{j=1}^N \frac{e_j \hat \varphi_\sigma(k) }{|k|^{\frac{1}{2}}} \ee^{-\im k \cdot x_j}  \langle \kappa, \dot x_j \rangle^2\Big) \;=:\; h^\sigma_\rmm{rad}+  \Phi(f_\sigma )\,.
\]
Hence (\ref{UU2}) follows, once we show that the second term does not contribute to the leading order transitions.
This follows from an integration by parts,
\begin{eqnarray*}\lefteqn{
 \int_0^t \D s\,\ee^{\im H_{\mathrm f} \frac{s}{\epsi}} \Phi\left(f_\sigma(s) \right) 
\ee^{-\im H_{\mathrm f} \frac{s}{\epsi}} Q_{\leq M} \;=\;
 \int_0^t \D s\,  \Phi\left( \E^{-\I|k|\frac{s}{\epsi}} f_\sigma(s) \right) 
  Q_{\leq M} }
\\
&=& \I\epsi\, \int_0^t \D s\,  \Phi\left( \tfrac{1}{|k|}\left(\tfrac{\D}{\D s}\E^{-\I|k|\frac{s}{\epsi}} \right)f_\sigma(s) \right) 
  Q_{\leq M} 
  \\
& =&  \I\epsi\,   \Phi\left( \tfrac{1}{|k|}\,\E^{-\I|k|\frac{s}{\epsi}} \,f_\sigma(s) \right) 
  Q_{\leq M} \Big|_{0}^t\;-\; \I\epsi\, \int_0^t \D s\,  \Phi\left( \tfrac{1}{|k|}\E^{-\I|k|\frac{s}{\epsi}} \left(\tfrac{\D}{\D s}f_\sigma(s)\right) \right) 
  Q_{\leq M} \\
  &=&   \mathcal{O}\Big( \epsi \,\sqrt{M+1}  \,{\ln \big(\sigma^{-1}\big)}\,\Big)  \, . \qedhere
\end{eqnarray*}
\end{proof}
Now Theorem~1 and 2 follow from Proposition~\ref{lem:letzte} and Proposition~\ref{prop:approxs}.
To see this put $\sigma(\epsi) := \epsi^8$ and define
\[
V^\epsi(t) := V^\epsi_{\sigma(\epsi)}(t)\quad\mbox{and}\quad U^\epsi_{\rmm{eff}}(t) := U^{\sigma(\epsi)}_{\rmm{eff}}(t)\,.
\]
Then
\begin{align*}
\Big\| \Big(U_\mathrm{dress}(t)
-U^\epsi_\mathrm{eff}(t)&\Big)  Q_{\leq M} \Big\|\;  \leq \\&\leq\; \norm{\Big(U_\mathrm{dress}(t)
-U^{\sigma(\epsi)}_\mathrm{dress}(t)\Big) Q_{\leq M} }+\norm{\Big(U^{\sigma(\epsi)}_\mathrm{dress}(t)
-U^\epsi_{\rmm{eff}}(t)\Big) Q_{\leq M} }\\[1mm]
&= \; \Or(\epsi^7\,\sqrt{M+1}) + \Or(\epsi^2 \ln(\epsi^{-1})\,\sqrt{M+1})\,.
\end{align*}

\subsection{Non-adiabatic transitions}

It is now evident how to define the part of the wave function that corresponds to emitted photons. As an immediate consequence of  Theorem~\ref{thm:foae} we obtain the following corollary. 
\begin{cor}\label{thm:cta}
Let  $\phi(0) \in Q_m D(H_\mathrm{f}) $ and $\sigma(\epsi)=\epsi^8$ then 
\[
\phi_\mathrm{su}(t):= Q_m U_\rmm{dress}(t,0) \phi(0)  \quad\mbox{ and } \quad
\phi_\mathrm{na}(t):= Q_m^\perp U_\rmm{dress}(t,0) \phi(0) 
\]
have the expansions
\[
\phi_{\mathrm{su}}(t)= \ee^{-\frac{\im}{\epsi}(H_{\mathrm{f}} t +\int_0^t  E^\epsi(s)\,\D s)} \phi(0)+R_1
\]
and
\[
\phi_\mathrm{na}(t)=\im \epsi \ee^{-\frac{\im}{\epsi}(H_{\mathrm f} t +\int_0^t  E(s)\,\D s)} \int_0^t \D s\, \ee^{\im H_{\mathrm f} \frac{s}{\epsi}} h_\rmm{rad}(s)   \ee^{-\im H_{\mathrm f} \frac{s}{\epsi}} \phi(0)+ R_2
\] 
with $\norm{R_1}_{D(H_{\rm f})}=\mathcal{O}\big( \epsi^2  \ln  (\epsi ^{-1} )\big)$ and $\norm{R_2}_{D(H_{\rm f})}=\mathcal{O}\big( \epsi^2  \ln  (\epsi ^{-1} )\big)$.
\end{cor}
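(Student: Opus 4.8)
The plan is to read the corollary directly off Theorem~\ref{thm:foae} by splitting the effective propagator along the particle-number grading of Fock space. Fix $|t|\le T$ and $\phi(0)\in Q_m D(H_{\rm f})$. Because $Q_{\le m}\phi(0)=\phi(0)$, Theorem~\ref{thm:foae} with $M=m$ yields, in the norm of $\mathcal L(D(H_{\rm f}))$,
\[
U_{\rm dress}(t)\phi(0)=\ee^{-\frac{\im}{\epsi}(H_{\rm f}t+\int_0^tE^\epsi(s)\,\D s)}\Big(1+\im\epsi\int_0^t \D s\,\ee^{\im H_{\rm f}\frac s\epsi}h_{\rm rad}(s)\ee^{-\im H_{\rm f}\frac s\epsi}\Big)\phi(0)+R
\]
with $\|R\|_{D(H_{\rm f})}=\Or(\epsi^2\ln(\epsi^{-1}))$, so only $Q_m$ and $Q_m^\perp$ have to be applied to the explicit term and the two claimed expressions read off.

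The first step is the bookkeeping observation that $Q_m$, $\ee^{\pm\im H_{\rm f}s/\epsi}$ and $\ee^{-\im H_{\rm f}t/\epsi}$ pairwise commute — all are functions of the number operator and of $H_{\rm f}$, which are simultaneously diagonal in the particle-number grading — while the scalar dynamical phase commutes with everything. By contrast $h_{\rm rad}(s)=\Phi(\,\cdot\,)$ is a field operator, hence odd: $\Phi(f)$ maps $Q_n\fock$ into $Q_{n-1}\fock\oplus Q_{n+1}\fock$, so that $Q_m\,h_{\rm rad}(s)\,Q_m=0$ and $Q_m^\perp h_{\rm rad}(s)Q_m=h_{\rm rad}(s)Q_m$. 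Consequently $Q_m$ applied to the explicit term leaves exactly $\ee^{-\frac{\im}{\epsi}(H_{\rm f}t+\int_0^tE^\epsi)}\phi(0)$, whereas $Q_m^\perp$ annihilates the leading $1$ (since $\phi(0)\in Q_m\fock$) and leaves the integral term unchanged. Since $Q_m$ and $Q_m^\perp$ are contractions on $\fock$ and on $D(H_{\rm f})$, the remainder $R_1:=Q_m R$ satisfies $\|R_1\|_{D(H_{\rm f})}=\Or(\epsi^2\ln(\epsi^{-1}))$ and $\phi_{\rm su}$ has the stated form.

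For $\phi_{\rm na}$ the remaining step is to replace $E^\epsi(s)$ by $E(s)$ in the dynamical phase. From the explicit formula for $E^\epsi$ in Theorem~\ref{thm:foae} and the finiteness of $\int_{\R^3}|\hat\varphi(k)|^2|k|^{-2}\,\D k$ (integrable singularity at the origin in three dimensions, rapid decay at infinity since $\hat\varphi$ is Schwartz) one gets $E^\epsi(s)-E(s)=\Or(\epsi^2)$ uniformly for $|s|\le T$, hence $\big\|\ee^{-\frac{\im}{\epsi}\int_0^tE^\epsi}-\ee^{-\frac{\im}{\epsi}\int_0^tE}\big\|=\Or(\epsi)$. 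By Lemma~\ref{philem} together with $\|z\|_{L^2}^2+\||k|z\|_{L^2}^2=\Or(\ln(\epsi^{-1}))$ for the form factor $z$ of $h_{\rm rad}$ (the logarithm coming from $\int_{|k|>\epsi^8}|k|^{-3}\,\D k$, exactly as in~(\ref{v2dot})), the integral $\im\epsi\int_0^t\ee^{\im H_{\rm f}s/\epsi}h_{\rm rad}(s)\ee^{-\im H_{\rm f}s/\epsi}\phi(0)\,\D s$ is $\Or\big(\epsi\sqrt{(m+1)\ln(\epsi^{-1})}\big)$ in $D(H_{\rm f})$, so the phase swap costs only $\Or(\epsi^2\sqrt{\ln(\epsi^{-1})})\subset\Or(\epsi^2\ln(\epsi^{-1}))$, which is absorbed into $R_2$ together with $Q_m^\perp R$.

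Since the whole argument is a sequence of bounded-operator manipulations built on the estimate of Theorem~\ref{thm:foae}, I do not expect a genuine obstacle. The only points that need care are the commutation/oddness bookkeeping that makes $Q_m$ and $Q_m^\perp$ each select a single term, and verifying that the extra $\sqrt{\ln(\epsi^{-1})}$ produced by the phase replacement is harmless for the $\epsi^2\ln(\epsi^{-1})$ error bound.
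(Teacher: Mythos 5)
Your proof is correct and is exactly the "immediate consequence of Theorem~\ref{thm:foae}" that the paper invokes without spelling out: apply the theorem with $M=m$, split off $Q_m$ and $Q_m^\perp$, use that $Q_m$ commutes with $H_{\rm f}$ while $h_{\rm rad}=\Phi(\cdot)$ shifts the particle number by one (so $Q_m h_{\rm rad} Q_m=0$ and the $1$ dies under $Q_m^\perp$), and then replace $E^\epsi$ by $E$ in the $\phi_{\rm na}$ phase at a cost of $\Or(\epsi)\cdot\Or(\epsi\sqrt{\ln(\epsi^{-1})})$. The bookkeeping with Lemma~\ref{philem} and the $\sigma(\epsi)=\epsi^8$ cutoff yielding the $\sqrt{\ln(\epsi^{-1})}$ factor is exactly as the paper's estimates imply, so nothing is missing.
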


The probability for emitting a photon until time $t$ is thus given by $\|\phi_{\rm na}(t)\|^2$. But this has  no simple asymptotics for $\epsi\to 0$ because of the infrared problem. However, for the radiated energy there is a simple asymptotic expression. 

Recall the definition of the energy of free photons in \eqref{equ:rad}. Note first that
  at any  time $t$     when $\dot x(t) = \ddot x(t)=0$, it holds that $V^\epsi (t) = V_{\sigma(\epsi)}(t)$ and $E^\epsi(t) = E(t)$ and thus with Corollary~\ref{cor:spc}
\begin{align*} \nonumber
E_{\rm rad}(\psi(t)) & = \langle V_{\sigma(\epsi)}(t) \psi(t) , H_{\rm f} V_{\sigma(\epsi)}(t)\psi(t)\rangle  =  \langle \psi(t) , H^{\sigma(\epsi)} \psi(t) \rangle -  E_{\sigma(\epsi)}(t)\\
&=  \langle \psi(t) , H   \psi(t) \rangle -  E  (t)+ \Or( \sigma(\epsi)^\frac12 ) = E_{\rm rad, stat}(\psi(t))  + \Or(\epsi^4)\,.
\end{align*}
In the third equality we used equation \eqref{ineq11} to bound $\norm{H-H^\sigma}_{\mathcal{L}(D(H_\rmm{f}),\fock)}$ by $C \sigma^{1/2}$ and~(\ref{Esigma}). This shows that whenever the energy of free photons is unambiguously defined, our definition using the superadiabatic approximation agrees with it. 
 
  We will now prove Theorem~\ref{thm:rad} by plugging  the superadiabatic approximation into the definition \eqref{equ:rad}.
Note that this computation  of the energy is the reason that we insisted on all our estimates being valid also in  $\mathcal L(D(H_\mathrm{f}))$.

\begin{proof}[Proof of Theorem~\ref{thm:rad}]
For \eqref{equ:rad1}  observe that with the splitting given in Corollary~\ref{thm:cta} we have that
\begin{align*}
E_{\rm rad}(\psi(t)) & = \langle   \phi(t) , H_{\rm f}  \phi(t)\rangle =  \langle   \phi_{\rm su}(t) + \phi_{\rm na}(t), H_{\rm f}  ( \phi_{\rm su}(t) + \phi_{\rm na}(t))\rangle\\
& = \langle   \phi_{\rm na}(t) , H_{\rm f}  \phi_{\rm na}(t)\rangle  = E_{\rm rad}^0(t) + 2{\rm Re} \langle R_2, H_{\rm f} \phi_{\rm na}(t)\rangle + \langle R_2,H_{\rm f}R_2\rangle\,,
\end{align*}
where $E_{\rm rad}^0(t)$ just stands for the explicit expression claimed in \eqref{equ:rad1}. For the last term 
$\langle R_2,H_{\rm f}R_2\rangle = \mathcal{O}\big( \epsi^4  \ln  (\epsi ^{-1} )^2\big)$ follows from $\norm{R_2}_{D(H_{\rm f})} =\mathcal{O}\big( \epsi^2  \ln  (\epsi ^{-1} )\big)$ and for the mixed term
\[
 | \langle R_2, H_\rmm{f} \phi_{\rmm{na}}(t)  \rangle  |\leq \norm{R_2}_\fock  \norm{ H_\rmm{f} \phi_{\rmm{na}}(t)}_\fock =\mathcal{O}\Big( \epsi^4  (\ln  (\epsi ^{-1} )^2)\Big) 
\]
 from integrating $ H_\rmm{f} \phi_{\rmm{na}}(t)$ by parts as in the proof of Proposition~\ref{prop:approxs}.

%

%
%

 In order to get also  \eqref{equ:rad2}, 
we first transform the integral \eqref{equ:rad1} with  $\tau'=\frac{s-s'}{\epsi}$, $\tau=\frac{s+s'}{2}$ and define $a(\tau):= \min(\frac{2\tau}{\epsi},\frac{2}{\epsi}(t-\tau))$. The resulting integral is
\[
I(t) := \frac{\epsi^3}{2} \sum_{i,j=1}^N {e_j e_i} \int_{S^2} \D\kappa \int_0^\infty \D r\,   |\hat \varphi_\sigma(r)|^2
\int_0^t \int_{-a(\tau)}^{a(\tau)}  \D \tau\, \D \tau'\, \ee^{\im r( \tau'-\beta_{ji}(\tau,\epsi \tau',\kappa))} \alpha_{ji}(\tau,\epsi \tau',\kappa)\,,
\]
where we abbreviated $k=\kappa r$, $\alpha_{ji}(\tau,\epsi \tau',\kappa):=   \kappa \cdot \ddot x_j(\tau+\tfrac{\epsi \tau'}{2}) \,\kappa\cdot \ddot x_i(\tau-\tfrac{\epsi \tau'}{2}) $ and
$\beta_{ji}(\tau,\epsi \tau',\kappa):= \kappa\cdot (x_j(\tau+\tfrac{\epsi \tau'}{2})-x_i(\tau-\tfrac{\epsi \tau'}{2}))$.
Note that $\alpha_{ji}$ and $\beta_{ji}$ are uniformly bounded on the domain of integration.
Thus we can replace  $|\hat \varphi_\sigma(r)|^2$ by $ |\hat \varphi (r)|^2$ with an error of order $\epsi^2 \sigma |t|^2 $, which is negligible. 

Observing that after summation the integrand is  symmetric with respect to change of sign in $r$ and $\tau'$, i.e.\ for $f_{ji}(r,\tau'):= |\hat \varphi(r)|^2 \ee^{\im r( \tau'-\beta_{ji}(\tau,\epsi \tau',\kappa))} \alpha_{ji}(\tau,\epsi \tau',\kappa)$  
it holds that $\sum_{i,j} f_{ji}(r,\tau')=\sum_{i,j} f_{ji}(-r,-\tau')$, we can extend the $r$-integration to all of $\R$
  at a cost of a factor $\tfrac{1}{2}$.
 Modulo higher order terms this leads to 
\begin{align*}
I(t)  &=\frac{\epsi^3}{4} \sum_{i,j=1}^N {e_j e_i} \int_{S^2} \D \kappa\,
\int_0^t \int_{-a(\tau)}^{a(\tau)}   \D \tau\, \D \tau'\, \alpha_{ji}(\tau,\epsi \tau',\kappa) \int_{-\infty}^\infty \D r \,   |\hat \varphi(r)|^2 \ee^{\im r( \tau'-\beta_{ji}(\tau,\epsi \tau',\kappa))}\\
&= \frac{\sqrt{2\pi} \epsi^3}{4} \sum_{i,j=1}^N {e_j e_i} \int_{S^2} \D \kappa \, 
\int_0^t \int_{-a(\tau)}^{a(\tau)}   \D \tau\, \D \tau'\, \alpha_{ji}(\tau,\epsi \tau',\kappa) (\mathscr{F}   |\hat \varphi|^2) (\tau'-\beta_{ji}(\tau,\epsi \tau',\kappa))\,,
\end{align*}
where $\mathscr{F}$ denotes the Fourier transformation. 
Since with $\varphi$ also $ (\mathscr{F}   |\hat \varphi|^2)$ is a Schwartz function and since $\alpha_{ji}$ and $\beta_{ji}$ are uniformly bounded on the domain of integration, we can shrink $a(\tau)$ to $\tau_0$
 with an error that is asymptotically smaller then any inverse power of $\tau_0$ in the region where $\tau_0<a(\tau)$.
The same replacement in the region where $\tau_0>a(\tau)$ leads to an error of order $\Or(\epsi^4\tau_0)$. Taking e.g.\ $\tau_0= \epsi^{-1/2} $ leads to negligible errors. 
%
Now we can Taylor expand the integrand as
\begin{align*}
  \ee^{\im r( \tau'-\beta_{ji}(\tau,\epsi \tau' ))} \alpha_{ji}(\tau,\epsi \tau') 
&=  \ee^{\im r( \tau'-\beta_{ji}(\tau,0 ) +{ \mathcal O}(\epsi \tau_0 ))} (\alpha_{ji}(\tau,0)+ { \mathcal O}(\epsi \tau_0 )) \\
&= \ee^{\im r( \tau'-\beta_{ji}(\tau,0 )) }  \alpha_{ji}(\tau,0) + { \mathcal O}((1+r)\epsi \tau_0 ))
\end{align*}
and the remaining leading order contribution is
\begin{align*}
 I(t)& = \frac{\epsi^3}{4} \sum_{i,j=1}^N {e_j e_i} \int_{S^2}\hspace{-5pt} \D \kappa \, 
\int_0^t \hspace{-5pt} \D \tau \,\kappa\cdot\ddot x_j(\tau) \,\kappa\cdot\ddot x_i(\tau)  \int_{-\tau_0}^{\tau_0} \hspace{-8pt} \D \tau' \int_{-\infty}^\infty \hspace{-8pt}\D r\, |\hat \varphi(r)|^2  \ee^{\im r\tau'} \ee^{-\im r \kappa \cdot
(x_j(\tau)-x_i(\tau))} \\
&= \frac{\epsi^3}{4} \sum_{i,j=1}^N {e_j e_i} \int_{S^2}\hspace{-5pt} \D \kappa \, 
\int_0^t \hspace{-5pt} \D \tau \,\kappa\cdot\ddot x_j(\tau) \,\kappa\cdot\ddot x_i(\tau)  \int_{-\tilde\tau_0}^{\tilde \tau_0} \hspace{-8pt} \D \tilde \tau  \int_{-\infty}^\infty \hspace{-8pt}\D r\, |\hat \varphi(r)|^2  \ee^{\im r\tilde\tau}
 \end{align*}
 with  $ \tilde{\tau}= \tau'-\kappa \cdot (x_j(\tau)-x_i(\tau))$. With the same argument as above we can replace $\tilde\tau_0$ by $\infty$ while making an negligible error and end up with
 \begin{eqnarray*}
 I(t)&  =& \frac{\sqrt{2\pi}\epsi^3}{4} \sum_{i,j=1}^N {e_j e_i} \int_{S^2}  \D \kappa \, 
\int_0^t  \D \tau \,\kappa\cdot\ddot x_j(\tau) \,\kappa\cdot\ddot x_i(\tau)  \int_{-\infty}^{\infty}  \D \tilde \tau \, (\mathscr{F}|\hat\varphi|^2)(\tilde \tau)\\
&= & \frac{2\pi \epsi^3}{4} |\hat\varphi(0)|^2 \sum_{i,j=1}^N {e_j e_i} \int_0^t  \D \tau \int_{S^2}  \D \kappa  
\,\kappa\cdot\ddot x_j(\tau) \,\kappa\cdot\ddot x_i(\tau)  \\
&= & \frac{2\pi \epsi^3}{4}  \frac{1}{(2\pi)^3} \frac{4\pi}{3} \sum_{i,j=1}^N {e_j e_i}   \, 
\int_0^t   \D \tau \, \ddot x_j(\tau)  \cdot\ddot x_i(\tau)\,.\qedhere
 \end{eqnarray*}
%
%
\end{proof}

\end{document}